\renewcommand\footnotetextcopyrightpermission[1]{}
\newenvironment{cenumerate}
{\begin{compactenum}}
{\end{compactenum}}
\pgfplotsset{compat=1.14}
\definecolor{light_blue}{HTML}{A6EEF7}
\definecolor{dark_blue}{HTML}{091EB8}
\definecolor{dark_red}{HTML}{EC0808}
\definecolor{light_red}{HTML}{F7A6B6}
\definecolor{light_purple}{HTML}{D192EA}
\definecolor{dark_purple}{HTML}{9114C2}
\definecolor{light_yellow}{HTML}{F8ED6D}
\definecolor{dark_yellow}{HTML}{FFF300}
\definecolor{dark_green}{HTML}{188120}
\definecolor{light_green}{HTML}{86DF8D}
\newcommand{\ENgeneral}[2]{\ensuremath{\mathsf{Enum}\langle #1, #2 \rangle}}
\newcommand{\RPgeneral}[2]{\ensuremath{\mathsf{REnum}\langle #1, #2 \rangle}}
\newcommand{\RAgeneral}[2]{\ensuremath{\mathsf{RAccess}\langle #1, #2 \rangle}}
\def\lin{\mathrm{lin}} 
\def\polylog{\mathrm{polylog}} 
\def\dconst{\mathrm{const}}
\newcommand{\EN}{\ENgeneral{\lin}{\log}}
\newcommand{\RP}{\RPgeneral{\lin}{\log}}
\newcommand{\RA}{\RAgeneral{\lin}{\log}}
\newcommand{\ENpolylog}{\ENgeneral{\lin}{\polylog}}
\newcommand{\RPpolylog}{\RPgeneral{\lin}{\polylog}}
\newcommand{\RApolylog}{\RAgeneral{\lin}{\polylog}}
\newcommand{\RPlogsquare}{\RPgeneral{\lin}{\log^2}}
\newcommand{\RAlogsquare}{\RAgeneral{\lin}{\log^2}}
\newcommand{\sparseBMM}{\textit{\e{\textsc{sparse-BMM}}}}
\newcommand{\Hyperclique}{\textit{\e{\textsc{Hyperclique}}}}
\newcommand{\Triangle}{\textit{\e{\textsc{Triangle}}}}
\newtheorem{theorem}{Theorem}[section]
\newtheorem{proposition}[theorem]{Proposition}
\newtheorem{lemma}[theorem]{Lemma}
\newtheorem{corollary}[theorem]{Corollary}
\newtheorem{definition}[theorem]{Definition}
\newtheorem{example}[theorem]{Example}
\newtheorem{fact}[theorem]{Fact}
\newenvironment{repeatresult}[2]
{\vskip0.5em\nobreakspace{\textsc{#1} #2.}\em}
{\vskip1em}
\newenvironment{reptheorem}[1]{\begin{repeatresult}{Theorem}{#1}}{\end{repeatresult}}
\def \wgt{\mathrm{w}}
\def \rng{\mathrm{startIndex}}
\newcommand \bucket[2]{\mathrm{bucket}[{#1},{#2}]} %
\newcommand \atts[1]{\mathrm{Atts_{#1}}}
\newcommand \patts[1]{\mathrm{pAtts_{#1}}}
\def \split{\textsc{SplitIndex}}
\def \combine{\algname{CombineIndex}}
\def \ans{\mathit{answer}}
\def \element{\mathit{element}}
\def \providers{\mathit{providers}}
\def \owner{\mathit{owner}}
\def \chosen{\mathit{chosen}}
\def \root{\mathit{root}}
\def \offset{\mathit{offset}}
\def\cqa{\,\mathbin{\mbox{:-}}\,}
\def\vars{\mathit{Vars}}
\def\H{\mathcal{H}}
\def \true{\mathit{True}}
\newcommand \zhaoalg[1]{\textsc{Sample}(#1)\xspace}
\newcommand \ouralgCQ{\textsc{REnum}(CQ)\xspace}
\newcommand \ouralgUCQ{\textsc{REnum(UCQ)}\xspace}
\newcommand \mcUCQAlg{\textsc{REnum(mcUCQ)}\xspace}
\newcommand{\set}[1]{\ensuremath{\{ #1 \}}}
\newcommand{\deff}{\ensuremath{:=}}
\newcommand{\EOE}{\texttt{EOE}}
\newcommand{\ErrorMessage}{\texttt{Error}}
\newcommand{\Access}{\algname{Access}}
\newcommand{\Rank}{\algname{InvAcc}}
\newcommand{\Largest}{\algname{Largest}}
\newcommand{\Algo}[1]{\ensuremath{\mathcal{#1}}}
\newcommand{\NNpos}{\ensuremath{\mathbb{N}_{\geq 1}}}
\newcommand{\RRpos}{\ensuremath{\mathbb{R}_{\geq 0}}}
\begin{document}
\pagestyle{plain}
\title{Answering (Unions of) Conjunctive Queries using Random Access and Random-Order Enumeration}

\author{Nofar Carmeli}
\affiliation{
  \institution{Technion}
  }

\author{Shai Zeevi}
\affiliation{
\institution{Technion}
  }

\author{Christoph Berkholz}
\affiliation{
\institution{Humboldt-Universit\"{a}t zu Berlin}
  }

\author{Benny Kimelfeld}
\affiliation{
\institution{Technion}
  }

\author{Nicole Schweikardt}
\affiliation{
\institution{Humboldt-Universit\"{a}t zu Berlin}
  }

\def\e#1{\emph{#1}}
\def\scs{\mathbf{S}}
\def\ar{\mathrm{ar}}
\def\const{\mathsf{Const}}
\newcommand{\eat}[1]{}
\def\tup#1{\vec{#1}}

\begin{abstract}
  As data analytics becomes more crucial to digital systems, so grows
  the importance of characterizing the database queries that admit a
  more efficient evaluation. We consider the tractability yardstick of
  answer enumeration with a polylogarithmic delay after a linear-time
  preprocessing phase. Such an evaluation is obtained by constructing,
  in the preprocessing phase, a data structure that supports
  polylogarithmic-delay enumeration. In this paper, we seek a
  structure that supports the more demanding task of a ``random
  permutation'': polylogarithmic-delay enumeration in truly random
  order. Enumeration of this kind is required if downstream
  applications assume that the intermediate results are representative
  of the whole result set in a statistically valuable manner. An even
  more demanding task is that of a ``random access'':
  polylogarithmic-time retrieval of an answer whose position is given.

  We establish that the free-connex acyclic CQs are tractable in all
  three senses: enumeration, random-order enumeration, and random
  access; and in the absence of self-joins, it follows from past
  results that every other CQ is intractable by each of the three
  (under some fine-grained complexity assumptions). However, the three
  yardsticks are separated in the case of a union of CQs (UCQ): while
  a union of free-connex acyclic CQs has a tractable enumeration, it
  may (provably) admit no random access. For such UCQs we devise a
  random-order enumeration whose delay is logarithmic in
  expectation. We also identify a subclass of UCQs for which we can
  provide random access with polylogarithmic access time. Finally, we
  present an implementation and an empirical study that show a
  considerable practical superiority of our random-order enumeration
  approach over state-of-the-art alternatives.
\end{abstract}

\maketitle

\def\algname#1{\textit{\e{\textsc{#1}}}}

\vspace{-1.2em}

\section{Introduction}\label{sec:intro}

In the effort of reducing the computational cost of database queries
to the very least possible, recent years have seen a substantial
progress in understanding the fine-grained data complexity of
enumerating the query answers. The seminal work of Bagan, Durand, and
Grandjean~\cite{bdg:dichotomy} has established that the
\e{free-connex acyclic} conjunctive queries (or just \e{free-connex
  CQs} for short) can be evaluated using an enumeration algorithm with
a constant delay between consecutive answers, after a linear-time
preprocessing phase. Moreover, their work, combined with that of
Brault-Baron~\cite{bb:thesis}, established that, in the absence of
self-joins (i.e., when every relation occurs at most once), the
free-connex CQs are \e{precisely} the CQs that have such an
evaluation. The lower-bound part of this dichotomy requires
some lower-bound assumptions in fine-grained complexity
(namely, that
neither \e{sparse Boolean matrix-multiplication},
nor \e{triangle detection}, nor \e{hyperclique detection} can be done
in linear time).
Later
generalizations consider unions of CQs
(UCQs)~\cite{nofar:ucq,DBLP:conf/icdt/BerkholzKS18} and the presence
of constraints~\cite{nofar:fds,DBLP:conf/icdt/BerkholzKS18}.

As a query-evaluation paradigm, the enumeration approach has the
important guarantee that the number of intermediate results is
proportional to the elapsed processing time. This guarantee is useful
when the query is a part of a larger analytics pipeline where the
answers are fed into downstream processing such as machine learning,
summarisation, and search. The intermediate results can be used to
save time by invoking the next-step processing (e.g., as in streaming
learning algorithms~\cite{DBLP:conf/kdd/StreetK01}), computing
approximate summaries that improve in time (e.g., as in online
aggregation~\cite{DBLP:conf/sigmod/HaasH99,
  DBLP:journals/tods/LiWYZ19}), and presenting the first pages of
search results (e.g., as in keyword search over structured
data~\cite{DBLP:conf/vldb/HristidisP02,DBLP:conf/sigmod/GolenbergKS08}).
Yet, at least the latter two applications make the implicit assumption
that the collection of intermediate results is a representative of the
entire space of answers. In contrast, the aforementioned
constant-delay algorithms enumerate in an order that is a merely an
artifact of the tree selected to utilize free-connexity, and hence,
intermediate answers may feature an extreme bias. Importantly, there
has been a considerable recent progress in understanding the ability
to enumerate the answers not just efficiently, but also with a
guarantee on the
order~\cite{DBLP:journals/corr/abs-1902-02698,DBLP:journals/corr/abs-1911-05582}.

Yet, to be a statistically meaningful representation of the space of
answers, the enumeration order needs to be provably random. In this
paper, we investigate the task of enumerating answers in a uniformly
random order. To be more precise, the goal is to enumerate the answers
without repetitions, and the output induces a uniform distribution
over the space of permutations over the answer set.  We refer to this
task as \e{random permutation}.  Similarly to the recent work on
ranked
enumeration~\cite{DBLP:journals/corr/abs-1902-02698,DBLP:journals/corr/abs-1911-05582},
our focus here is on achieving a \e{logarithmic} or \e{polylogarithmic
  delay} after a linear preprocessing time. Hence, more technically,
the goal we seek is to construct in linear-time a data structure that
allows to sample query answers \e{without replacement}, with a
(poly)logarithmic-time per sample. Note that sampling \e{with}
replacement has been studied in the
past~\cite{DBLP:conf/sigmod/AcharyaGPR99a,DBLP:conf/sigmod/ChaudhuriMN99}
and recently gained a renewed attention~\cite{zhao2018random}.

One way of achieving a random permutation is via \e{random access}---a
structure that is tied to some enumeration order and, given a position
$i$, returns the $i$th answer in the order. To satisfy our target of
an efficient permutation, we need a random-access structure that can
be constructed in linear time (preprocessing) and supports answer
retrieval (given $i$) in polylogarithmic time. We show that, having
this structure at hand, we can use the Fisher-Yates
shuffle~\cite{fisheryates:shuffle} to design a random permutation with
a negligible additive overhead over the preprocessing and enumeration
phases.

So far, we have mentioned three tasks of an increasing demand: \e{(a)}
enumeration, \e{(b)} random permutation, and \e{(c)} random access.
We show that all three tasks can be performed efficiently (i.e.,
linear preprocessing time and evaluation with polylogarithmic time per
answer) over the class of free-connex CQs. We conclude that within the
class of CQs without self-joins, it is the same precise set of queries
where these tasks are tractable---the free-connex CQs.  (We remind the reader that all mentioned lower bounds are under
assumptions in fine-grained complexity.) The existence
of a random access for free-connex CQs has been established by
Brault-Baron~\cite{bb:thesis}. Here, we devise our own random-access
algorithm for free-connex CQs that is simpler and better lends itself
to a practical implementation. Moreover, we design our algorithm in
such a way that it is accompanied by an \e{inverted access} that is
needed for our later results on UCQs.

The tractability of enumeration generalizes from free-connex CQs to
\e{unions} of free-connex
CQs~\cite{nofar:ucq,DBLP:conf/icdt/BerkholzKS18}. Interestingly, this
is no longer the case for random access! The reason is as follows. An
efficient random access allows to count the answers; while counting
can be done in linear time for free-connex CQs, we show the existence
of a union of free-connex CQs where linear-time counting can be used
for linear-time triangle detection in a graph. At this point, we are
investigating two questions:
\begin{cenumerate}
  \item Can we identify a nontrivial class of UCQs with an efficient
    random access?
  \item Can we get an efficient random permutation for unions of
    free-connex CQs, without requiring a random access?
    \end{cenumerate}
    For the first question, we 
    identify the class of
    \e{mutually-compatible UCQs} (mc-UCQs) and show that every such
    UCQ has an efficient random access. As for the second question, we
    show that the answer is positive under the following weakening of
    the delay guarantee: there is a random permutation where \e{each
      delay} is a geometric random variable with a logarithmic
    mean. In particular, each delay is logarithmic in expectation.

    Finally, we present an implementation of our random-access and
    random-permutation algorithms, and present an empirical
    evaluation. Over the TPC-H benchmark, we compare our random
    permutation to the approach of using a state-of-the-art random
    sampler~\cite{zhao2018random}, which is designed to produce a
    uniform sample with replacement, and then remove duplicates as
    they are detected. The experiments show that our algorithms are
    not only featuring complexity and statistical guarantees, but also
    a significant practical improvement. Moreover, the experiments
    show that Fisher-Yates over our random access for mc-UCQs can
    further accelerate the union enumeration (in addition to the
    deterministic guarantee on the delay), compared to our generic
    algorithm for UCQ random permutation; yet, this acceleration is
    not consistently evident in the experiments.
   
The paper is structured as
follows. The basic notation is fixed in
Section~\ref{sec:preliminaries}. In Section~\ref{sec:classes} we
introduce three classes of enumeration problems and discuss the
relationship between them.
Sections~\ref{sec:cqs} and \ref{sec:ucqs} are devoted to our results
concerning CQs and UCQs, respectively. Section~\ref{sec:experiments}
presents our experimental study.
Due to space restrictions, some details had to be deferred to an appendix.

\section{Preliminaries}\label{sec:preliminaries}

In this section, we provide basic definitions and notation that we
will use throughout this paper.
For integers $\ell,m$ we write $[\ell,m]$ for the set of all integers
$i$ with $\ell\leq i\leq m$.

\subsubsection*{Databases and Queries}
A (relational) \e{schema} $\scs$ is a collection of \e{relation
  symbols} $R$, each with an associated arity $\ar(R)$. A \e{relation}
$r$ is a set of tuples of \e{constants}, where each tuple has the same
arity (length). A \e{database} $D$ (over
the schema $\scs$) associates with each relation symbol $R$ a finite
relation $r$, which we denote by $R^D$, such that
$\ar(R)=\ar(R^D)$. Notationally, we identify a database $D$ with its
finite set of \e{facts} $R(c_1,\dots,c_k)$, stating that the relation
$R^D$ over the $k$-ary relation symbol $R$ contains the tuple
$(c_1,\dots,c_k)$.

A \e{conjunctive query} (CQ) over the schema $\scs$ is a relational
query $Q$ defined by a first-order formula of the form
$\exists {\tup y}\,\varphi(\tup{x}, \tup{y})$, where $\varphi$ is a
conjunction of atomic formulas of the form $R(\tup t)$ with variables
among those in $\tup{x}$ and $\tup {y}$.  We write a CQ $Q$ shortly as
a logic rule, that is, an expression of the form
$
Q(\tup x) \cqa R_1(\tup{t}_1),\dots,R_n(\tup{t}_n)
$
where each $R_i$ is a relation symbol of $\scs$, each $\tup{t}_i$ is a
tuple of variables and constants with the same arity as $R_i$, and
$\tup{x}$ is a tuple of $k$ variables from
$\tup{t}_1,\dots,\tup{t}_n$. We call $Q(\tup x)$ the \e{head} of $Q$,
and $R_1(\tup{t}_1),\dots,R_n(\tup{t}_n)$ the \e{body} of $Q$. Each
$R_i(\tup t_i)$ is an \e{atom} of $Q$. We use $\vars(Q)$ and
$\vars(\alpha)$ to denote the sets of variables that occur in the CQ
$Q$ and the atom $\alpha$, respectively. The variables occurring in
the head are called the \e{head variables}, and we make the standard
safety assumption that every head variable occurs at least once in the
body. The variables occurring in the body but not in the head are
existentially quantified, and are called the \e{existential
  variables}. 
A CQ with no existential variables is called a \e{full join query}.

We usually omit the explicit specification of the schema $\scs$, and
simply assume that it is the one that consists of the relation symbols
that occur in the query at hand.

A \e{homomorphism} from a CQ $Q$ to a database
$D$ is a mapping of the variables in $Q$ to the constants of $D$, such
that every atom of $Q$ is mapped to a fact of $D$. 
Each such homomorphism $h$ yields an \emph{answer} to $Q$, which is
obtained from $\tup x$ by replacing every variable in $\tup x$ with
the constant it is mapped to by $h$.
We denote by $Q(D)$ the set of all answers to $Q$ on $D$.

We say that a database $D$ is \e{globally consistent} with respect to $Q$ if each fact in $D$ \emph{agrees} with some answer in $Q(D)$; that is, there exists a homomorphism from $Q$ to $D$ and an atom of $Q$ such that the homomorphism maps the atom to the fact.

A \e{self-join} in a CQ $Q$ is a pair of distinct atoms over the same
relation symbol. We say that $Q$ is \e{self-join free} if it has no
self-joins, that is, every relation symbol occurs at most once in the
body.

To each CQ
$Q(\tup x) \cqa \alpha_1(\tup x, \tup y),\dots,\alpha_k(\tup x, \bar
y)$ we associate a hypergraph $\H_Q$ where the nodes are the variables
in $\vars(Q)$, and the edges are $E = \{e_1, ..., e_k\}$ such that
$e_i = \vars(\alpha_i)$. Hence, the nodes of $\H_Q$ are
$\tup x \cup \tup y$, and the hyperedge $e_i$ includes all the
variables that appear in $\alpha_i$. 
A CQ $Q$ is \emph{acyclic} if its hypergraph is acyclic.
That is, there exists a tree $T$ (called a \emph{join-tree} of $Q$) 
such that
$nodes(T) = edges(\mathcal{H}_Q)$, and for every
$v \in nodes(\mathcal{H}_Q)$, the nodes of $T$ that contain $v$ form a
(connected) subtree of $T$. The CQ $Q$ is \emph{free-connex} if $Q$ is
acyclic and $\mathcal{H}_Q$ remains acyclic when adding a hyperedge
that consists of the free variables of $Q$.

A \emph{union of CQs} (UCQ) is a query of the form
$Q_1(\tup x)\cup\dots\cup Q_m(\tup x)$, where every $Q_i$ is a
CQ with the sequence $\tup x$ of head variables. The set of answers to
$Q_1(\tup x)\cup\dots\cup Q_m(\tup x)$ over a database $D$ is,
naturally, the union $Q_1(D)\cup\dots\cup Q_m(D)$.

\subsubsection*{Computation Model}
An \emph{enumeration problem} $P$ is a collection of pairs $(I,Y)$
where $I$ is an \emph{input} and $Y$ is a finite set of \emph{answers}
for $I$, denoted by $P(I)$. An \emph{enumeration algorithm}
$\mathcal{A}$ for an enumeration problem $P$ is an algorithm that
consists of two phases: \e{preprocessing} and \e{enumeration}. During
preprocessing, $\mathcal{A}$ is given an input \emph{I}, and it
builds certain data structures. During the enumeration phase,
$\mathcal{A}$ can access the
data structures built during
preprocessing, and
it emits the answers $P(I)$, one by one, without repetitions. We
denote the running time of the preprocessing phase by $t_p$. The time
between printing any two answers during the enumeration phase is
called \emph{delay}, and is denoted by $t_d$. 

In this paper, an enumeration problem will refer to a query (namely, a CQ or
a UCQ) $Q$, the input $I$ is a database $D$, and the answer set $Y$ is
$Q(D)$. Hence, we adopt \emph{data
  complexity}, where the query is treated as fixed. We use a variant
of the \emph{Random Access Machine} (RAM) model with uniform cost
measure named DRAM~\cite{DBLP:journals/amai/Grandjean96}.
This model enables the construction of lookup tables of polynomial size
that can be queried in constant time.

\subsubsection*{Complexity Hypotheses}
Our conditional optimality results rely on the following hypotheses on
the hardness of algorithmic problems.

The hypothesis 
$\sparseBMM$ states that there is no algorithm that
multiplies two Boolean matrices (represented as lists of
their non-zero entries) over the Boolean semiring
in time $m^{1+o(1)}$, where $m$ is the number of non-zero entries in
$A$, $B$, and $AB$. The best known running time for this problem is
$O(m^{4/3})$ \cite{DBLP:conf/icdt/AmossenP09}, which remains true even if
the matrix multiplication exponent\footnote{The matrix multiplication
  exponent $\omega$ is the smallest number such that for any $\varepsilon
  > 0$ there is an algorithm that multiplies two rational $n\times n$
  matrices with at most $O(n^{\omega+\varepsilon})$
  (arithmetic) operations.
The currently best bound on $\omega$
is $\omega <
2.373$ and it is conjectured that $\omega=2$ \cite{DBLP:conf/stoc/Williams12, DBLP:conf/issac/Gall14a}.} $\omega$ is equal to $2$.

By $\Triangle$ we denote the hypothesis that there is no
$O(m)$ time algorithm
that detects whether a graph with $m$ edges contains a triangle. The
best known algorithm for this problem runs in time  $m^{2\omega/(\omega + 1) + o(1)}$ \cite{Alon.1997}, which is
$\Omega(m^{4/3})$ even if $\omega=2$. 
The $\Triangle$ hypothesis is also implied by a slightly stronger
conjecture in \cite{DBLP:conf/focs/AbboudW14}. 

A $(k{+}1,k)$-hyperclique is a set of $k{+}1$ vertices in a
hypergraph such that every $k$-element subset is a hyperedge.
By $\Hyperclique$  we denote the hypothesis that for every $k\geq 3$ there is no time
$O(m)$ algorithm for deciding the existence of a
$(k{+}1,k)$-hyperclique in an $k$-uniform hypergraph with $m$ hyperedges. 
This hypothesis is implied by the $(l,k)$-\textsc{Hyperclique} conjecture proposed in \cite{hypotheses}.

While the three hypotheses are not as established as classical complexity 
assumptions (like P\,$\neq$\,NP), their refutation would lead to 
unexpected breakthroughs in algorithms, which would be achieved when
improving the relevant methods in our paper.

\section{Enumeration Classes}\label{sec:classes}
In this section, we define three classes of enumeration problems and
discuss the relationship between them.

\subsection{Definitions}

We write $d$ to denote a function from the positive integers $\NNpos$ to
the non-negative reals $\RRpos$, and $d=\dconst$, $d=\lin$, $d=\log^c$ (for $c\geq 1$) mean
$d(n)=1$, $d(n)=n$, $d(n)=\log^c(n)$, respectively.

\begin{definition} 
Let $d$ be a function from $\NNpos$ to $\RRpos$.
We define
$\ENgeneral{\lin}{d}$ to be the class of enumeration problems for
  which there exists an enumeration algorithm $\mathcal{A}$ such that
  for every input $I$ it holds that $t_p \in O(|I|)$ and
  $t_d \in O(d(|I|))$. 
Furthermore, $\ENpolylog$ is the union of $\ENgeneral{\lin}{\log^c}$ for
all $c\geq 1$.
\end{definition}

A \emph{random-permutation algorithm} for an enumeration problem $P$
is an enumeration algorithm where every emission is done uniformly at
random. That is, at every emission, every until then not yet emitted tuple has
equal probability of being emitted. As a result, if $|P(I)|=n$, every
ordering of the answers $P(I)$ has probability $\frac{1}{n!}$ of
representing the order in which $\mathcal{A}$ prints the answers. 

\begin{definition} 
Let $d$ be a function from $\NNpos$ to $\RRpos$.
We define
$\RPgeneral{\lin}{d}$ to be the class of enumeration problems for
  which there exists a random-permutation algorithm $\mathcal{A}$ such that
  for every input $I$ it holds that $t_p \in O(|I|)$ and
  $t_d \in O(d(|I|))$. 
Furthermore, $\RPpolylog$ is the union of $\RPgeneral{\lin}{\log^c}$ for
all $c\geq 1$.
\end{definition}

\begin{fact}
By definition,
$\RPgeneral{\lin}{d}\subseteq\ENgeneral{\lin}{d}$ for all $d$.
\end{fact}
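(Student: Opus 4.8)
The plan is to simply unfold the two definitions and observe that every random-permutation algorithm already \emph{is} an enumeration algorithm with exactly the required resource bounds. Concretely, fix an enumeration problem $P \in \RPgeneral{\lin}{d}$ and let $\mathcal{A}$ be a random-permutation algorithm witnessing this membership: for every input $I$, $\mathcal{A}$ has a preprocessing phase running in time $t_p \in O(|I|)$ followed by an enumeration phase that emits the answers $P(I)$ one by one without repetitions with delay $t_d \in O(d(|I|))$, subject additionally to the constraint that each emission is performed uniformly at random among the answers not yet emitted.

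First I would point out that the definition of a random-permutation algorithm is, verbatim, the definition of an enumeration algorithm together with this extra distributional constraint; in particular it inherits the two-phase structure and the ``without repetitions'' guarantee and does not weaken the requirement that all of $P(I)$ be produced. Hence $\mathcal{A}$, regarded purely as an enumeration algorithm (i.e.\ forgetting the randomization requirement), correctly enumerates $P(I)$ for every $I$ with the same bounds $t_p \in O(|I|)$ and $t_d \in O(d(|I|))$, so $\mathcal{A}$ witnesses $P \in \ENgeneral{\lin}{d}$. Since $P$ was arbitrary, $\RPgeneral{\lin}{d} \subseteq \ENgeneral{\lin}{d}$. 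There is essentially no obstacle: the only point to verify is that the random-permutation definition does not relax any of the enumeration-algorithm guarantees, which a direct comparison of the two definitions confirms; this is exactly the ``by definition'' asserted in the statement.
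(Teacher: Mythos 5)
Your proposal is correct and matches the paper exactly: the paper states this fact ``by definition'' with no further argument, and your unfolding of the definitions—observing that a random-permutation algorithm is precisely an enumeration algorithm with an additional distributional constraint, so it witnesses membership in $\ENgeneral{\lin}{d}$ with the same bounds $t_p$ and $t_d$—is the intended justification.
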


A \emph{random-access algorithm} for an enumeration problem $P$ is an
algorithm $\mathcal{A}$ consisting of a preprocessing phase and an access routine. The preprocessing phase builds a data structure based
on the input $I$. Afterwards, the access routine may be called any
number of times, and it may use the data structure built during
preprocessing. There exists an order of $P(I)$, denoted $t_1,...,t_n$
and called \emph{the enumeration order of $\mathcal{A}$}
such that, when the access routine is called with parameter $i$,
it returns $t_i$ if $1 \leq i \leq n$, and an error message otherwise.
Note that there are no constraints on the order as long as the routine
consistently uses the same order in all calls. Using the access
routine with parameter $i$ is called \emph{accessing} $t_i$; the
time it takes to access a tuple is called \emph{access time} and
denoted $t_a$. 

\begin{definition} 
Let $d$ be a function from $\NNpos$ to $\RRpos$.
We define
$\RAgeneral{\lin}{d}$ to be the class of enumeration problems for
  which there exists a random-access algorithm $\mathcal{A}$ such that for every input
  $I$ the preprocessing phase takes time $t_p\in O(|I|)$ and the
  access time is
  $t_a \in O(d(|I|))$. 
Furthermore, $\RApolylog$ is the union of $\RAgeneral{\lin}{\log^c}$ for
all $c\geq 1$.
\end{definition}

Successively calling the access routine for $i=1,2,3,\ldots$ leads to:
\begin{fact}
\label{prop:RAisEN}
By definition, $\RAgeneral{\lin}{d}\subseteq\ENgeneral{\lin}{d}$ for all $d$.
\end{fact}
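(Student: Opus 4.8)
The plan is exactly the one hinted at immediately before the statement: given a random-access algorithm $\mathcal{A}$ for an enumeration problem $P$ with $t_p \in O(|I|)$ and $t_a \in O(d(|I|))$, construct an enumeration algorithm $\mathcal{A}'$ for $P$ that reuses $\mathcal{A}$ as a black box. In its preprocessing phase, $\mathcal{A}'$ simply runs the preprocessing phase of $\mathcal{A}$; this still costs $O(|I|)$. In its enumeration phase, $\mathcal{A}'$ keeps a counter $i$, initialized to $1$, and repeatedly invokes the access routine of $\mathcal{A}$ on $i$: if the routine returns a tuple, $\mathcal{A}'$ emits that tuple and increments $i$; if it returns the error message, $\mathcal{A}'$ halts.

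For correctness, let $t_1,\dots,t_n$ be the enumeration order of $\mathcal{A}$, which by definition is an ordering of $P(I)$ without repetitions. By the defining property of the access routine, the $i$th invocation returns $t_i$ for $1\le i\le n$ and an error for $i=n{+}1$, so $\mathcal{A}'$ emits precisely $t_1,\dots,t_n$, each exactly once, and then stops; hence it enumerates $P(I)$ correctly. For the running time, the preprocessing bound $t_p\in O(|I|)$ is inherited directly, and the delay between any two consecutive emissions — as well as the time until the first emission and the time from the last emission to termination — is a single access call plus $O(1)$ bookkeeping, hence in $O(d(|I|))$. This establishes $P \in \ENgeneral{\lin}{d}$; since $P$ was an arbitrary member of $\RAgeneral{\lin}{d}$ and $d$ was arbitrary, the inclusion $\RAgeneral{\lin}{d}\subseteq\ENgeneral{\lin}{d}$ holds for all $d$.

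There is essentially no obstacle here: the statement is, as the authors put it, true by definition, and the construction is a one-line reduction. The only point deserving a remark is \emph{termination} — the enumeration algorithm must be able to detect when it has exhausted $P(I)$, and it can do so precisely because the random-access model is specified so that the access routine returns an explicit error message on out-of-range indices, rather than being undefined or non-terminating there.
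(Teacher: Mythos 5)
Your proposal is correct and matches the paper's argument exactly: the paper justifies this fact with precisely the observation that successively calling the access routine for $i=1,2,3,\ldots$ (after the shared preprocessing) yields an enumeration whose delay is one access call, with the error message signalling termination. Nothing further is needed.
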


In the next subsection, we discuss the connection between the classes $\RAgeneral{\lin}{d}$ and
$\RPgeneral{\lin}{d}$.

\subsection{Random-Access and Random-Permutation}
We now show that, under certain conditions, it suffices to devise a
random-access algorithm in order to obtain a random-permutation algorithm. 
To achieve this, we need to produce a random permutation of the
indices of the answers.

Note that the trivial approach of producing the permutation upfront
will not work: the length of the permutation is the number of answers,
which can be much larger than the size of the input; however, we want
to produce the first answer after linear time in the size of the
input. 

Instead, we adapt a known random-permutation algorithm, the
\emph{Fisher-Yates Shuffle}~\cite{fisheryates:shuffle}, so that it works with
constant delay after constant preprocessing time.
The original version of the Fisher-Yates Shuffle (also known as \emph{Knuth Shuffle})
\cite{fisheryates:shuffle} generates a random permutation in time linear
in the number of items in the permutation, which in our setting is polynomial in the
size of the input. It initializes an array 
containing the numbers $0,\ldots,n{-}1$. Then, at each step $i$, it chooses a
    random index, $j$, greater than or equal to $i$ and swaps the chosen
    cell with the $i$th cell. At the end of this procedure, the array
    contains a random permutation. Proposition~\ref{prop:random_perm}
    describes an adaptation of this procedure that runs with constant
    delay and constant preprocessing time in the RAM model.

\begin{proposition} \label{prop:random_perm}
A random permutation of $0,\ldots,n{-}1$ can be generated with constant delay and constant preprocessing time.
\end{proposition}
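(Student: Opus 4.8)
The obstacle here is not the shuffle logic but the initialization cost: the textbook Fisher--Yates shuffle~\cite{fisheryates:shuffle} begins by writing $0,1,\dots,n{-}1$ into a length-$n$ array, which alone costs $\Theta(n)$, whereas we are allowed only $O(1)$ preprocessing. The plan is to replace the explicit array by a \emph{lazily initialized array} $A$ in which cell $k$ has default value $k$: reading a never-written cell returns $k$, and every read and write costs $O(1)$ in the worst case after $O(1)$ preprocessing. I would realize this with the standard RAM trick: allocate (without initializing) three length-$n$ arrays $\mathit{val},\mathit{stamp},\mathit{live}$ and an integer $c:=0$; declare a cell $k$ ``written'' iff $0\le \mathit{stamp}[k]<c$ and $\mathit{live}[\mathit{stamp}[k]]=k$; a write to an as-yet-unwritten $k$ first records $\mathit{live}[c]:=k$, $\mathit{stamp}[k]:=c$, $c:=c{+}1$ and then stores $\mathit{val}[k]:=v$. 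This is correct despite the garbage in freshly allocated memory, because for $t<c$ the entry $\mathit{live}[t]$ always holds the index of the $t$-th cell ever written, so a garbage value $\mathit{stamp}[k]\in[0,c)$ at an unwritten $k$ cannot pass the test $\mathit{live}[\mathit{stamp}[k]]=k$.

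On top of $A$ I would run the streaming version of Fisher--Yates: keep a position counter $i$, initially $0$, and to emit the next element (while $i<n$) draw $j$ uniformly from $\{i,\dots,n{-}1\}$, read $v:=A[j]$ and $u:=A[i]$, write $A[j]:=u$, emit $v$, and set $i:=i{+}1$ (writing $A[i]:=v$ is unnecessary since position $i$ is never read again). Each emission does a constant number of lazy-array operations plus one random draw, hence $O(1)$ delay; preprocessing only allocates the three arrays and sets $c:=0$, hence $O(1)$.

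For correctness I would give the usual Fisher--Yates induction: after $i$ emissions, for every sequence $(a_0,\dots,a_{i-1})$ of distinct elements the probability that the first $i$ emitted values are $a_0,\dots,a_{i-1}$ equals $(n{-}i)!/n!$, using that in step $i$ the index $j$ is uniform and independent of the history, so whichever of the $n{-}i$ still-available elements currently sits in the drawn position becomes the $i$-th output with probability $1/(n{-}i)$. At $i=n$ this gives probability $1/n!$ for every permutation of $0,\dots,n{-}1$, which is exactly the random-permutation requirement.

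I expect the main obstacle to be the lazy-array part: one has to argue that the ``written?'' predicate is always evaluated correctly over uninitialized memory, and that each operation is $O(1)$ in the \emph{worst case} --- using a plain hash table would only give this in expectation, which would degrade the delay guarantee. The shuffle analysis itself is routine.
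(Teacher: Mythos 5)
Your proposal is correct and follows essentially the same route as the paper: simulate Fisher--Yates on a lazily initialized array whose cell $k$ defaults to $k$, so that each emission costs $O(1)$ after $O(1)$ preprocessing, with the standard induction showing uniformity. The only difference is cosmetic --- you spell out the stamp/live constant-time lazy-initialization trick that the paper delegates to a footnote (and to its RAM-model lookup-table assumption), and you skip the redundant write to position $i$, neither of which changes the argument.
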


\begin{algorithm}[t]
  \caption{Random Permutation}\label{alg:shuffle}
  \begin{algorithmic}[1]
  	\Procedure{Shuffle}{$n$}
  	\State assume $a[0],...,a[n{-}1]$ are uninitialized
  	\For {$i$ in $0,\ldots,n{-}1$}
        \State choose $j$ uniformly from $i,\ldots,n{-}1$
        \If {$a[i]$ is uninitialized}
    		\State $a[i] = i$
    	\EndIf
        \If {$a[j]$ is uninitialized}
    		\State $a[j] = j$
    	\EndIf
    	\State swap $a[i]$ and $a[j]$ ; output $a[i]$ 
    \EndFor
    \EndProcedure	
  \end{algorithmic}
\end{algorithm}

\begin{proof}
Algorithm~\ref{alg:shuffle} generates a random permutation with the required time constraints by simulating the Fisher-Yates Shuffle. Conceptually, it uses an array $a$ where at first all values are marked as ``uninitialized'', and an uninitialized cell $a[k]$ is considered to contain the value $k$.\footnote{To keep track of the array positions that are still uninitialized, one
can use standard methods for \emph{lazy array initialization} \cite{MoretShapiro}.}
At every iteration, the algorithm prints the next value in the permutation.

Denote by $a_j$ the value $a[j]$ if it is initialized, or $j$ otherwise.
We claim that in the beginning of the $i$th iteration, the values $a_i,\cdots,a_{n-1}$ are exactly those that the procedure did not print yet. This can be shown by induction: at the beginning of the first iteration, $a_0,\ldots,a_{n-1}$ represent $0,\ldots,n{-}1$, and no numbers were printed; at iteration $i{-}1$, the procedure stores in $a[i{-}1]$ the value that it prints, and moves the value that was there to a higher index.

At iteration $i$, the algorithm chooses to print uniformly at random a value between $a_i,\cdots,a_{n-1}$, so the printed answer at every iteration has equal probability among all the values that have not yet been printed. 
Therefore, Algorithm~\ref{alg:shuffle} correctly generates a random permutation.

The array $a$ can be simulated using a lookup table that is empty at
first and is assigned with the required values when the array
changes. In the RAM model with uniform cost measure, accessing such a
table takes constant time. Overall, Algorithm~\ref{alg:shuffle} runs
with constant delay, constant preprocessing time. Note that $O(n)$
space is used to generate a permutation of $n$ numbers.
\end{proof} 

With the ability to efficiently generate a random permutation of
$\{0,\ldots,n{-}1\}$, we can now argue that whenever we have
available a
random-access algorithm for an enumeration
problem and
if we can also tell the number of
answers, then we can build a random-permutation algorithm as follows:
we can produce, on the fly, a random permutation of the
indices of the answers and output each answer by using the access routine.

We say that an enumeration problem has \e{polynomially many answers} if the number
of answers per input $I$ is bounded by a polynomial in the size of
$I$. In particular, if $P$ is the evaluation of a CQ or a UCQ, then $P$
has polynomially many answers.

\begin{theorem}\label{thm:poly_RA_is_RP}
If $P\in\RAgeneral{\lin}{\log^c}$ and $P$ has polynomially many answers,
then $P\in\RPgeneral{\lin}{\log^c}$, for all $c\geq 1$.
\end{theorem}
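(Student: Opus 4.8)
The plan is to combine the random-access algorithm guaranteed by $P\in\RAgeneral{\lin}{\log^c}$ with the on-the-fly Fisher--Yates shuffle of Proposition~\ref{prop:random_perm}. First I would run the preprocessing phase of the random-access algorithm $\mathcal{A}$ on the input $I$; by assumption this takes time $t_p\in O(|I|)$. During this phase I also need the number $n=|P(I)|$ of answers. To obtain $n$ in linear time, I would perform a binary search over the access routine: since $P$ has polynomially many answers, $n\le |I|^{O(1)}$, so $n$ lies in an interval $[0,|I|^{k}]$ for a constant $k$; calling the access routine $O(\log(|I|^{k}))=O(\log|I|)$ times, each taking $O(\log^c|I|)$ time, and checking whether it returns an answer or the error message, lets me locate the largest index that returns an answer. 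This adds only $O(\log^{c+1}|I|)\subseteq O(|I|)$ to preprocessing, so the total preprocessing is still $O(|I|)$.

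Next, in the enumeration phase, I would run the $\textsc{Shuffle}(n)$ procedure of Algorithm~\ref{alg:shuffle} to produce a uniformly random permutation $j_1,\dots,j_n$ of $\{0,\dots,n-1\}$ one index at a time, and after each index $j_t$ is produced I call the access routine of $\mathcal{A}$ with parameter $j_t+1$ and output the answer $t_{j_t+1}$ it returns. Since $\textsc{Shuffle}$ runs with constant delay and the access routine runs in $O(\log^c|I|)$ time, the delay between consecutive emitted answers is $O(\log^c|I|)$, matching the requirement for $\RPgeneral{\lin}{\log^c}$.

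For correctness I must verify two things. First, no answer is repeated and every answer is emitted exactly once: this holds because $\textsc{Shuffle}$ outputs a genuine permutation of $\{0,\dots,n-1\}$ (shown in the proof of Proposition~\ref{prop:random_perm}) and the access routine is a bijection from $\{1,\dots,n\}$ onto $P(I)$ (by the definition of a random-access algorithm, using the fixed enumeration order $t_1,\dots,t_n$ of $\mathcal{A}$). Second, the emission order is uniformly random: the map sending a permutation $(j_1,\dots,j_n)$ of indices to the sequence of answers $(t_{j_1+1},\dots,t_{j_n+1})$ is a bijection between permutations of $\{0,\dots,n-1\}$ and orderings of $P(I)$, so pushing forward the uniform distribution on the former (guaranteed by Proposition~\ref{prop:random_perm}) yields the uniform distribution $\frac{1}{n!}$ on the latter; equivalently, at each emission step $t$, conditioned on the answers emitted so far, $\textsc{Shuffle}$ picks $j_t$ uniformly among the not-yet-used indices, hence the corresponding not-yet-emitted answer is picked uniformly, which is exactly the defining property of a random-permutation algorithm.

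The only subtle point — the ``main obstacle'' — is computing $n$ within the linear preprocessing budget. This is why the hypothesis that $P$ has polynomially many answers is essential: it caps the binary-search range at $|I|^{O(1)}$, so only $O(\log|I|)$ access calls are needed and the $O(\log^{c+1}|I|)$ overhead is absorbed into $O(|I|)$; without such a bound the number of answers could be, say, exponential in $|I|$ and no linear-time method to find it from the access routine is apparent. Everything else is routine: the space overhead is $O(n)$ for $\textsc{Shuffle}$'s lazy array, which is fine in the DRAM model, and the $O(\log|I|)$-bit index arithmetic is constant-time there as well. I would conclude that $P\in\RPgeneral{\lin}{\log^c}$.
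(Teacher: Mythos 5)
Your proposal is correct and follows essentially the same route as the paper's proof: run the random-access preprocessing, determine $|P(I)|$ by a binary search over the access routine (with the polynomial bound on the number of answers ensuring only $O(\log|I|)$ calls, hence $O(\log^{c+1}|I|)\subseteq O(|I|)$ overhead), and then interleave the constant-delay on-the-fly Fisher--Yates shuffle of Proposition~\ref{prop:random_perm} with access calls to emit answers with $O(\log^c)$ delay. The added correctness details (bijectivity of the access routine and the push-forward of the uniform distribution on index permutations) are fine and consistent with the paper's argument.
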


\begin{proof}
Let $P$ be an enumeration problem in
$\RAgeneral{\lin}{\log^c}$, and 
let $\Algo{A}$ be the associated random-access algorithm for $P$.
When given an input $I$, our random-permutation algorithm proceeds as
follows.
It performs the preprocessing phase of $\Algo{A}$ and then, still
during its preprocessing phase, computes the number of answers $|P(I)|$ as follows.
We can tell whether $|P(I)|<k$ for any fixed $k$ by trying to access the
$k$th answer and checking if we get an out of bound error. We can use
this to do a binary search for the number of answers using
$O(\log(|P(I)|))$ calls to $\Algo{A}$'s access procedure.
Since $|P(I)|$ is polynomial in the size of the input,
$\log(|P(I)|)=O(\log(|I|))$. Each access costs time
$O(\log^c(|I|))$. In total, the number $|P(I)|$ is thus
computed in time $O(\log^{c+1}(|I|))$, which still is in $O(|I|)$.

During the enumeration phase, we use \Cref{prop:random_perm} to generate a random permutation of
$0,\ldots,|P(I)|{-}1$ with constant delay. 
Whenever we get the next element $i$ of the random permutation, we use
the access routine of $\Algo{A}$ to access the $(i{+}1)$th answer to our problem.
This procedure results in a random permutation of all the answers with
linear preprocessing time and delay $O(\log^c)$. 
\end{proof}

\section{Random-Access for CQs}\label{sec:cqs}
In this section, we discuss random access for CQs. For enumeration,
the characterization of CQs with respect to $\EN$ follows from known
results of Bagan, Durand, Grandjean, and Brault-Baron.

\begin{theorem}[\cite{bdg:dichotomy,bb:thesis}]\label{thm:CQ_EN}
Let $Q$ be a CQ. If $Q$ is free-connex, then it is in
$\ENgeneral{\lin}{\dconst}$. Otherwise, if it is also self-join-free, then it is not in
$\ENpolylog$ assuming $\sparseBMM$, $\Triangle$, and $\Hyperclique$.
\end{theorem}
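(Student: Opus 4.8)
The plan is to assemble the claimed dichotomy from two classical ingredients: the constant-delay algorithm of Bagan, Durand and Grandjean~\cite{bdg:dichotomy} for the positive direction, and the lower bounds of~\cite{bdg:dichotomy} (acyclic case) and of Brault-Baron~\cite{bb:thesis} (cyclic case) for the negative direction. For the upper bound, let $Q(\tup x)\cqa\alpha_1,\dots,\alpha_k$ be free-connex. I would first make $D$ globally consistent via a Yannakakis full reducer (a bottom-up and a top-down sweep of semijoins along a join tree of $\H_Q$), then exploit the free-connex structure to eliminate the existential variables, producing in time $O(|D|)$ a globally consistent database $D'$ and a \emph{full} acyclic join query $Q'$ over the head variables $\tup x$ with $Q'(D')=Q(D)$; acyclicity ensures that all intermediate relations have size $O(|D|)$, so this stays linear. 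Finally, I would enumerate $Q'(D')$ with constant delay by the standard recursive walk of a join tree of $\H_{Q'}$: pick a tuple from a designated root relation, recurse into its children, and use global consistency to guarantee that no partial tuple is a dead end. This gives $Q\in\ENgeneral{\lin}{\dconst}$.

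For the lower bound, assume $Q$ is self-join-free and not free-connex, and first treat the subcase where $\H_Q$ is nonetheless acyclic. Non-free-connexity then yields a \emph{free path}: distinct variables $x=z_0,z_1,\dots,z_\ell,z_{\ell+1}=y$ with $\ell\ge 1$, where $x$ and $y$ are free, $z_1,\dots,z_\ell$ are existential, consecutive variables co-occur in some atom, and no atom contains both $x$ and $y$ or two non-consecutive path variables. Given sparse Boolean matrices $A,B$ with $m$ nonzero entries in $A$, $B$ and $AB$ in total, I would build a database of size $O(m)$: write the nonzeros of $A$ into an atom joining $z_0$ and $z_1$; force $z_1=\dots=z_\ell$ using an equality relation (over all matrix indices) on the path's intermediate atoms; write the nonzeros of $B$ into an atom joining $z_\ell$ and $z_{\ell+1}$; and populate every other atom with dummy values that impose no further constraint on the path variables. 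Then $Q(D)$ equals, up to padding any further head coordinates with a fixed constant, $\{(i,j):(AB)_{ij}=1\}$; hence if $Q\in\ENpolylog$ then enumerating all answers would cost $O(|D|)+|Q(D)|\cdot\polylog(|D|)=m^{1+o(1)}$, contradicting $\sparseBMM$.

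Next suppose $\H_Q$ is not acyclic. By the classical characterization of acyclic hypergraphs (acyclic $=$ conformal and primal graph chordal), either the primal graph of $Q$ has a chordless cycle or $\H_Q$ has a clique covered by no atom; and by Brault-Baron's analysis, a minimal witness of either defect is, after projecting the involved atoms onto the relevant variables, one of the following self-join-free patterns: a chordless cycle $C_\ell$ with $\ell\ge 4$; a triangle pattern, namely three variables that are pairwise together in an atom but with no atom containing all three; or, for some $k\ge 3$, the $(k{+}1,k)$-hyperclique pattern. Each of these reduces the matching decision problem to deciding $Q(D)\ne\emptyset$ on a database of size $O(m)$: the triangle pattern reduces directly from triangle detection; the chordless cycle $C_\ell$ also from triangle detection, by placing the input edge relation on three of its atoms and padding the remaining $\ell-3$ with an equality relation on all vertices, so that a triangle of the input exactly closes the cycle; and the $(k{+}1,k)$-hyperclique pattern from $(k{+}1,k)$-hyperclique detection; in every case the atoms and positions outside the pattern are instantiated with dummy values imposing no further constraint. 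Since $Q\in\ENpolylog$ would then decide $Q(D)\ne\emptyset$ in time $O(|D|)+\polylog(|D|)=O(m)$, this contradicts $\Triangle$ or $\Hyperclique$.

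The hard part will be the cyclic case of the lower bound: its substantive content is Brault-Baron's combinatorial theorem that every non-acyclic $\H_Q$ ``hosts'' one of the above patterns in a form that a self-join-free query actually witnesses, together with the bookkeeping needed to keep each reduced instance of linear size and (where relevant) globally consistent, so that the $m^{1+o(1)}$ and linear-time lower bounds are genuinely triggered. The upper bound and the acyclic lower bound, by contrast, are routine once the join-tree toolkit is at hand.
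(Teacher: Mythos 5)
Your proposal is correct and follows essentially the same route as the paper, which likewise attributes the positive direction to the Bagan--Durand--Grandjean join-tree/full-reduction argument, the acyclic non-free-connex lower bound to the free-path reduction from sparse Boolean matrix multiplication, and the cyclic lower bound to Brault-Baron's reduction from triangle/hyperclique detection. The only difference is that you spell out the standard constructions that the paper merely cites, and you correctly flag that the combinatorial core of the cyclic case is Brault-Baron's theorem rather than something to be reproved.
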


Indeed, if the query $Q$ is self-join-free and
  non-free-connex, there are two cases. If $Q$ is cyclic, then it is
  not possible to determine whether there exists a first answer to $Q$
  in linear time assuming $\Triangle$ and $\Hyperclique$~\cite{bb:thesis}. Therefore,
  $Q$ it is not in $\ENgeneral{\lin}{\lin}$. Otherwise, if $Q$ is acyclic, the proof
  follows along the same lines as the one presented by Bagan et
  al.~\cite{bdg:dichotomy}.
  Using the same reduction as defined there, if any acyclic
  non-free-connex CQ is in $\ENgeneral{\lin}{\log^c}$, then any two Boolean matrices of
  size $n\times n$ can be multiplied in $O(m_1+m_2+m_3\cdot \log^c(n))$
  time, where $m_1$, $m_2$, and $m_3$ are the number of non-zero
  entries in $A$, $B$, and $AB$, respectively. This
  contradicts $\sparseBMM$.

According to \Cref{thm:CQ_EN}, free-connex CQs can be answered with
logarithmic delay. Brault-Baron~\cite{bb:thesis} proved that there
exists a random-access algorithm that works with linear preprocessing
and logarithmic access time. Hence, we get a strengthening of
\Cref{thm:CQ_EN}: free-connex CQs belong to $\RA$. According to
\Cref{thm:poly_RA_is_RP}, this also shows the tractability of a
random-order enumeration, that is, membership in $\RP$.

In this section, we present a random-access algorithm for free-connex
CQs that, compared to Brault-Baron~\cite{bb:thesis}, is simpler and better lends itself to a practical implementation. In
addition, we devise the algorithm in such a way that it is accompanied by an \e{inverted-access}
that is needed for our results on UCQs in Section~\ref{sec:ucqs}. An
inverted-access $I_A$ is an enhancement of a random-access algorithm
$A$ with the inverse operation: given an element $e$, the inverted-access returns $I_A[e]=j$ such that $A[j]=e$, that is, the $j$th
answer in the random-access is $e$; if $e$ is not an answer, then the
inverted-access indicates so by returning ``not-a-member.''

To proceed, we use the following folklore result.

\begin{proposition}\label{prop:reduction} For any free-connex CQ $Q$
  over a database $D$, 
  one can
  compute in linear time a full
  acyclic join query $Q'$ and a database $D'$ such that  $Q(D)=Q'(D')$
  and $D'$ is
  globally
  consistent w.r.t.\ $Q'$.
\end{proposition}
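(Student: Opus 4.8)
The plan is to reduce a free-connex CQ to a full acyclic join query in two stages: first eliminate the existential variables, then establish global consistency. Since $Q$ is free-connex, its hypergraph $\H_Q$ stays acyclic when we add the edge $\tup x$ of free variables; by the characterization of acyclicity via join-trees, we may fix a join-tree $T$ of this augmented hypergraph. A standard fact about such trees is that $T$ can be rooted so that the free variables $\tup x$ form a connected subtree containing the root --- equivalently, one can choose a subtree $T_x$ of nodes whose union of variables is exactly $\tup x$. First I would use $T$ to guide a single bottom-up semijoin pass (a ``Yannakakis-style'' phase): processing the atoms of $Q$ from the leaves of $T$ toward $T_x$, we project each child relation onto the variables it shares with its parent and semijoin it into the parent. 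Each semijoin is linear time (constant query size, so constant-arity projections and a hash-join suffice under the DRAM model), and after the pass the relations associated with the atoms in $T_x$, restricted to the variables in $\tup x$, jointly compute $Q(D)$: this is exactly the correctness argument of Yannakakis' algorithm specialized to acyclic queries, and the fact that $\tup x$ induces a connected subtree is what guarantees the projection onto $\tup x$ commutes with the join. I would then define $Q'$ to be the full join query whose atoms are (copies of) the atoms in $T_x$ with their non-head variables projected away, and $D'$ the database holding the corresponding projected-and-semijoined relations; then $Q'$ is acyclic (it is a sub-hypergraph of an acyclic hypergraph, with the extra $\tup x$ edge present) and $Q(D) = Q'(D')$.

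Second, to make $D'$ globally consistent with respect to $Q'$, I would run the classical two-pass full reducer of Yannakakis on $(Q', D')$: a bottom-up semijoin pass followed by a top-down semijoin pass over the join-tree of $Q'$. It is a textbook fact that after the full reducer every surviving fact of an acyclic database participates in at least one full join tuple --- i.e., agrees with some answer in $Q'(D')$ --- and that the pass does not change the join $Q'(D')$. Both passes are linear time for the same reasons as above. Combining the two stages gives the claimed $Q'$ and $D'$ in total linear time, with $Q(D) = Q'(D')$ and $D'$ globally consistent.

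The main obstacle, and the place where care is needed, is the first stage: justifying that projecting the $T_x$-atoms onto $\tup x$ after the single semijoin pass really yields $Q(D)$, rather than a superset. This requires (i) choosing the join-tree of the augmented hypergraph and rooting it so that $\tup x$ induces a connected subtree --- which is precisely the content of free-connexity --- and (ii) invoking the correctness of the acyclic join algorithm in the form that says a bottom-up semijoin reducer followed by projection onto a connected ``output'' subtree computes the projected join exactly. A secondary subtlety is bookkeeping when the same relation symbol appears in several atoms (self-joins) or when atoms contain constants or repeated variables; these are handled by working with fresh relation symbols for $Q'$ and pre-filtering each input relation by its atom's equality pattern, all in linear time, so they do not affect the asymptotics. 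The remaining steps --- counting the semijoin cost, noting constant query size keeps all per-fact work constant in the DRAM model --- are routine.
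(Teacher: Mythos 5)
Your overall route---Yannakakis-style semijoin processing guided by a tree that witnesses free-connexity, followed by a classical full reducer to obtain global consistency---is the same route the paper takes, and your second stage is fine (the two-pass reducer is linear and does not change $Q'(D')$). The genuine gap is in how you construct $Q'$ and $D'$. The ``standard fact'' you invoke is not the fact you need: in a join tree of $\H_Q\cup\{\tup x\}$, a connected subtree whose variables union to exactly $\tup x$ always exists trivially (take the single node $\tup x$), but its nodes are in general \emph{not atoms of $Q$}, so your prescription ``$Q'$ consists of the atoms in $T_x$, and $D'$ holds their projected-and-semijoined relations'' assigns them no relation at all. Concretely, for $Q(x,y)\cqa R(x,y,z)$ the join tree of the augmented hypergraph has the two nodes $\{x,y\}$ and $\{x,y,z\}$; the subtree $T_x$ is the virtual node $\{x,y\}$, which is not an atom, your bottom-up pass is asked to semijoin $R$ into a node carrying no relation, and the resulting $Q'$ has no atom covering $x,y$. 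The missing idea is precisely the structural lemma the reduction rests on: one must use free-connexity to build a join tree of an \emph{extension} of $\H_Q$ by subset-edges in which every node of the free part is contained in the variable set of some atom and the free part's union is $\tup x$; each such node is then populated in linear time by projecting that atom's (reduced) relation, and the free part itself is a join tree of $Q'$.

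Two further steps are unsound as written. If one instead reads your $T_x$ as a connected set of \emph{original} atoms covering $\tup x$, each ``with its non-head variables projected away,'' the equality $Q(D)=Q'(D')$ can fail even on a fully reduced database: for $Q(x,y)\cqa R(x,z),S(z,y)$ with $R=\{(1,a),(2,b)\}$ and $S=\{(a,1),(b,2)\}$ the instance is globally consistent, yet the join of the projections is $\{1,2\}\times\{1,2\}\supsetneq Q(D)$. That query is not free-connex, but the example shows your argument never actually uses the property that makes the step sound, namely that the free part contains \emph{only} free variables---which contradicts projecting non-head variables away from its atoms. Finally, acyclicity of $Q'$ cannot be justified by ``it is a sub-hypergraph of an acyclic hypergraph'': $\alpha$-acyclicity is not hereditary (the triangle $\{x,y\},\{y,z\},\{x,z\}$ is a sub-hypergraph of the acyclic hypergraph obtained by adding $\{x,y,z\}$). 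The correct argument is that a connected subtree of a join tree is a join tree of its node set, which again presupposes the corrected choice of tree. With that lemma in place (this is exactly the known consequence of free-connexity that the paper's citations carry), the rest of your plan---one bottom-up pass to absorb the hanging subtrees, then the full reducer---does go through.
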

This reduction was implicitly used in the past as part of CQ answering
algorithms~(cf., e.g., \cite{cdy,DBLP:journals/tods/OlteanuZ15}). 
To prove it, the first step is performing a full
reduction to remove dangling tuples (tuples that do not agree with any
answer) from the database. This can be done in linear time as proposed
by Yannakakis~\cite{Yannakakis} for acyclic join queries. 
Then, we utilize the fact that $Q$ is \emph{free-connex}, which enables us to 
drop all atoms that contain quantified variables.
This leaves us with a full acyclic join that has the same answers as the
original free-connex CQ.

So, it is left to design a random-access algorithm for full acyclic CQs. We do
so in the remainder of this section.
Algorithm~\ref{alg:preprocess}
  describes the preprocessing phase that builds the data structure
  and computes the count (i.e., the number $|Q(D)|$ of answers). Then, Algorithm~\ref{alg:access} 
  provides random-access to the answers,  
  and Algorithm~\ref{alg:index} provides
  inverted-access.

\begin{algorithm}[t]
  \caption{Preprocessing}\label{alg:preprocess}
  \begin{algorithmic}[1]
        \Procedure{Preprocessing}{$D, Q$}
        \For{$R$ in leaf-to-root order}
            \State Partition $R$ to buckets according to $\patts{R}$
            \For{bucket $B$ in $R$}
                \For{tuple $t$ in $B$}
                                \If{$R$ is a leaf}
                                        \State $\wgt(t) = 1$
                                        \Else
                                        \State let $C$ be the children of $R$
                                        \State $\wgt(t) = \prod_{S\in C}{\wgt(\bucket{S}{t})}$
                                \EndIf
                                \State let $P$ be the tuples preceding $t$ in $B$
                                \State $\rng(t)= \sum_{s\in P}{\wgt(s)}$
                \EndFor
                \State $\wgt(B) = \sum_{t\in B}{\wgt(t)}$
            \EndFor
        \EndFor
      \EndProcedure     
  \end{algorithmic}
\end{algorithm}

Given a relation $R$, denote by $\patts{R}$ the attributes that appear both in $R$ and in its parent. If $R$ is the root, then $\patts{R}=\emptyset$.
Given a relation $R$ and an assignment $a$, we denote by $\bucket{S}{a}$ all tuples in $S$ that agree with $a$ over the attributes that $S$ and $a$ have in common. We use this notation also when $a$ is a tuple, by treating the tuple as an assignment from the attributes of its relation to the values it holds (intuitively this is $S\ltimes a$).

The preprocessing starts by partitioning every relation to buckets
according to the different assignments to the attributes shared with
the parent relation. This can be done in linear time in the RAM model.
Then, we compute a weight $\wgt(t)$ for each tuple $t$. This weight
represents the number of different answers this tuple agrees with when
only joining the relations of the subtree rooted in the current
relation. The weight is computed in a leaf-to-root order, where tuples
of a leaf relation have weight $1$. The weight of a tuple $t$ in a
non-leaf relation $R$ is determined by the product of the weights of
the corresponding tuples in the children's relations. These
corresponding tuples are the ones that agree with $t$ on the
attributes that $R$ shares with its child. The weight of each bucket
is the sum of the weights of the tuples it contains. In addition, we
assign each tuple $t$ with an index range that starts with $\rng(t)$
and ends with the $\rng$ of the following tuple in the bucket (or the
total weight of the bucket if this is the last tuple). This represents
a partition of the indices from $0$ to the bucket weight, such that
the length of the range of each tuple is equal to its weight. At the
end of preprocessing, the root relation has one bucket (since
$\patts{\root}=\emptyset$), and the weight of this bucket represents
the number of answers to the query.

\begin{algorithm}[t]
  \caption{Random-Access}\label{alg:access}
  \begin{algorithmic}[1]
        \Procedure{Access}{$j$}
        \If {$j \geq \wgt(\root)$}
            \State return out-of-bound
        \Else
            \State $\ans = \emptyset$
            \State \algname{SubtreeAccess($\root$, $j$)}
            \State return $\ans$
        \EndIf
        \EndProcedure
      \item[]
        \Procedure{SubtreeAccess}{$R, j$}
        \State find $t\in R$ s.t.~$\rng(t) \leq j < \rng(t{+}1)$
        \State $\ans = \ans \cup \{\atts{R}\rightarrow \atts{R}(t)\}$
        \State let $R_1,\ldots,R_m$ be the children of $R$
        \State $j_1,\ldots, j_m = \split(j-\rng(t),$\\\hskip8em $\wgt(\bucket{R_1}{t}),\ldots,\wgt(\bucket{R_m}{t}))$ \label{line:split}
        \For{$i$ in $1,\ldots,m$}
            \State \algname{SubtreeAccess($\bucket{R_i}{t}$, $j_i$)}
        \EndFor
      \EndProcedure
  \end{algorithmic}
\end{algorithm}

The random-access is done recursively in a root-to-leaf order: we
start from the single bucket at the root. At each step we find the
tuple $t$ in the current relation that holds the required index in its
range (we denote by $t$+1 the tuple that follows $t$ in the
bucket). Then, we assign the rest of the search to the children of the
current relation, restricted to the bucket that corresponds to $t$.
Finding $t$ can be done in logarithmic time using binary search. The
remaining index $j'=j-\rng(t)$ is split into search tasks for the
children using the method $\split$. This can be done
in the same way
as
an index is split in standard multidimensional arrays: if the
last bucket is of weight $m$, its index would be $j' \text{ mod } m$,
and the other buckets will now recursively split between them the
index $\lfloor\frac{j'}{m}\rfloor$.

Algorithm~\ref{alg:index} works similarly to
Algorithm~\ref{alg:access}. But while the search down the tree in
Algorithm~\ref{alg:access} is guided by the index and the answer is
the assignment, in Algorithm~\ref{alg:index} the search
is guided by the assignment and the 
answer is the index. 
The function $\combine$ is the reverse of $\split$, used in line~\ref{line:split} of Algorithm~\ref{alg:access}.
Recursively,
$\combine(w_1,j_1,\ldots,w_m,j_m)$ is given by $j_m+w_m\cdot
\combine(w_1,j_1,\ldots,w_{m-1},j_{m-1})$ with $\combine()\!=\!0$.

\begin{algorithm}[t]
  \caption{Inverted-Access}\label{alg:index}
  \begin{algorithmic}[1]
        \Procedure{InvertedAccess}{$\ans$}
        \State return \algname{InvertedSubtreeAccess($\root$, $\ans$)}
        \EndProcedure
      \item[]
        \Procedure{InvertedSubtreeAccess}{$R$, $\ans$}
        \State find $t\in R$ s.t.~$\atts{R}(t) = \atts{R}(\ans)$ \label{line:find_t}
        \If{$t$ was not found}
            \State return not-an-answer
        \EndIf
        \State let $R_1,\ldots,R_m$ be the children of $R$
        \For{$i$ in $1,\ldots,m$}
            \State $j_i$ = \algname{InvertedSubtreeAccess($R_i$, $\ans$)}
        \EndFor
        \State $\offset = \combine(
        \wgt(\bucket{R_1}{\ans}),j_1, \ldots, $ \\\hskip11.4em
        $\wgt(\bucket{R_m}{\ans}),j_m)$
        \State return $\rng(t) + \offset$
        \EndProcedure
  \end{algorithmic}
\end{algorithm}

Line~\ref{line:find_t} can be supported in constant time after an
appropriate indexing of the buckets at preprocessing. Since
Algorithm~\ref{alg:index} has a constant number of operations (in data
complexity), inverted-access can be done in constant time (after the
linear preprocessing provided by Algorithm~\ref{alg:preprocess}).

The next theorem, 
parts of which are already given in \cite{bb:thesis},
summarizes the algorithms presented so far.

\begin{theorem}\label{thm:cqops} Given a free-connex CQ 
$Q$ and a database $D$, it is
possible to build in linear time a data structure that allows
to output the count $|Q(D)|$ in constant time and provides random-access in logarithmic time, and
inverted-access in constant time. 
\end{theorem}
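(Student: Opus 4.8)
The plan is to combine Proposition~\ref{prop:reduction} with a correctness and complexity analysis of Algorithms~\ref{alg:preprocess}, \ref{alg:access}, and~\ref{alg:index}. First I would invoke Proposition~\ref{prop:reduction} to reduce, in linear time, the given free-connex CQ $Q$ over $D$ to a full acyclic join query $Q'$ over a globally consistent database $D'$ with $Q(D)=Q'(D')$; from here on it suffices to treat a full acyclic join query with a fixed join-tree $T$. So it remains to show: (i) \textsc{Preprocessing} runs in linear time and correctly computes $\wgt$, $\rng$, and the bucket weights, with $\wgt(\root)=|Q'(D')|$; (ii) \textsc{Access}$(j)$ returns the $j$th answer in logarithmic time; and (iii) \textsc{InvertedAccess} returns the index of a given answer (or \emph{not-an-answer}) in constant time.

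For (i), the key claim is the invariant stated in the prose: for each tuple $t$ in relation $R$, the value $\wgt(t)$ equals the number of answers to the subquery induced by the subtree of $T$ rooted at $R$ that project to $t$ on $\vars(R)$; and $\rng(t)$ together with $\rng(t{+}1)$ delimits a half-open interval of length $\wgt(t)$, so that the intervals of the tuples in a bucket $B$ partition $[0,\wgt(B))$. I would prove this by induction on the leaf-to-root order: leaves have $\wgt(t)=1$ since the only relevant assignment is $t$ itself; for an internal node $R$, by acyclicity and the join-tree property, the subtrees rooted at the children $S\in C$ interact only through the attributes each shares with $R$, so the answers in the subtree of $R$ agreeing with $t$ factor as the product over $S\in C$ of the answer counts in the subtree of $S$ agreeing with $t$ on their common attributes — which is exactly $\prod_{S\in C}\wgt(\bucket{S}{t})$. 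Here I would note that global consistency of $D'$ is what guarantees every $\bucket{S}{t}$ is nonempty, so no answer is lost. For the running time: partitioning each relation into buckets is linear in the RAM/DRAM model via lookup tables; the weight and range computations touch each tuple a constant number of times (the product and the prefix sum are over a number of children bounded by the fixed query size, and indexing $\bucket{S}{t}$ is constant-time after the bucket indexing); hence $O(|D'|)=O(|D|)$ overall. The count is read off as $\wgt(\root)$ in constant time since the root has a single bucket.

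For (ii), I would argue that \textsc{SubtreeAccess}$(R,j)$, invoked with $0\le j<\wgt(R\text{'s current bucket})$, correctly recovers the restriction of the $j$th subtree-answer to that subtree: binary search finds the unique $t$ whose range contains $j$, costing $O(\log|D|)$; then $\split$ decodes $j-\rng(t)$ as a mixed-radix index into the product $\prod_i \wgt(\bucket{R_i}{t})$ exactly as a multidimensional-array offset, so the recursive calls receive valid in-range indices; correctness then follows from the factorization established in (i) plus a straightforward induction on the tree. Since the tree has constantly many nodes (data complexity) and each node contributes one binary search plus constant-time arithmetic, $t_a=O(\log|D|)$. For (iii), \textsc{InvertedSubtreeAccess} mirrors this: line~\ref{line:find_t} locates the tuple $t$ with $\atts{R}(t)=\atts{R}(\ans)$ in constant time via the preprocessing index (returning \emph{not-an-answer} if none exists, which correctly propagates upward), the recursive calls return the child-indices $j_i$, and $\combine$ — being the inverse of $\split$ by the given recursive definitions — reassembles the offset, so the returned $\rng(t)+\offset$ is exactly the index $A[\cdot]$ assigns to $\ans$; this is $O(1)$ per node and hence $O(1)$ total.

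The main obstacle is part (i): pinning down precisely why the weight recursion is \emph{correct}, i.e., that the answers of the subtree-subquery agreeing with a tuple $t$ decompose as an independent product over the children. This is where acyclicity is essential — one must use the running-intersection / connectedness property of the join-tree to argue that two different children's subtrees share no variables except possibly those already fixed by $t$, so their answer sets are ``conditionally independent'' given the value of $t$. The remaining pieces (linear-time bucketing in DRAM, the $\split$/$\combine$ pair being mutually inverse, and the constant bound on tree size in data complexity) are routine, and the logarithmic access time is immediate once each level costs a single binary search.
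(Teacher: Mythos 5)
Your proposal is correct and follows essentially the same route as the paper: reduce via Proposition~\ref{prop:reduction} to a full acyclic join over a globally consistent database, then establish the subtree-count invariant for $\wgt$ and $\rng$ by leaf-to-root induction, and argue logarithmic access via one binary search per (constantly many) join-tree nodes, constant-time inverted access via the bucket index and the $\split$/$\combine$ inverse pair, and constant-time counting from $\wgt(\root)$. This matches the paper's own argument, which presents exactly these algorithms and invariants.
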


\begin{example}
  Consider the CQ
  \[Q(v,w,x,y,z)\cqa R_1(v,w,x), R_2(v,y), R_3(w,z)\] with the
  join-tree with $R_1$ as root, and $R_2$ and $R_3$ are its children. The
  following is an example of an input database for such a query and the
  computed information available at the end of preprocessing.
  Here, the $\rng$ value is denoted $s$. 
  \begin{table}[h] \begin{tabular}{|c c
        c|c|c|} \hline\small
                                                   $R_1$ & & & $\wgt$ & $s$\\
                                                   \hline\hline
                                                   $a_1$ & $b_1$ & $c_1$ & $6$ & $0$\\
                                                   $a_1$ & $b_1$ & $c_2$ & $2$ & $6$\\
                                                   $a_2$ & $b_2$ & $c_1$ & $6$ & $8$\\
                                                   $a_2$ & $b_2$ & $c_2$ & $2$ & $14$\\
                                                   \hline
                                                 \end{tabular}
                                                \begin{tabular}{|c
                                                  c|c| c|} \hline
                                                  $R_2$ & & $\wgt$ & $s$\\
                                                  \hline\hline
                                                  $b_1$ & $d_1$ & $1$ & $0$\\
                                                  $b_1$ & $d_2$ & $1$ & $1$\\
                                                  \hline
                                                  $b_2$ & $d_2$ & $1$ & $0$\\
                                                  $b_2$ & $d_3$ & $1$ & $1$\\
                                                  \hline \end{tabular}
                                                \begin{tabular}{|c
                                                  c|c| c|} \hline
                                                  $R_3$ & & $\wgt$ & $s$\\
                                                  \hline\hline
                                                  $c_1$ & $e_1$ & $1$ & $0$\\
                                                  $c_1$ & $e_2$ & $1$ & $1$\\
                                                  $c_1$ & $e_3$ & $1$ & $2$\\
                                                  \hline
                                                  $c_2$ & $e_4$ & $1$ & $0$\\
                                                  \hline \end{tabular}
                                                \end{table}

                                                Calling
                                                $\algname{Access}(13)$
                                                finds
                                                $(a_2,b_2,c_1)\in R_1$.
                                                Then, the remaining
                                                $13-8=5$ is split to
                                                $5\text{ mod } 3 = 2$ in the
                                                top bucket of $R_3$
                                                and
                                                $\lfloor\frac{5}{3}\rfloor
                                                = 1$ in the bottom
                                                bucket of $R_2$. These
                                                in turn find the
                                                tuples
                                                $(b_2,d_3)\in R_2$ and
                                                $(c_1,e_3)\in R_3$.
                                                Overall, the obtained
                                                answer is
                                                $(a_2,b_2,c_1,d_3,e_3)$.

                                                Calling
                                                $\algname{InvertedAccess}(a_2,b_2,c_1,d_3,e_3)$
                                                finds
                                                $(a_2,b_2,c_1)\in R_1$
                                                with $\rng=8$. Then calling
                                                \algname{InvertedSubtree\-Access}
                                                on $R_2$ returns the index
                                                $\rng(b_2,d_3)=1$ from a bucket of weight $2$, and calling
                                                \algname{InvertedSubtree\-Access}
                                                on $R_3$ returns
                                                $\rng(c_1,e_3)=2$ from a bucket of weight $3$. The
                                                call for
                                               \algname{Combine\-Index}$(2,1,3,2)$
                                                returns 
                                                $2+3\cdot 1 = 5$, 
                                                and the result is
                                                $8+5=13$. \qed
                                              \end{example}

\Cref{thm:cqops} along with \Cref{thm:poly_RA_is_RP} implies that the dichotomy of \Cref{thm:CQ_EN} extends to the problems $\RP$ and 
\linebreak[4]
$\RA$. This also means that for
self-join-free
CQs, the classes of efficient enumeration, random-access and random-permutation collapse.
This is summarized by the next corollary.

\def\corCqEquiv{
For every CQ $Q$, the following holds:
 If $Q$ is free-connex, then $Q$ is in each of $\RA$, $\RP$ and $\EN$. 
 If $Q$ is self-join-free and not free-connex, then it is not in any of $\RA$, $\RP$, and $\EN$ assuming $\sparseBMM$, $\Triangle$, and $\Hyperclique$.
}
\begin{corollary} \label{corollary:cqEquiv}
\corCqEquiv
\end{corollary}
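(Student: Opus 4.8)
The plan is to assemble the statement from results already established, together with the inclusions among the enumeration classes recorded in Section~\ref{sec:classes}; there is no new technical content, so both directions follow by chaining known implications.

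First I would handle the positive part. Suppose $Q$ is free-connex. By \Cref{thm:cqops} one can build in linear time a data structure that provides random-access in logarithmic time, so $Q\in\RA=\RAgeneral{\lin}{\log}$. Since the evaluation of a CQ has polynomially many answers, \Cref{thm:poly_RA_is_RP} applied with $c=1$ gives $Q\in\RP=\RPgeneral{\lin}{\log}$. Finally, the Fact that $\RPgeneral{\lin}{d}\subseteq\ENgeneral{\lin}{d}$ yields $Q\in\EN$. (Membership in $\EN$ in fact already follows directly from \Cref{thm:CQ_EN}, which even places $Q$ in $\ENgeneral{\lin}{\dconst}$.)

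Next I would handle the negative part. Suppose $Q$ is self-join-free and not free-connex. By \Cref{thm:CQ_EN}, under $\sparseBMM$, $\Triangle$, and $\Hyperclique$ we have $Q\notin\ENpolylog$. The one point deserving a sentence of care is that the conclusion is phrased for $\ENpolylog=\bigcup_{c\geq 1}\ENgeneral{\lin}{\log^c}$, not just for $\EN$; combining the per-$c$ inclusions $\RAgeneral{\lin}{\log^c}\subseteq\ENgeneral{\lin}{\log^c}$ (\Cref{prop:RAisEN}) and $\RPgeneral{\lin}{\log^c}\subseteq\ENgeneral{\lin}{\log^c}$ over all $c\geq 1$, we get $\RApolylog\subseteq\ENpolylog$ and $\RPpolylog\subseteq\ENpolylog$. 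Hence $Q\notin\ENpolylog$ forces $Q\notin\RApolylog$ and $Q\notin\RPpolylog$, and a fortiori $Q\notin\RA$, $Q\notin\RP$, $Q\notin\EN$.

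I expect no genuine obstacle: the corollary is a bookkeeping consequence of \Cref{thm:cqops}, \Cref{thm:poly_RA_is_RP}, \Cref{thm:CQ_EN}, \Cref{prop:RAisEN}, and the containment $\RPgeneral{\lin}{d}\subseteq\ENgeneral{\lin}{d}$, with the only subtlety being to match the $\polylog$-level of the lower bound in \Cref{thm:CQ_EN} against the $\log$-level classes $\RA$, $\RP$, $\EN$ as indicated above. The ``collapse'' remark for self-join-free CQs is then immediate, since for such queries the two parts together show membership in $\RA$, $\RP$, $\EN$ are all equivalent to free-connexity.
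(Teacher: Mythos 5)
Your proposal is correct and follows essentially the same route as the paper, which derives the corollary by combining \Cref{thm:cqops} with \Cref{thm:poly_RA_is_RP} for the positive direction and the lower bound of \Cref{thm:CQ_EN} together with the inclusions of random-access and random-permutation into enumeration for the negative direction. The only difference is that you spell out the bookkeeping (in particular matching the $\polylog$-level lower bound against the $\log$-level classes), which the paper leaves implicit.
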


\section{Unions of CQs}\label{sec:ucqs}

In this section, we discuss the availability of random-order enumeration and
random-access in UCQs. We first show that not all UCQs that have
efficient enumeration also have efficient random-access. Then we relax
the delay requirements and provide an algorithm for the enumeration in
random order of a union of sets, and show that the algorithm can be
applied for such UCQs.
In addition, we identify a subclass of UCQs that do allow for an efficient random-access.

If several CQs are in $\ENgeneral{\lin}{d}$, for some $d$, then their union can also be enumerated
within the same time bounds~\cite{DBLP:conf/csl/DurandS11,nofar:ucq}. Since our goal is to
answer queries in random order, a natural question arises: does the
same apply to queries in $\RAgeneral{\lin}{d}$ and $\RPgeneral{\lin}{d}$? We show that it does not
apply to CQs in $\RAgeneral{\lin}{d}$. This means that for UCQs we cannot rely on
random-access to achieve an efficient random-permutation algorithm as
we did for CQs.
The following is an example of two free-connex CQs (therefore, each one
admits efficient counting, enumeration, random-order enumeration and
random-access), but we show that their union is not in $\RAgeneral{\lin}{\lin}$
under $\Triangle$.

\begin{example}\label{ex:cyclic_intersection}
Consider the CQs
$Q_1(x,y,z) \cqa R(x,y),S(y,z)$ \ and \ 
$Q_2(x,y,z) \cqa S(y,z),T(x,z)$.
Let $Q_\cup = Q_1\cup Q_2$. Since $Q_1$ and $Q_2$ are both
free-connex, we can find $|Q_1(D)|$ and $|Q_2(D)|$ in linear time 
by Theorem~\ref{thm:cqops}.
Note that
$|Q_\cup(D)|=|Q_1(D)|+|Q_2(D)|-|Q_1(D)\cap Q_2(D)|$. Therefore,
$|Q_1(D)\cap Q_2(D)| > 0$ iff $|Q_\cup(D)| < |Q_1(D)| + |Q_2(D)|$.

Now let us assume that $Q_\cup\in\RAgeneral{\lin}{\lin}$. 
We can then ask the random-access algorithm for $Q_\cup$ to retrieve index number
$|Q_1(D)| + |Q_2(D)|$. The algorithm will raise an out-of-bound error
exactly if $|Q_\cup(D)| < |Q_1(D)| + |Q_2(D)|$. Therefore, we can
check whether $Q_1(D)\cap Q_2(D) = \emptyset$ in linear time.
But consider the ``triangle query'' $Q_\cap(x,y,z) \cqa R(x,y), S(y,z),
T(x,z)$ and note that
$Q_\cap(D)=Q_1(D)\cap Q_2(D)$ for all $D$.  We can hence
determine if the query $Q_\cap$ has answers in linear time, which
contradicts $\Triangle$.
Thus, under $\Triangle$, the UCQ $Q_\cup$ does not belong to 
$\RAgeneral{\lin}{\lin}$.
\end{example}

Example~\ref{ex:cyclic_intersection} shows that (assuming $\Triangle$)
$\RA$ is not closed under union. It also shows
that, when considering UCQs, we have that
$\ENgeneral{\lin}{\dconst}\not\subseteq\RAgeneral{\lin}{\lin}$. In particular, this means that $\EN\neq\RA$, which is not the case when only considering CQs.
In
Section~\ref{subsec:UCQsRandomAccess}, we devise a sufficient
condition for UCQs to have a $\RAgeneral{\lin}{\polylog}$ computation.
In
Section~\ref{sec:expected}, we show that if we relax the bound to logarithmic
time \emph{in expectation}, we can enumerate in a random-order any union comprised of free-connex CQs.

\subsection{Random-Permutation with Expected Logarithmic Delay}\label{sec:expected}

In order to provide a random-permutation algorithm for UCQs, we start
by devising a general algorithm for the union of sets, and then show
how it can be applied for UCQs. The sets are assumed to have efficient
counting, uniform sampling, membership testing, and deletion. If the
number of sets in the union is constant, the algorithm also carries
the guarantees of expected and amortized constant number of such
operations between every pair of successively printed answers.
The algorithm
resembles
the sampling algorithm by Karp and
Luby~\cite{karp-luby}, but it allows for sampling without repetitions.
We prove the following lemma.

\begin{lemma}\label{lemma:union}
  Let $S_1,\ldots,S_k$ be sets, each supports sampling, testing,
  deletion and counting in time $t$. Then, it is possible to enumerate
  $\bigcup_{j=1}^{k}{S_j}$ in uniformly random order with expected
  $O(kt)$ delay.
\end{lemma}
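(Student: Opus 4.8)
The plan is to enumerate $\bigcup_{j=1}^k S_j$ by repeatedly sampling from the union of the (shrinking) sets, but being careful to avoid two sources of bias: an element that lies in several $S_j$ should not be over-represented, and once an element has been printed it must never be printed again. To handle multiplicity, I would adopt the Karp--Luby trick: when a candidate element $e$ is drawn from some $S_j$, accept it only if $j$ is the \emph{smallest} index among the sets currently containing $e$; since we can test membership in each $S_i$ in time $t$ and there are $k$ sets, this check costs $O(kt)$. The sampling step itself should pick a set with probability proportional to its current size (using the counts, which each set maintains), and then draw a uniform element of that set; this makes every not-yet-rejected element equally likely to be proposed, and the "smallest-index" acceptance rule makes every element of the remaining union equally likely to be the one accepted. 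That gives uniformity of the next printed answer among all not-yet-printed answers, which is exactly the random-permutation requirement.

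Next I would deal with deletion so that printed elements never recur and so that the sampling distribution stays correct. After an element $e$ is accepted and printed, I delete $e$ from \emph{every} $S_i$ that contains it (again $O(kt)$, using membership tests to locate which sets to delete from). The key correctness point, to be argued by induction on the number of printed answers, is the invariant that at each stage $\bigcup_j S_j$ equals exactly the set of answers not yet printed; this is maintained because we only ever delete an element at the moment we print it, and when we do, we remove it from all sets. Combined with the per-step uniformity argument above, this shows the output is a uniformly random permutation of $\bigcup_j S_j$.

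The subtle part — and where I expect the main work to be — is the \emph{delay} bound, i.e.\ bounding the expected number of rejected samples between two successive prints. A proposed element $e$ drawn from set $S_j$ is rejected precisely when $j$ is not the minimal index of a currently-live set containing $e$. If $e$ appears in $m_e$ of the current sets, then conditioned on $e$ being the proposed element, acceptance happens with probability $1/m_e \ge 1/k$. Summing/averaging over which element is proposed, the overall acceptance probability of a single sampling attempt is at least $1/k$, so the number of attempts until the next accept is dominated by a geometric random variable with success probability $\ge 1/k$, giving expected $O(k)$ attempts, each costing $O(kt)$ for the membership checks — wait, I should be more careful and charge the acceptance check to the sampling step, so each attempt is $O(t)$ for sample-plus-count-update plus $O(kt)$ for the index-minimality test; with $O(k)$ expected attempts the expected delay is $O(k^2 t)$, which is $O(kt)$ only when $k$ is constant. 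So the cleaner accounting is: the number of sampling attempts between prints is stochastically bounded by a $\mathrm{Geom}(1/k)$ variable (hence $O(k)$ in expectation), each attempt does $O(t)$ work for the draw itself, and the one-time acceptance bookkeeping plus deletion when we finally accept costs $O(kt)$; for constant $k$ this collapses to $O(t)$ expected per element, matching the "expected $O(kt)$ delay" claim in the stated generality. The remaining care is to make sure the $S_j$'s never become empty in a way that breaks the "pick a set proportional to size" step (if all counts are zero we are done and stop), and to confirm that none of rejection, deletion, or the size-proportional choice disturbs the conditional uniformity established in the first paragraph — I would phrase this as: conditioned on the set of live elements, each sampling attempt that \emph{accepts} returns a uniformly random live element, independently of how many rejections preceded it.
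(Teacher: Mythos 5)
Your proposal follows essentially the same route as the paper's proof of Lemma~\ref{lemma:union}: a Karp--Luby-style rejection scheme in which a set is drawn with probability proportional to its current size, a uniform element is drawn from it, and the element is accepted only if the drawn set is the minimum-index set currently containing it; uniformity is argued exactly as in the paper (every live element is proposed-and-accepted with probability $1/\sum_j|S_j|$ in each attempt), and the delay bound comes from the same observation that the acceptance probability of each attempt is at least $1/k$.

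There are two points where you diverge from Algorithm~\ref{alg:union}. First, the paper deletes a rejected element from all \emph{non-owner} sets at the moment of rejection, whereas you delete only at print time. Your variant still yields a uniform permutation and the same $\geq 1/k$ acceptance probability, so it suffices for the expected-delay statement of the lemma; what it gives up is the paper's additional amortized guarantee, since with deletion-on-rejection each answer can be rejected at most once and the total number of iterations is bounded by twice the number of answers, a deterministic bound your version does not have (a single multiply-covered element can be re-proposed and re-rejected many times). Second, your attempt to rescue the $O(kt)$ bound by charging the ownership check only to accepted attempts is not sound: a rejected attempt must also perform the membership tests in order to discover that the chosen set is not the owner, so every attempt costs $O(kt)$, and the honest per-gap accounting is ``expected $O(k)$ attempts, each using $O(k)$ set-operations of cost $t$.'' This is the same accounting the paper's own argument supports, and it matches the stated bound in the regime the lemma is actually used in (a fixed UCQ, hence constant $k$); so this is a shared looseness rather than a defect specific to your proof, but you should state the per-attempt cost honestly rather than shifting it onto accepted attempts.
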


\begin{algorithm}[t]
  \caption{Random-Order Enumeration of 
$S_1\cup\cdots\cup S_k$
}\label{alg:union}
  \begin{algorithmic}[1]
        \While {$\sum_{j=1}^{k}{S_j.\algname{Count}()} > 0$} \label{line:while}
        \State $\chosen$ = choose $i$ with probability $\frac{S_i.\algname{Count}()}{\sum_{j=1}^{k}{S_j.\algname{Count}()}}$ \label{line:choose_set}
        \State $\element = S_\chosen.\algname{Sample}()$ \label{line:choose_element}
        \State $\providers = \{ S_j \mid S_j.\algname{Test}(\element) = \true \}$ \label{line:providers}
        \State $\owner = \min\{j \mid S_j\in\providers\}$ \label{line:owner}
        \For{$S_j \in \providers\setminus \{S_{\owner}\}$} \label{line:providers_loop}
            \State $S_j.\algname{delete}(\element)$ \label{line:providers_deletion}
        \EndFor
        \If{$S_{\owner} = S_\chosen$} \label{line:owner_if}
                        \State
                        $S_\chosen.\algname{delete}(\element)$ \label{line:owner_deletion}
                        ; output $\element$ \label{line:print}
        \EndIf
        \EndWhile
  \end{algorithmic}
\end{algorithm}

\Cref{alg:union} enumerates the union of several sets in uniformly random order. Every iteration starts by choosing a random set and a random element it contains. The choice of set is weighted by the number of elements it contains.
If the algorithm would have always printed the element at that stage (after line~\ref{line:choose_element}), then an element that appears in two sets would have had twice the probability of being chosen compared to an element that appears in only one set. The following lines correct this bias.
We denote by $\providers$ all sets that contain the chosen element.  Then, the algorithm assigns one owner to this element out of its providers (as the choice of the owner is not important, we arbitrarily choose to take the provider with the minimum index).
The element is then deleted from non-owners, and is printed only if the algorithm chooses its owner in line~\ref{line:choose_set}.
If the element was reached through a non-owner, then the current iteration ``rejects'' by printing nothing.

\def \Choices{\mathit{Choices}}
\def \AccChoices{\mathit{AccChoices}}

Algorithm~\ref{alg:union} prints the results in a uniformly random
order since, in every iteration, every answer remaining in the union
has equal probability of being printed. 
Denote by $\Choices$ the set of all possible $(\chosen, \element)$
pairs that the algorithm may choose in lines~\ref{line:choose_set}
and~\ref{line:choose_element}. The probability of such a pair is
$\frac{|S_\chosen|}{\sum_{j=1}^{k}{|S_j|}}\frac{1}{|S_\chosen|} =
\frac{1}{\sum_{j=1}^{k}{|S_j|}}$, which is the same for all pairs in
$\Choices$. 
Denote by $\AccChoices\subseteq \Choices$ the pairs for which
$S_\chosen$ is the owner of $\element$. Line~\ref{line:owner_if}
guarantees that an element is printed only when the selections the
algorithm makes are in $\AccChoices$. Since every possible answer only
appears once as an element in $\AccChoices$, the probability of each
element to be printed is $\frac{1}{\sum_{j=1}^{k}{|S_j|}}$. Therefore,
all answers have the same probability of being printed. A printed
answer is deleted from all sets containing it, so it will not be
printed twice.

We now discuss the time complexity.
If some iteration rejects an answer, this iteration also deletes it from all non-owner sets. This guarantees that each unique answer will only be rejected once, as it only has one provider in the second time it is seen. This means that the total number of iterations Algorithm~\ref{alg:union} performs is bound by twice the number of answers. The delay between successive answers is therefore amortized constant.
In addition, since by definition $|\Choices|\leq k|\AccChoices|$, in every iteration the probability that an answer will be printed is $\frac{|\AccChoices|}{|\Choices|}\geq \frac{1}{k}$.
The delay between two successive answers therefore comprises of a constant number of operations both in expectation and in amortized complexity.
This proves \Cref{lemma:union}.

In order to use Algorithm~\ref{alg:union}, the sets in question need to support counting, sampling, testing and deletion. We next show how to support these operations using the shuffle mechanism provided in Algorithm~\ref{alg:shuffle}, assuming that the sets in question support efficient counting, random-access and inverted-access.
Then, we will be able use this algorithm to answer UCQs.

We describe the construction of the data structure.
First, we count the number of answers $n$. As in Algorithm~\ref{alg:shuffle}, our data structure contains an array $a$ of length $n$ and an integer $i$. Here, $i$ corresponds to the number of elements deleted. The values $a[0],\ldots,a[i{-}1]$ represent the indices of the deleted elements, while $a[i],\ldots,a[n{-}1]$ hold the indices that remain in the set. We also use a reverse index $b$: whenever we set $a[i]=j$, we also set $b[j]=i$. 
Conceptually, $a$ and $b$ start initialized with $a[j]=b[j]=j$ and $i=0$. Practically, the arrays can be implemented as lookup tables as in Algorithm~\ref{alg:shuffle}.
When \e{sampling}, we generate a uniformly random number $k\in\{i,\ldots,n{-}1\}$. We then return element number $a[k]$ using the random-access routine.
When \e{testing} membership, we call the inverted-access routine and return ``True'' iff we obtain a valid index.
When \e{deleting}, we use the inverted-access routine to find the index $m$ of the item to be deleted. We then find an index $k$ such that $a[k]=m$, swap $a[k]$ with $a[i]$, and increase $i$ by one. In order to efficiently find $k$, we use the reverse index $b$.
When \e{counting}, we return $n{-}i$.
The correctness of these procedures follows along the same lines of that of Algorithm~\ref{alg:shuffle}.
This proves the following lemma.

\begin{lemma}\label{lemma:deletion}
If an enumeration problem supports counting, random-access and inverted-access in time $t$, then the set of its answers also supports sampling, testing, deletion and counting in time $O(t)$.
\end{lemma}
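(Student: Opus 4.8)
The plan is to implement the four operations on top of the three given primitives by maintaining an in-place partition of the index set $\{0,\ldots,n-1\}$ into a ``deleted'' prefix and a ``live'' suffix, exactly in the style of the lazy-initialized array $a$ from Algorithm~\ref{alg:shuffle}, augmented with an inverse index $b$. First I would set up the data structure: during preprocessing we obtain $n=|Q(D)|$ by the constant-time count, and we conceptually initialize lookup tables $a$ and $b$ with $a[j]=b[j]=j$ and a counter $i=0$; by lazy array initialization these tables start empty and incur only constant overhead per access in the DRAM model. Throughout, the invariant is that $a[i],\ldots,a[n-1]$ is precisely the multiset of indices (under the random-access enumeration order of the underlying problem) of elements still in the set, that $a[0],\ldots,a[i-1]$ are the deleted ones, and that $b$ is the functional inverse of $a$ on $\{0,\ldots,n-1\}$ (using the lazy convention $a[j]=b[j]=j$ for untouched cells).

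The four operations are then immediate. For \emph{counting} we return $n-i$ in constant time. For \emph{sampling} we draw $k$ uniformly from $\{i,\ldots,n-1\}$ and return the element at index $a[k]$ via one call to the random-access routine; since $a[i],\ldots,a[n-1]$ lists exactly the live indices without repetition, this is a uniform sample from the current set, costing $O(t)$. For \emph{membership testing} of an element $e$ we invoke inverted-access: if it returns ``not-a-member'' we answer False, and if it returns an index $m$ we answer True iff $b[m]\ge i$ (so that $e$ has not been deleted); this is $O(t)$. For \emph{deletion} of $e$ we again call inverted-access to get its index $m$, set $k=b[m]$ (the current position of $m$ in the array $a$), swap $a[k]$ with $a[i]$ while updating $b$ accordingly (i.e.\ $b[a[k]]\leftarrow k$ and $b[a[i]]\leftarrow i$ after the swap), and increment $i$; this moves $m$ into the deleted prefix in $O(t)$ time. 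Correctness of each step follows directly from the stated invariant, and the invariant is preserved by each operation by the same inductive argument used in the proof of Proposition~\ref{prop:random_perm}, now carried out with the explicit inverse index $b$ maintained alongside $a$.

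The one point that needs a little care — and the closest thing to an obstacle — is making sure the lazy-initialization trick is applied consistently to \emph{both} $a$ and $b$ so that every read of an untouched cell correctly returns its own index in constant time, and that the swap-and-update sequence in deletion touches only $O(1)$ cells and leaves $b$ a genuine inverse of $a$; this is routine but is exactly where an implementation bug would hide. It is also worth noting that testing and deletion as described tolerate being called on elements not currently in the set (they simply report False / are no-ops once $e$ is already in the deleted prefix), which is what Algorithm~\ref{alg:union} requires. With these checks in place, all four operations run in $O(t)$ time, which proves the lemma.
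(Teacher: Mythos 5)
Your construction coincides with the paper's own proof of Lemma~\ref{lemma:deletion}: the same lazily initialized array $a$ split into a deleted prefix $a[0],\ldots,a[i-1]$ and a live suffix $a[i],\ldots,a[n-1]$, the same reverse index $b$, and the same realizations of counting ($n-i$), sampling (uniform $k\in\{i,\ldots,n-1\}$ plus one random-access call), testing (inverted-access), and deletion (inverted-access, locate the position via $b$, swap with $a[i]$, increment $i$). The only deviation is that your membership test additionally checks $b[m]\ge i$ so that deleted elements report ``False,'' which is a harmless (and arguably slightly more careful) refinement of the paper's version rather than a different approach.
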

 
Since free-connex CQs admit efficient algorithms for counting, random-access and inverted-access, we can apply this result to UCQs.
Combining \Cref{thm:cqops} with \Cref{lemma:union} and
\Cref{lemma:deletion}, we have an algorithm for answering UCQs with
random order.

\begin{theorem}
Let $Q$ be a union of free-connex CQs. There exists a random-permutation algorithm for answering $Q$ that uses linear preprocessing and expected logarithmic delay.
\end{theorem}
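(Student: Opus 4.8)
The plan is to chain together the three results already established in the excerpt. The final theorem asks for a random-permutation algorithm with linear preprocessing and expected logarithmic delay for an arbitrary union $Q = Q_1 \cup \cdots \cup Q_m$ of free-connex CQs, so the natural route is: (i) invoke \Cref{thm:cqops} to equip each $Q_i$ with a linear-time-built data structure supporting constant-time counting, logarithmic-time random-access, and constant-time inverted-access; (ii) apply \Cref{lemma:deletion} to convert each such structure into one for the set $Q_i(D)$ supporting sampling, membership testing, deletion, and counting, each in time $O(\log |D|)$; and (iii) feed the $k = m$ resulting sets into \Cref{alg:union} and invoke \Cref{lemma:union} to obtain a random-order enumeration of $\bigcup_{i=1}^m Q_i(D)$ with expected $O(k \cdot t)$ delay, where $t = O(\log |D|)$.

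First I would observe that since $m$ is a fixed constant in data complexity, the preprocessing cost is the sum of $m$ linear-time preprocessing phases plus the $m$ answer-count computations, which is $O(|D|)$ overall; also $k = m = O(1)$, so the expected delay $O(k \cdot t)$ from \Cref{lemma:union} is simply $O(\log |D|)$. Next I would verify that \Cref{lemma:union}'s output is genuinely a uniformly random permutation of the \emph{union} (the set-theoretic union, with each answer appearing once), not of the multiset: this is exactly the content of the paragraph preceding \Cref{lemma:deletion} in the excerpt, where the owner mechanism in lines~\ref{line:owner}--\ref{line:owner_if} of \Cref{alg:union} corrects the bias of answers lying in several $Q_i(D)$. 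I should spell out why the overall distribution is $\frac{1}{n!}$ over orderings of $Q(D)$: at every step each not-yet-printed answer of the union is equally likely to be the next one printed, which is the defining property of a random-permutation algorithm from Section~\ref{sec:classes}.

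There is essentially no hard obstacle here — the theorem is a corollary of the three lemmas/theorems assembled in order — but the one point deserving care is the \emph{interface matching} between the lemmas. In particular, \Cref{lemma:deletion} requires the enumeration problem for $Q_i$ to support inverted-access, and \Cref{lemma:union} requires membership testing on the (shrinking) sets; I need the inverted-access of \Cref{thm:cqops} to remain correct after deletions, which holds because the deletion mechanism in the proof of \Cref{lemma:deletion} only manipulates the auxiliary arrays $a, b$ and the counter $i$, leaving the underlying random-access/inverted-access structure untouched while re-interpreting which indices are ``live.'' I would also remark that membership testing for $Q_i(D)$ via inverted-access is sound because inverted-access returns ``not-a-member'' precisely on non-answers, so \Cref{alg:union}'s $\providers$ set is computed correctly. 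With these checks in place the theorem follows, and I would close by noting that the logarithmic factor comes solely from the binary search inside random-access, so the delay is $O(\log|D|)$ in expectation (and the number of iterations is at most twice $|Q(D)|$, giving amortized constant iteration count as well).
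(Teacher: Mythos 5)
Your proposal is correct and follows essentially the same route as the paper, which obtains the theorem precisely by combining \Cref{thm:cqops} (counting, random-access, and inverted-access for each free-connex CQ after linear preprocessing) with \Cref{lemma:deletion} and \Cref{lemma:union}, using that the number of CQs in the union is constant under data complexity. Your additional checks on interface matching and on uniformity of the owner-based rejection mechanism are consistent with the paper's discussion surrounding \Cref{alg:union} and add no divergence from its argument.
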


\subsection{UCQs that Allow for Random-Access}\label{subsec:UCQsRandomAccess}

We now identify a class of UCQs that allow for
random-access with polylogarithmic access time and linear
preprocessing
(and hence, via Theorem~\ref{thm:poly_RA_is_RP} also allow for random-order enumeration with linear preprocessing and polylogarithmic delay).

Assume two sets $A$ and $A'$ such that $A'\subseteq A$.  An order over
$A'$ is \e{compatible} with an order over $A$ if the former is a
subsequence of the latter, that is, the precedence relationship of the
elements of $A'$ is the same in both orders.  A \emph{mutually
  compatible} UCQ, or \emph{mc-UCQ} for short, is a UCQ
$Q=Q_1\cup\cdots\cup Q_m$
such that for all $\emptyset\neq I\subseteq [1,m]$, the CQ
$Q_{I}\deff \bigcap_{i\in I}Q_i$ is free-connex and, moreover, there
are $\RA$-algorithms $\Algo{A}_I$ for $Q_I$ that:
\e{(a)} 
provide inverted access in logarithmic time;
\e{(b)}
are \emph{compatible} in the sense that on every database $D$
  and $\emptyset\neq I\subseteq [1,m]$ we have that $\Algo{A}_{I}$ is
  compatible with $\Algo{A}_{\set{\min(I)}}$.
 We can prove the following.

  \def\thmRAforUCQs{
Every mc-UCQ $Q$ belongs to $\RAlogsquare$ and to $\RPlogsquare$.
    }
    \begin{theorem}\label{thm:RAforUCQs}
      \thmRAforUCQs
\end{theorem}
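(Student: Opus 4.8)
The plan is to use inclusion–exclusion over the family $\{Q_I : \emptyset \neq I \subseteq [1,m]\}$ to assemble a random-access structure for $Q = Q_1 \cup \cdots \cup Q_m$ out of the compatible random-access structures $\Algo{A}_I$ guaranteed by the mc-UCQ definition. The guiding idea is this: for each CQ $Q_i$, the answers "genuinely owned by $Q_i$" should be those answers of $Q_i$ that do not appear in any $Q_j$ with $j < i$; the union $Q(D)$ is the disjoint union of these owner-sets over $i = 1, \ldots, m$. If we can (a) count the size of each owner-set, and (b) provide random-access within each owner-set in polylogarithmic time, then we concatenate the owner-sets in order of $i$ and get random-access to $Q(D)$: given a global index $j$, binary-search / scan (over the $m$ owner-set counts, $m$ being a constant) to find which owner-set $j$ falls into, subtract the preceding counts, and invoke the owner-set's local access routine. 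Counting $|Q(D)|$ is then just the sum of the owner-set sizes, which gives the auxiliary counting needed by \Cref{thm:poly_RA_is_RP} to upgrade $\RAlogsquare$ membership to $\RPlogsquare$.

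The technical heart is realizing (a) and (b) for a single owner-set $O_i \deff Q_i(D) \setminus \bigcup_{j<i} Q_j(D)$. Here is where inclusion–exclusion and the compatibility hypothesis enter. Fix $i$ and consider the enumeration order that $\Algo{A}_{\{i\}}$ imposes on $Q_i(D)$; write it $t_1, t_2, \ldots, t_{n_i}$. I want, given a local index $\ell$, to return the $\ell$-th element of $O_i$ in this induced order (the order restricted to $O_i$ is well-defined since $O_i \subseteq Q_i(D)$). The key subroutine is a "rank" function: given a position $p$ in $\Algo{A}_{\{i\}}$'s order, compute how many of $t_1, \ldots, t_p$ lie in $O_i$. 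By inclusion–exclusion, the number of $t_1,\ldots,t_p$ that belong to $\bigcup_{j<i} Q_j$ equals $\sum_{\emptyset \neq J \subseteq [1,i-1]} (-1)^{|J|+1} \cdot \#\{q \le p : t_q \in \bigcap_{j \in J} Q_j\}$, and for each such $J$ the set $\bigcap_{j\in J} Q_j = Q_J$, but we actually need $Q_{J \cup \{i\}} = \bigcap_{j \in J \cup \{i\}} Q_j$ intersected with the prefix. The compatibility hypothesis is exactly what makes this computable: $\Algo{A}_{J \cup \{i\}}$ is compatible with $\Algo{A}_{\{\min(J \cup \{i\})\}}$ — but I need compatibility with $\Algo{A}_{\{i\}}$. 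This is the first obstacle to nail down: when $\min(J \cup \{i\}) = i$ (i.e.\ $J = \emptyset$, the degenerate case) it is immediate, but for nonempty $J \subseteq [1,i-1]$ we have $\min(J\cup\{i\}) < i$, so compatibility is stated against a different base order. I expect the resolution is that for the owner-set of $Q_i$ we only ever need the sets $Q_{J \cup \{i\}}$ with $J \subseteq [1,i-1]$, and what we really want to count is, among the first $p$ answers of $\Algo{A}_{\{i\}}$, how many lie in $Q_J$ for $J \ni i$; and since $\Algo{A}_{\{i\}} = \Algo{A}_{\{\min(\{i\})\}}$, compatibility of $\Algo{A}_{\{i\}}$ with itself is trivial, while compatibility of $\Algo{A}_{J\cup\{i\}}$ with $\Algo{A}_{\{i\}}$ for $i = \min(J \cup \{i\})$ would require $i \in J\cup\{i\}$ to be the minimum — which forces $J = \emptyset$. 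So the genuine mechanism must instead go through the common base $\Algo{A}_{\{1\}}$: every $\Algo{A}_I$ with $1 \in I$ is compatible with $\Algo{A}_{\{1\}}$, and one reorganizes the whole construction so that $Q_1(D)$'s order is the master order and each $Q_i(D)$ for which $1 \in$ the relevant intersection inherits it. I would work out the precise indexing so that all the prefix-counts are taken with respect to one fixed master enumeration order, turning each "$\#\{q \le p : t_q \in Q_J\}$" into an inverted-access-plus-binary-search computation against $\Algo{A}_{J}$'s order aligned with the master.

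Concretely, the rank subroutine for a fixed $J$ works as follows: given master-order position $p$ with answer $t_p$, we want $\#\{q \le p : t_q \in Q_J\}$; since $\Algo{A}_J$'s order is a subsequence of the master order, this count equals the number of $Q_J$-answers whose master-position is $\le p$, which — because the orders are compatible — equals the $\Algo{A}_J$-index of the largest $Q_J$-answer with master-position $\le p$. Finding that requires a binary search over $\Algo{A}_J$'s index range, at each step doing a random-access into $\Algo{A}_J$ (cost $\log$) to get a candidate answer, then an inverted-access into the master structure $\Algo{A}_{\{1\}}$ (cost $\log$, by hypothesis (a)) to get its master-position, and comparing to $p$. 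That is $O(\log^2)$ per rank query. Given the rank subroutine, inclusion–exclusion over the constantly-many $J \subseteq [1,i-1]$ gives, in $O(\log^2)$ total, the number of owner-answers of $Q_i$ among the first $p$; then "access the $\ell$-th owner-answer of $Q_i$" is another binary search over $p$, invoking the rank subroutine each step — one more $\log$ factor, giving $O(\log^3)$ naively. To hit the claimed $\log^2$ I expect one needs a sharper argument: probably the binary search for $p$ can reuse work, or one observes that the rank query itself can be answered by a single binary search (not nested) because all the intersection-orders refine the master order, so one binary-searches over master positions while maintaining pointers into each $\Algo{A}_J$. Getting the exponent down to $2$ is the main obstacle; the correctness and the $\polylog$ bound are routine once the master-order bookkeeping is set up. Finally, $|Q(D)| = \sum_{i=1}^m |O_i|$ where $|O_i|$ is the rank-of-the-last-element, computed in preprocessing in $O(\log^2)$ each (a constant number), so the preprocessing stays linear; then \Cref{thm:poly_RA_is_RP} gives $Q \in \RPlogsquare$.
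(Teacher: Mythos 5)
Your owner-set decomposition ($O_i \deff Q_i(D)\setminus\bigcup_{j<i}Q_j(D)$, concatenated) is a legitimate alternative skeleton to the paper's construction, and your ``rank via binary search with access into the intersection plus inverted access into the base order'' subroutine is essentially the paper's routine $T$.\Largest, implemented the same way in $O(\log^2|D|)$. But the proposal has two genuine gaps. First, the compatibility bookkeeping is not resolved. You correctly notice that with ownership by \emph{minimal} index, the inclusion--exclusion for $O_i$ needs prefix counts of $Q_{J\cup\{i\}}$ with $\emptyset\neq J\subseteq[1,i-1]$ taken inside $\Algo{A}_{\{i\}}$'s order, while the mc-UCQ definition only makes $\Algo{A}_{J\cup\{i\}}$ compatible with $\Algo{A}_{\{\min(J)\}}$, where $\min(J)<i$. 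Your proposed repair --- promote $\Algo{A}_{\{1\}}$ to a master order --- does not work: answers in $Q_i(D)\setminus Q_1(D)$ have no position in that order at all, and the definition guarantees no compatibility between $\Algo{A}_{\{i\}}$ (or any $\Algo{A}_K$ with $1\notin K$) and $\Algo{A}_{\{1\}}$. The repair that matches the hypothesis is to flip the orientation of ownership: charge each answer to the CQ of \emph{maximal} index containing it (equivalently, peel off $S_1$ against $S_2\cup\cdots\cup S_m$ and recurse, as the paper does). Then the only intersections ever needed are $Q_{\{\ell\}\cup I}$ with $I\subseteq[\ell+1,m]$, whose minimum is $\ell$, so each is compatible with $\Algo{A}_{\{\ell\}}$ --- this is precisely why the definition is phrased with $\min(I)$.

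Second, the claimed bound is not established. In your scheme, accessing the $\ell$-th element of $O_i$ requires inverting the rank function by an outer binary search over positions of $\Algo{A}_{\{i\}}$, each step costing $O(\log^2|D|)$ for the inclusion--exclusion of \Largest-type queries, i.e.\ $O(\log^3|D|)$; you acknowledge this and leave the reduction to $\log^2$ open, but that reduction is exactly the point of the theorem. The paper avoids the outer inversion altogether by fixing the union's enumeration order to be the one produced by iterating the Durand--Strozecki union trick (\Cref{alg:DS-UnionTrick}): for $j\le|A|$ the $j$-th answer is obtained directly from $A$.\Access($j$) --- it is either $a_j$ itself or the element of $B$ whose index is $|\{a_1,\ldots,a_j\}\cap B|$, computed by inclusion--exclusion over the ranks $T_{1,I}$.\Rank($T_{1,I}$.\Largest($a_j$)) --- and for $j>|A|$ a constant-time index shift into $B$ suffices. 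Hence the only binary searches are the constantly many \Largest{} computations (for fixed $m$), each $O(\log^2|D|)$, which gives membership in $\RAlogsquare$; membership in $\RPlogsquare$ then follows from \Cref{thm:poly_RA_is_RP}. The missing idea in your write-up is thus the choice of a union order in which a global index resolves directly through the first set's access routine, rather than through inverting a rank function.
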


An example of an mc-UCQ is 
$Q_7^S\cup Q_7^C$ used in the experiments in \Cref{sec:experiments}. This UCQ is comprised of two acyclic CQs with the same structure, except they use different relations (formed by different selections applied on the same initial relations). These CQs have the following structure for $i\in\{S,C\}$:
$Q_7^i(o,c,a,b,p,s,l,m,n)\cqa$
$R(s,a),
L(o,p,s,l),
O(o,c),
B(c,b),
N^i(a,m),
M^i(b,n)$.
Applying \Cref{thm:cqops} on $Q_7^S$, $Q_7^C$ and
$Q_7^S\cap Q_7^C$, we can construct algorithms for random-access and inverted-access in a compatible order.

The remainder of this section describes the algorithm for
proving Theorem~\ref{thm:RAforUCQs}. By Theorem~\ref{thm:poly_RA_is_RP} we can focus on $\RAlogsquare$.

\subsubsection*{Random-access for unions of sets}
We start with the abstract setting of providing random-access for a
union of sets (of arbitrary elements) and then turn to the specific
setting where these sets are the results of the CQs that a given UCQ
consists of.

We build upon Durand and Strozecki's \e{union
  trick}~\cite{DBLP:conf/csl/DurandS11}, which can be described as
follows.  Assume that $A$ and $B$ are two (not necessarily disjoint)
subsets of a certain universe $U$, and for each of these sets, we have
available an algorithm that enumerates the elements of the set.
Furthermore, assume that for the set $B$ we also have available an
algorithm for testing membership in $B$. The goal is to enumerate
$A\cup B$ (and, as usual, all enumerations are without repetitions).
The pseudocode for the union trick is provided in
Algorithm~\ref{alg:DS-UnionTrick}.  Here, ``$a$ =
$A$.\algname{First()}'' means that the enumeration algorithm for $A$
is started and $a$ shall be the first output element. Similarly, ``$a$
= $A$.\algname{Next}()'' means that the next output element of the
enumeration algorithm for $A$ is produced and that $a$ shall be that
element. In case that all elements of $A$ have already been
enumerated, $A.\algname{Next}()$ will return the end-of-enumeration
message $\EOE$; and in case that $A$ is the empty set,
$A.\algname{First}()$ will return $\EOE$.

\def\output{output\xspace}

\begin{algorithm}[t]
\caption{Durand-Strozecki's Union Trick for $A\cup B$}\label{alg:DS-UnionTrick}
 \begin{algorithmic}[1]
   \State $a$ = $A.\algname{First}()$ ;  $b$ = $B.\algname{First}()$
  \While{$a \neq \EOE$}
    \If{$a \not\in B$} 
      \State{\output $a$ ; $a$ = $A.\algname{Next}()$}
    \Else
       \State{\output $b$ ; $b$ = $B.\algname{Next}()$ ; $a$ = $A.\algname{Next}()$}
    \EndIf
  \EndWhile
  \While{$b\neq\EOE$}
    \text{$\{$ \output $b$ ; $b$ = $B.\algname{Next}()$ $\}$}
  \EndWhile
\end{algorithmic}
\end{algorithm}

\begin{algorithm}[htbp]
\caption{Random-access for $A\cup B$}\label{alg:AccessUnionOfTwo}
 \begin{algorithmic}[1]
  \Function{($A\cup B$).\Access}{$j$}
   \State $a$ = $A$.\Access($j$)
   \If{$a \neq \ErrorMessage$}
     \If{$a\not\in B$}
       \State \output $a$
     \Else
        \State 
         $k$ = $(A\cap B)$.\Rank($a$) ; \label{line:computek}
         $b$ = $B$.\Access($k$) ;
         \output $b$
     \EndIf
   \Else
     \ \text{
       $\ell$ = $j - |A| + |A\cap B|$ ; 
       $b$ = $B$.\Access($\ell$) ;
       output $b$}
   \EndIf  
  \EndFunction
 \end{algorithmic}
\end{algorithm}

This algorithm starts by enumerating all elements of $A$ in the same
order as the enumeration algorithm for $A$, but every time it
encounters $a\in A\cap B$, it ignores this element and instead outputs
the next available element produced by the enumeration algorithm for
$B$. Once the enumeration of $A$ has terminated, the algorithm
proceeds by producing the remaining elements of $B$.  Clearly,
Algorithm~\ref{alg:DS-UnionTrick} enumerates all elements in
$A\cup B$; and the algorithm's delay is $O(d_A+d_B+t_B)$ where $d_A$
and $d_B$ are the delay of the enumeration algorithms for $A$ and $B$,
respectively, and $t_B$ is the time needed for testing membership in
$B$.

The idea is to provide random-access to the $j$th output element
produced by Algorithm~\ref{alg:DS-UnionTrick}. 
Let us write $a_1,a_2,\ldots,a_n$ and $b_1,b_2,\ldots,b_{n'}$ for the
elements of $A$ and $B$, repectively, as they are produced by the
given enumeration algorithms for $A$ and for $B$.  Let us first
consider the case where $j\leq |A|$.  Clearly, the $j$th output
element of Algorithm~\ref{alg:DS-UnionTrick} will be $a_j$ if
$a_j\not\in B$; and in case that $a_j\in B$, the $j$th output element
of Algorithm~\ref{alg:DS-UnionTrick} will be $b_k$ for the particular
number $k=|\set{a_1,\ldots,a_j}\cap B|$.  In case that $j>|A|$, the
$j$th output element of Algorithm~\ref{alg:DS-UnionTrick} will be
$b_\ell$ for $\ell= j-|A|+|A\cap B|$.

But how can we compute $k=|\set{a_1,\ldots,a_j}\cap B|$
efficiently upon input of $j$?  Following is a
sufficient condition. 
Assume we have available an algorithm that enumerates $A\cap B$, and
its enumeration order is \emph{compatible} with that of the
enumeration algorithm for $A$ in the sense defined above.
Furthermore, assume
that we have available a routine
``$(A\cap B)$.\Rank$(c)$'' that, upon input of an arbitrary
$c\in A\cap B$ returns the particular number $i$ such that $c$ is the
$i$th element produced by the enumeration algorithm for $A\cap B$. (We
say that $i$ is the \emph{rank} of $c$ in $A\cap B$.)
Then we can compute $k=|\set{a_1,\ldots,a_j}\cap B|$ by using that
$|\set{a_1,\ldots,a_j}\cap B|= (A\cap B)\text{.\Rank}(a_j)$.
This immediately leads to the random-access algorithm for $A\cup B$
whose pseudocode is given in Algorithm~\ref{alg:AccessUnionOfTwo}.

Our next goal is to generalize this to the union of $m$ sets
$S_1,\ldots,S_m$ for an arbitrary $m\geq 2$. We proceed by induction
on $m$ and have already established the basis for $m=2$. Let us now
consider the induction step from $m-1$ to $m$.  We let $A= S_1$ and
$B=S_2\cup\cdots\cup S_m$ and use Algorithm~\ref{alg:DS-UnionTrick} to
enumerate $A\cup B=S_1\cup\cdots\cup S_m$, where the routines
$B.\algname{First}()$ and $B.\algname{Next}()$ are provided by the
induction hypothesis.  We would like to use
Algorithm~\ref{alg:AccessUnionOfTwo} to provide random-access to the
$j$-th element that will be enumerated from $A\cup B$.  By assumption,
we know how to compute $|A|$ and $a=A.\Access(j)$; and by the
induction hypothesis, we already know how to compute
$b=B.\Access(j)$. What we still need in order to execute
Algorithm~\ref{alg:AccessUnionOfTwo} is a way to compute $|A\cap B|$
and a workaround with which we can replace the command
$k=(A\cap B).\Rank(a)$; recall that this command was introduced to
compute the number $k=|\set{a_1,\ldots,a_j}\cap B|$.

Computing $|A\cap B|$ for $A=S_1$ and $B=S_2\cup\cdots\cup S_m$ is
easy: we can use the inclusion-exclusion principle and obtain $|A\cap
B|=$
\[
\begin{array}{rcl}
 \displaystyle \left| \,\bigcup_{i=2}^m (S_1\cap S_i)\,\right|
& =
& \displaystyle \sum_{\emptyset\neq I\subseteq[2,m]} (-1)^{|I|+1}
\left|\, \bigcap_{i\in I}(S_1\cap S_i)\, \right| .
\end{array}
\]
Thus, we can compute $|A\cap B|$ provided that for each $I\subseteq [2,m]$, we can compute the cardinality $|T_{1,I}|$ of the set
$T_{1,I} 
 \deff 
 S_1 \cap \bigcap_{i\in I} S_i$.

Let us now discuss how to compute $k=|\set{a_1,\ldots,a_j}\cap B|$.
Again using the inclusion-exclusion principle, we obtain that
\[|\set{a_1,\ldots,a_j}\cap B| =
 \displaystyle \sum_{\emptyset\neq I\subseteq[2,m]} (-1)^{|I|+1}
 \left|\, \bigcap_{i\in I}(\set{a_1,\ldots,a_j}\cap S_i)\, \right| .
\]
We can compute this number if for each
$\emptyset\neq I\subseteq [2,m]$ we can compute 
$n_{1,I} \ \deff \ \left|\,\set{a_1,\ldots,a_j} \cap \bigcap_{i\in
    I}S_i \,\right|$.  To compute $n_{1,I}$, assume we have
available an algorithm that enumerates $T_{1,I}$, and its enumeration
order is compatible with that of the algorithm for
$A=S_1$.  Furthermore, assume we have available a routine
$T_{1,I}.\Rank(c)$ that, given $c\in T_{1,I}$, returns the particular
$i$ such that $c$ is the $i$th element produced by the enumeration
algorithm for $T_{1,I}$.  In addition, assume that we have available a
routine $T_{1,I}.\Largest(a)$ that, given $a\in S_1$, returns the
particular $c\in T_{1,I}$ such that $c$ is the largest element of
$T_{1,I}$ 
that is less than or equal to $a$ 
in the enumeration order of $S_1$.
Then, we can compute $n_{1,I}$ by using that
$n_{1,I}=T_{1,I}.\Rank(b)$ for $b\deff T_{1,I}.\Largest(a_j)$.  In
summary, we can replace the first command in line~7 of
Algorithm~\ref{alg:AccessUnionOfTwo} by Algorithm~\ref{alg:Computek}.

\begin{algorithm}[t]
\caption{Workaround for line~7 of Algorithm~\ref{alg:AccessUnionOfTwo} for $S_1\cup\cdots\cup S_m$. We assume that $a\in S_1$.}\label{alg:Computek}
 \begin{algorithmic}[1]
  \Procedure{Compute-$k$ }{$a$}
   \For{each $I$ with $\emptyset\neq I\subseteq[2,m]$}
      \State $b = T_{1,I}$.\Largest($a$) ; $n_{1,I} = T_{1,I}$.\Rank($b$)
   \EndFor
   \State $k= \sum_{\emptyset\neq I\subseteq [2,m]}(-1)^{|I|+1}n_{1,I}$   
   \ ; \ output $k$
  \EndProcedure
 \end{algorithmic}
\end{algorithm}

To recap, we obtain the following for
$S_1\cup\cdots\cup S_m$. For $\ell\in [1,m]$ and each $I$ with
$\emptyset\neq I\subseteq [\ell{+}1,m]$, let
$T_{\ell,I}\deff S_{\ell}\cap\bigcap_{i\in I}S_i$.  Assume that for
every $\ell\in[1,m]$ we have available an enumeration algorithm for
$S_\ell$, and for every $\emptyset\neq I\subseteq [\ell{+}1,m]$ we
have available an enumeration algorithm for $T_{\ell,I}$, so that all
of the following hold.
\begin{enumerate}
\item The enumeration for $T_{\ell,I}$ is compatible with that for
  $S_\ell$.
\item After having carried out the preprocessing phase for $S_\ell$:
  \e{(a)} we know its cardinality $|S_\ell|$, \e{(b)} given $j$, the
  routine $S_\ell.\Access(j)$ returns in time $t_{\textit{acc}}$ the
  $j$th output element of the enumeration algorithm for $S_\ell$,
  \e{and (c)} given $u$, it takes time $t_{\textit{test}}$ to test whether $u\in S_\ell$.
\item After having carried out the preprocessing phase for
  $T_{\ell,I}$: \e{(a)} we know its cardinality $|T_{\ell,I}|$,
  \e{(b)} given $c\in T_{\ell,I}$, the rank $T_{\ell,I}.\Rank(c)$ can
  be computed in time $t_{\textit{inv-acc}}$, \e{and (c)} given
  $a\in S_\ell$, it takes time $t_{\textit{lar}}$ to return largest
  element of $T_{\ell,I}$ that does not succeed $a$ in the
  enumeration order of $S_\ell$.
\end{enumerate}
Then, after having carried out the preprocessing phases for $S_\ell$ and $T_{\ell,I}$ for all $\ell\in[1,m]$ and all $\emptyset\neq I\subseteq[\ell{+}1,m]$, we can provide random-access to 
$S_1\cup\cdots\cup S_m$ in such a way that upon input of an arbitrary number $j$ it takes time 
\[
  O(\, m {\cdot} t_{\textit{acc}} + m^2 {\cdot} t_{\textit{test}} + 2^m {\cdot} t_{\textit{inv-acc}} + 2^m {\cdot }t_{\textit{lar}} \,)
\]
to output the $j$-th element that is returned by the enumeration algorithm for $S_1\cup\cdots\cup S_m$ obtained by an iterated application of Algorithm~\ref{alg:DS-UnionTrick} (starting with $A=S_1$ and $B=S_2\cup\cdots\cup S_m$).

Finally, to prove Theorem~\ref{thm:RAforUCQs}, we show the algorithms
for the different components. The quadratic-logarithmic part is due to
the $t_{\textit{lar}}$ component, and we show that it suffices for $\Largest(a)$.

\section{Implementation and Experiments}\label{sec:experiments}

In this section, we present an implementation and an experimental evaluation of the random-order enumeration algorithms presented in this paper. 
Our algorithm for random-order CQ enumeration proposed in Section~\ref{sec:cqs} is denoted as 
\ouralgCQ, the algorithm for UCQs from Section~\ref{sec:expected} is denoted \ouralgUCQ, and the algorithm for mc-UCQs from section \ref{subsec:UCQsRandomAccess} is denoted \mcUCQAlg{}.
The goal of our experiments is twofold.
First, we examine the practical execution cost of \ouralgCQ compared to the alternative of repeatedly applying a state-of-the-art sampling algorithm (without replacement) ~\cite{zhao2018random} and removing duplicates.
Second, we examine the empirical overhead of \ouralgUCQ and \mcUCQAlg{} compared to the cumulative cost of running each \ouralgCQ for each CQ separately.
We describe our implementation of \ouralgCQ,  \ouralgUCQ, and \mcUCQAlg{} in Section~\ref{sec:impl}, the experimental setup in
Section~\ref{sec:setup}, and the results in Section~\ref{sec:results}.

\subsection{Implementation}\label{sec:impl}

\ouralgCQ, \ouralgUCQ, and \mcUCQAlg{} are implemented in c++14 using the
standard library (STL), and mainly the unordered STL containers. For
instance, we use an unordered map to partition a table into the
buckets of Algorithms~\ref{alg:preprocess}
and~\ref{alg:access}. Other than STL, the implementation uses Boost to
hash complex types such as vectors.

The \ouralgCQ implementation uses a query compiler that generates c++
code for the specific CQ and database schema. Specifically, the code
is generated via templates, which are files of c++ code with
\e{placeholders}. These placeholders stand for query-specific
parameters such as the relation names, the attributes and their types,
the tree structure of the query, and its head variables. Once these
placeholders are filled in and function calls are ordered according to
the tree structure, the result is valid c++ code.

As described in Algorithm~\ref{alg:union}, \ouralgUCQ uses CQ
enumerators as black boxes with an interface of four methods:
\emph{count}, \emph{sample}, \emph{test}, and \emph{delete}.
Therefore, in addition to the counting and sampling provided by
\ouralgCQ, we implemented deletion and testing as explained
in Section~\ref{sec:expected}. The latter two require an inverted-access, which we described in Algorithm~\ref{alg:index}. The
inverted-access is compiled only when needed as part of a UCQ
enumeration, as it requires non-negligible preprocessing (to
support line~\ref{line:find_t} of Algorithm~\ref{alg:index}). Hence,
\ouralgCQ meets the four requirements when inverted-access is
activated and the shuffler capable of deletion is used. Other than
that, our implementation of Algorithm~\ref{alg:union} is fairly
straightforward.

\mcUCQAlg{} uses the underlying index \ouralgCQ{} for ran\-dom-access, testing, and inverted-access of all CQs, as well as all intersection CQs. 
We created \mcUCQAlg by using the shuffler described in \Cref{alg:shuffle} on the random-access for mcUCQs described in \Cref{subsec:UCQsRandomAccess}. Doing so requires knowing the number of answers after linear time preprocessing. The cardinality of a mcUCQ $Q_1(I) \cup \ldots \cup Q_m(I)$ is simple to compute recursively via the formula $|Q_1(I)| + |Q_2(I) \cup \cdots \cup Q_m(I)| - |Q_1(I) \cap (Q_2(I) \cup \cdots \cup Q_m(I))|$, for which we have all elements after linear time preprocessing. 
A minor difference between the implementation and the definition in \Cref{subsec:UCQsRandomAccess} instead of computing the largest answer less than or equal to our current answer and then applying inverted-access on it, we compute that index directly (using the same binary-search concept as in the proof of \Cref{thm:RAforUCQs}).

\subsection{Experimental Setup}\label{sec:setup}
We now describe the setup of our experimental study.

\subsubsection*{Algorithms} \label{sec:algs}
To the best of our knowledge, this paper is the first to suggest a
provably uniform random-order algorithm for CQ enumeration. Therefore,
we compare our \ouralgCQ to a sampling algorithm by Zhao et
al.~\cite{zhao2018random} via an implementation from their public
repository. Their algorithm generates a uniform sample, and we naively
transform it into a sampling-without-replacement algorithm by
duplicate elimination (i.e., rejecting previously encountered
answers).\footnote{The application of this approach as an enumeration
algorithm has also been discussed by Capelli and Strozecki~\cite{DBLP:journals/dam/CapelliS19}.}
Zhao et al.~\cite{zhao2018random} suggest four different
ways to initialize their algorithm, denoted \e{RS}, \e{EO}, \e{OE},
and \e{EW}. We compare our algorithm to EW as it consistently
outperformed all other methods in our experiments (see
Section~\ref{sec:other_methods} in the Appendix). We denote this
variant by \zhaoalg{EW}. This sampling algorithm is also implemented
in c++14. Hence, we consider four algorithms: \ouralgCQ,
\zhaoalg{EW}, \ouralgUCQ, and \mcUCQAlg{}.

\subsubsection*{Dataset}\label{sec:dataset}
We used the TPC-H benchmark as the database for the experiments. We
generated a database using the TPC-H \emph{dbgen} tool with a scale
factor of $\mathit{sf}=5$. The database has been instantiated once in
memory, and all experiments use the exact same database.

\subsubsection*{Queries} \label{sec:queries} 
We compare our \ouralgCQ
to \zhaoalg{EW} using the six free-connex CQs on which \zhaoalg{EW} is
implemented in the online repository. These are full-join
(projection-free) CQs over the TPC-H schema.
For lack of benchmarks, we phrased UCQs that we believed would form a
natural extension to the TPC-H queries. Also, in $Q_3$, $Q_7$, $Q_9$,
and $Q_{10}$, we added attributes from the \e{LineItem} relation to
the query in order to achieve an equivalence between set semantics and
bag semantics. The full description of our queries can be found in the
Appendix (Section~\ref{sec:appendix_queries}). Each result is the
average over three runs, except for
Figures~\ref{fig:cq_delay_full_boxplot}
and~\ref{fig:cq_delay_half_boxplot} that show a single run.

\subsubsection*{Hardware and system} \label{sec:hardware} The
experiments were executed on an Intel(R) Xeon(R) CPU 2.50GHz machine
with 768KB L1 cache, 3MB L2 cache, 30MiB L3 cache, and 496 GB of RAM,
running Ubuntu 16.04.01 LTS.  Code compilations used the \textsf{O3}
flag and no other optimization flag.

  {
    \def\chartwidth {0.25\textwidth}
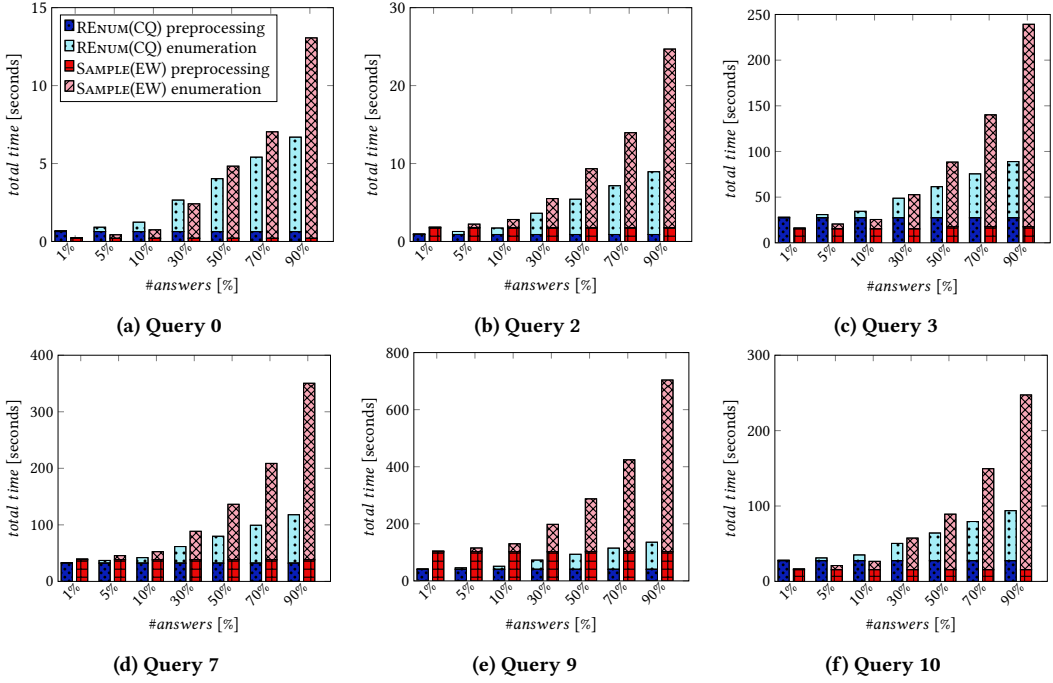
\begin{figure*}[t]
\centering
     \begin{subfigure}[b]{\chartwidth}
          \centering
          \resizebox{\linewidth}{!}{%
\begin{tikzpicture}[
  every axis/.style={ 
    ybar stacked,
    ymin = 0,ymax = 15,
    symbolic x coords={1\%, 5\%, 10\%, 30\%, 50\%, 70\%, 90\%},
    xtick={1\%, 5\%, 10\%, 30\%, 50\%, 70\%, 90\%},
    bar width=8pt,
    ylabel near ticks,
    xlabel near ticks,
    xlabel={$\#answers$ [\%]},
  ylabel={$total\;time$ [seconds}],
  label style={font=\LARGE},
  tick label style={font=\LARGE},
  x tick label style={rotate=45,anchor=east}
  }
]

\begin{axis}[bar shift=-10pt,hide axis,legend pos=north west,legend style={font=\LARGE},legend cell align={left}]
\addplot+[black, fill=dark_blue, postaction={pattern=crosshatch dots}] coordinates
{(1\%,0.63365) (5\%,0.63365) (10\%,0.63365) (30\%,0.63365) (50\%,0.63365) (70\%,0.63365) (90\%,0.63365)};
\addlegendentry{\ouralgCQ{} preprocessing}

\addplot+[black, fill=light_blue, postaction={pattern=dots}] coordinates
{(1\%,0.05733) (5\%,0.28) (10\%,0.603) (30\%,2.021667) (50\%,3.3933) (70\%,4.77467) (90\%, 6.05867)};
\addlegendentry{\ouralgCQ{} enumeration}

\addplot+[black, fill=dark_red, postaction={pattern=horizontal lines}] coordinates {(1\%,0)};
\addlegendentry{\zhaoalg{EW} preprocessing}

\addplot+[black, fill=light_red, postaction={pattern=north east lines}] coordinates {(1\%,0)};
\addlegendentry{\zhaoalg{EW} enumeration}
\end{axis}

\begin{axis}[bar shift=1pt]
\addplot+[black, fill=dark_red, postaction={pattern=grid}] coordinates
{(1\%,0.20233) (5\%,0.20233) (10\%,0.20233) (30\%,0.20233) (50\%,0.20233) (70\%,0.20233) (90\%,0.20233)};

\addplot+[black, fill=light_red, postaction={pattern=crosshatch}] coordinates
{(1\%,0.04733) (5\%,0.22833) (10\%,0.54533) (30\%, 2.21267) (50\%,4.629) (70\%,6.83367) (90\%,12.85967)};
\end{axis}

\end{tikzpicture}
}
          \caption{Query 0}
          \label{fig:q0}
     \end{subfigure}\quad
     \begin{subfigure}[b]{\chartwidth}
          \centering
          \resizebox{\linewidth}{!}{%
\begin{tikzpicture}[
  every axis/.style={ 
    ybar stacked,
    ymin = 0,ymax = 30,
    symbolic x coords={1\%, 5\%, 10\%, 30\%, 50\%, 70\%, 90\%},
    xtick={1\%, 5\%, 10\%, 30\%, 50\%, 70\%, 90\%},
    bar width=8pt,
    ylabel near ticks,
    xlabel near ticks,
    xlabel={$\#answers$ [\%]},
  ylabel={$total\;time$ [seconds}],
  label style={font=\LARGE},
  tick label style={font=\LARGE},
  x tick label style={rotate=45,anchor=east}
  }
]

\begin{axis}[bar shift=-10pt,hide axis]
\addplot+[black, fill=dark_blue, postaction={pattern=crosshatch dots}] coordinates
{(1\%,0.8944565) (5\%,0.8944565) (10\%,0.8944565) (30\%,0.8944565) (50\%,0.8944565) (70\%,0.8944565) (90\%,0.8944565)};
\addplot+[black, fill=light_blue, postaction={pattern=dots}] coordinates
{(1\%,0.076) (5\%,0.38733) (10\%,0.83167) (30\%,2.72433) (50\%,4.52567) (70\%,6.258) (90\%,8.05566)};
\end{axis}

\begin{axis}[bar shift=1pt]
\addplot+[black, fill=dark_red, postaction={pattern=grid}] coordinates
{(1\%,1.75275) (5\%,1.75275) (10\%,1.75275) (30\%,1.75275) (50\%,1.75275) (70\%,1.75275) (90\%,1.75275)};
\addplot+[black, fill=light_red, postaction={pattern=crosshatch}] coordinates
{(1\%,0.092) (5\%,0.472) (10\%,1.069) (30\%,3.7493) (50\%,7.58233) (70\%,12.20233) (90\%,22.93566)};
\end{axis}

\end{tikzpicture}
}
          \caption{Query 2}
          \label{fig:q2}
     \end{subfigure}\quad
     \begin{subfigure}[b]{\chartwidth}
          \centering
          \resizebox{\linewidth}{!}{%
\begin{tikzpicture}[
  every axis/.style={ 
    ybar stacked,
    ymin = 0,ymax = 250,
    symbolic x coords={1\%,5\%,10\%,30\%,50\%,70\%,90\%},
    xtick={1\%,5\%,10\%,30\%,50\%,70\%,90\%},
  bar width=8pt,
  ylabel near ticks,
  xlabel near ticks,
  xlabel={$\#answers$ [\%]},
  ylabel={$total\;time$ [seconds}],
  label style={font=\LARGE},
  tick label style={font=\LARGE}
  ,x tick label style={rotate=45,anchor=east}
  },
]

\begin{axis}[bar shift=-10pt,hide axis]
\addplot+[black, fill=dark_blue, postaction={pattern=crosshatch dots}] coordinates
{(1\%,27.35423) (5\%,27.35423) (10\%,27.35423) (30\%,27.35423) (50\%,27.35423) (70\%,27.35423) (90\%,27.35423)};
\addplot+[black, fill=light_blue, postaction={pattern=dots}] coordinates
{(1\%,0.675) (5\%,3.542) (10\%,7.134) (30\%,21.45467) (50\%,34.089) (70\%,48.14733) (90\%,61.551)}; %
\end{axis}

\begin{axis}[bar shift=1pt]
\addplot+[black, fill=dark_red, postaction={pattern=grid}] coordinates
{(1\%,15.18867) (5\%,15.18867) (10\%,15.18867) (30\%,15.18867) (50\%,17.517) (70\%,17.517) (90\%,17.517)};
\addplot+[black, fill=light_red, postaction={pattern=crosshatch}] coordinates
{(1\%,1.08033) (5\%,5.389) (10\%,10.34467) (30\%,37.47967) (50\%,70.85967) (70\%,122.64067) (90\%,221.76033)};
\end{axis}

\end{tikzpicture}
}
          \caption{Query 3}
          \label{fig:q3}
        \end{subfigure}
        
     \begin{subfigure}[b]{\chartwidth}
          \centering
          \resizebox{\linewidth}{!}{
\begin{tikzpicture}[
  every axis/.style={ 
    ybar stacked,
    ymin = 0,ymax = 400,
    symbolic x coords={1\%,5\%,10\%,30\%,50\%,70\%,90\%}, 
    xtick={1\%,5\%,10\%,30\%,50\%,70\%,90\%},
  bar width=8pt,
  ylabel near ticks,
  xlabel near ticks,
  xlabel={$\#answers$ [\%]},
  ylabel={$total\;time$ [seconds}],
  label style={font=\LARGE},
  tick label style={font=\LARGE}
  ,x tick label style={rotate=45,anchor=east}
  },
]

\begin{axis}[bar shift=-10pt,hide axis]
\addplot+[black, fill=dark_blue, postaction={pattern=crosshatch dots}] coordinates
{(1\%,32.285) (5\%,32.285) (10\%,32.285) (30\%,32.285) (50\%,32.285) (70\%,32.285) (90\%,32.285)}; %
\addplot+[black, fill=light_blue, postaction={pattern=dots}] coordinates
{(1\%,0.88867) (5\%,4.805) (10\%,9.7533) (30\%,29.3493) (50\%,47.556) (70\%,67.004667) (90\%,85.63367)};
\end{axis}

\begin{axis}[bar shift=1pt]
\addplot+[black, fill=dark_red, postaction={pattern=grid}] coordinates
{(1\%,38.23233) (5\%,38.23233) (10\%,38.23233) (30\%,38.23233) (50\%,38.23233) (70\%,38.23233) (90\%,38.23233)};
\addplot+[black, fill=light_red, postaction={pattern=crosshatch}] coordinates
{(1\%,1.48867) (5\%,7.38067) (10\%,14.332) (30\%,50.40167) (50\%,98.1633) (70\%,170.45567) (90\%,312.18067)};
\end{axis}

\end{tikzpicture}
}
          \caption{Query 7}
          \label{fig:q7}
     \end{subfigure}\quad
     \begin{subfigure}[b]{\chartwidth}
          \centering
          \resizebox{\linewidth}{!}{%
\begin{tikzpicture}[
  every axis/.style={ 
    ybar stacked,
    ymin = 0,ymax = 800,
    symbolic x coords={1\%,5\%,10\%,30\%,50\%,70\%,90\%}, 
    xtick={1\%,5\%,10\%,30\%,50\%,70\%,90\%},
  bar width=8pt,
  ylabel near ticks,
  xlabel near ticks,
  xlabel={$\#answers$ [\%]},
  ylabel={$total\;time$ [seconds}],
  label style={font=\LARGE},
  tick label style={font=\LARGE}
  ,x tick label style={rotate=45,anchor=east}
  },
]

\begin{axis}[bar shift=-10pt,hide axis]
\addplot+[black, fill=dark_blue, postaction={pattern=crosshatch dots}] coordinates
{(1\%,40.5219) (5\%,40.5219) (10\%,40.5219) (30\%,40.5219) (50\%,40.5219) (70\%,40.5219) (90\%,40.5219)};
\addplot+[black, fill=light_blue, postaction={pattern=dots}] coordinates
{(1\%,1.010667) (5\%,5.28833) (10\%,10.68567) (30\%,32.36533) (50\%,52.718) (70\%,74.29467) (90\%,94.95167)};
\end{axis}

\begin{axis}[bar shift=1pt]
\addplot+[black, fill=dark_red, postaction={pattern=grid}] coordinates
{(1\%,101.901) (5\%,101.901) (10\%,101.901) (30\%,101.901) (50\%,101.901) (70\%,101.901) (90\%,101.901)};
\addplot+[black, fill=light_red, postaction={pattern=crosshatch}] coordinates
{(1\%,2.72066) (5\%,13.46667) (10\%,28.16267) (30\%,96.23133) (50\%,185.639) (70\%,322.401) (90\%,601.94833)};
\end{axis}

\end{tikzpicture}
}
          \caption{Query 9}
          \label{fig:q9}
     \end{subfigure}\quad
     \begin{subfigure}[b]{\chartwidth}
          \centering
          \resizebox{\linewidth}{!}{
\begin{tikzpicture}[
  every axis/.style={ 
    ybar stacked,
    ymin = 0,ymax = 300,
    symbolic x coords={1\%,5\%,10\%,30\%,50\%,70\%,90\%},
    xtick={1\%,5\%,10\%,30\%,50\%,70\%,90\%},
  bar width=8pt,
  ylabel near ticks,
  xlabel near ticks,
  xlabel={$\#answers$ [\%]},
  ylabel={$total\;time$ [seconds}],
  label style={font=\LARGE},
  tick label style={font=\LARGE}
  ,x tick label style={rotate=45,anchor=east}
  },
]

\begin{axis}[bar shift=-10pt,hide axis]
\addplot+[black, fill=dark_blue, postaction={pattern=crosshatch dots}] coordinates
{(1\%,27.4471) (5\%,27.4471) (10\%,27.4471) (30\%,27.4471) (50\%,27.4471) (70\%,27.4471) (90\%,27.4471)};
\addplot+[black, fill=light_blue, postaction={pattern=dots}] coordinates
{(1\%,0.69033) (5\%,3.74367) (10\%,7.683) (30\%,22.948) (50\%,36.74167) (70\%,51.97567) (90\%,66.39467)};
\end{axis}

\begin{axis}[bar shift=1pt]
\addplot+[black, fill=dark_red, postaction={pattern=grid}] coordinates
{(1\%,15.49433) (5\%,15.49433) (10\%,15.49433) (30\%,15.49433) (50\%,15.49433) (70\%,15.49433) (90\%,15.49433)};
\addplot+[black, fill=light_red, postaction={pattern=crosshatch}] coordinates
{(1\%,1.166) (5\%,5.59033) (10\%,11.12633) (30\%,41.98467) (50\%,73.701) (70\%,134.005) (90\%,231.826)};
\end{axis}

\end{tikzpicture}
}
          \caption{Query 10}
          \label{fig:q10}
     \end{subfigure}
     \vspace{-1em}
     \caption{Total enumeration time of CQs when requesting different
       percentages of answers. In each bar, the bottom (darker) part
       refers to the preprocessing phase and the top (lighter) part to
       the enumeration
       phase.}
     \label{fig:cq_total_time}
     \vspace{-1em}
   \end{figure*}
   }
  
\subsection{Experimental Results}\label{sec:results}

We now describe the results of our experimentation with CQs and UCQs.
The CQ experiments analyze \ouralgCQ in terms of the total enumeration
time (Section~\ref{sec:cq_total_time}) and delay
(Section~\ref{sec:cq_delay}), while the UCQ experiments analyze
\ouralgUCQ and \mcUCQAlg{} in terms of the total enumeration time, as well as the rejection rate of \ouralgUCQ{}
(Section~\ref{sec:ucq_results}).
We omit from
all preprocessing times the portion devoted to reading the relations.

    \subsubsection{CQ running time} \label{sec:cq_total_time}

To characterize the total enumeration time of \ouralgCQ, we compare it
to that of \zhaoalg{EW} for the TPC-H CQs. In the experiment, we task
each algorithm with enumerating $k$ distinct answers for increasing
values of $k$. The different values of $k$ were chosen as a percentage
of the query results ({1\%, 5\%, 10\%, 30\%, 50\%, 70\%,
  90\%}). For each 
task, we measure the total enumeration
time, that is, the time elapsed from the beginning of the
preprocessing phase to when $k$ distinct answers were supplied. The
results of this experiment are presented in \Cref{fig:cq_total_time}
with a chart per query. 
The results indicate that, as $k$ grows, the total enumeration time of
\zhaoalg{EW} grows more rapidly in comparison to \ouralgCQ. Generally,
the total time of \ouralgCQ increases slower as it does not reject
answers. Hence, \zhaoalg{EW} seems better or comparable for smaller
$k$ values, but is consistently outperformed by \ouralgCQ for larger
values of $k$. This is especially true when preprocessing time becomes
negligible in comparison to the time it takes to enumerate $k$
distinct answers.
\ouralgCQ performs better, relative to \zhaoalg{EW}, on queries with
more relations ($Q_2$ ,$Q_7$, $Q_9$) than ones with fewer relations.

{
\begin{figure}[b]
\centering
     \begin{subfigure}[b]{0.15\textwidth}
          \centering
          \resizebox{\linewidth}{!}{\includegraphics{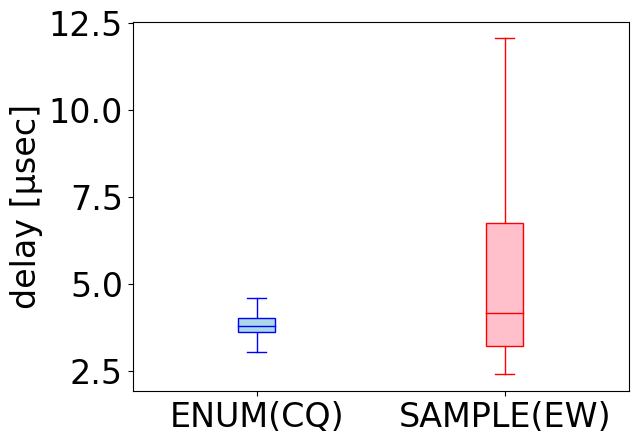}}
          \caption{Query 0}
          \label{fig:q0_full_boxplot}
     \end{subfigure}
     \begin{subfigure}[b]{0.15\textwidth}
          \centering
          \resizebox{\linewidth}{!}{\includegraphics{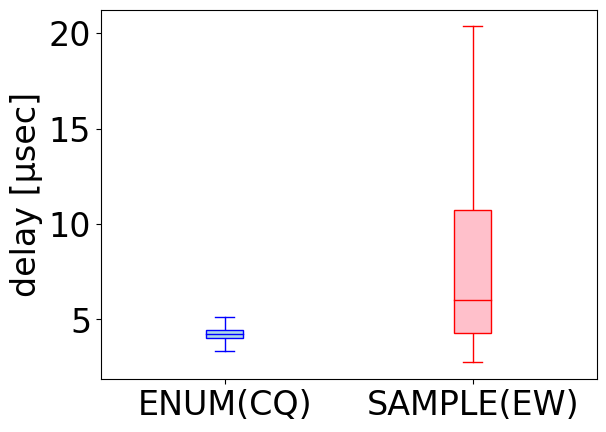}}
          \caption{Query 2}
          \label{fig:q2_full_boxplot}
     \end{subfigure}
     \begin{subfigure}[b]{0.15\textwidth}
          \centering
          \resizebox{\linewidth}{!}{\includegraphics{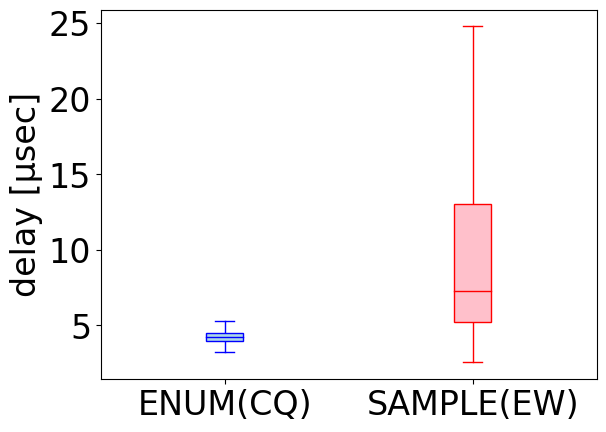}}
          \caption{Query 3}
          \label{fig:q3_full_boxplot}
     \end{subfigure}
     \begin{subfigure}[b]{0.15\textwidth}
          \centering
          \resizebox{\linewidth}{!}{\includegraphics{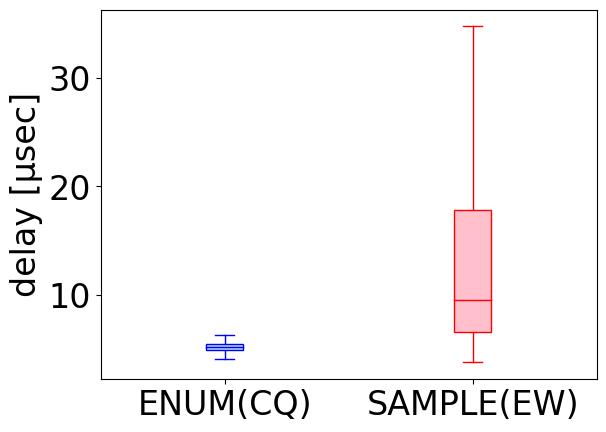}}
          \caption{Query 7}
          \label{fig:q7_full_boxplot}
     \end{subfigure}
     \begin{subfigure}[b]{0.15\textwidth}
          \centering
          \resizebox{\linewidth}{!}{\includegraphics{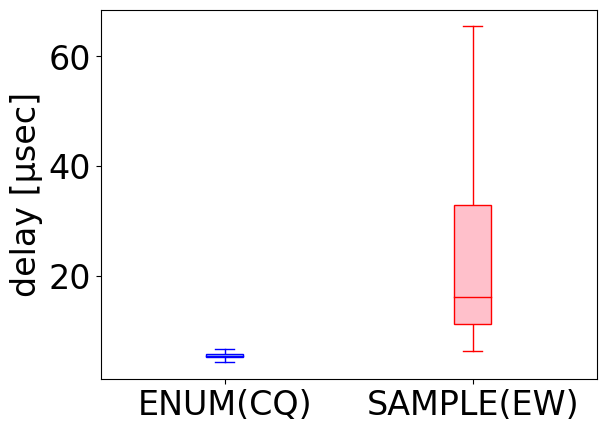}}
          \caption{Query 9}
          \label{fig:q9_full_boxplot}
     \end{subfigure}
     \begin{subfigure}[b]{0.15\textwidth}
          \centering
          \resizebox{\linewidth}{!}{\includegraphics{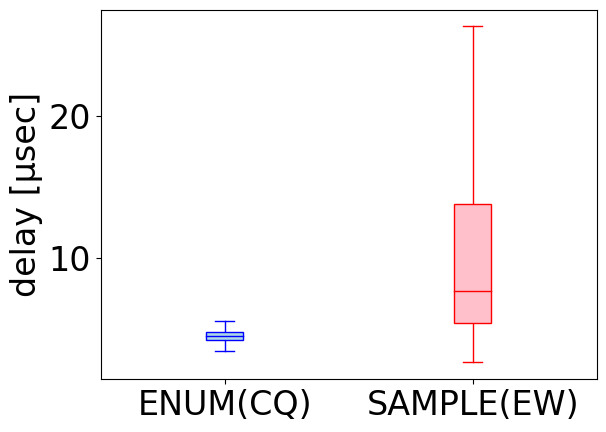}}
          \caption{Query 10}
          \label{fig:q10_full_boxplot}
     \end{subfigure}
     \vskip-1em
     \caption{The delay in a full enumeration.}
     \label{fig:cq_delay_full_boxplot}
   \end{figure}
 }

 {
\begin{figure}[b]
\centering
     \begin{subfigure}[b]{0.15\textwidth}
          \centering
          \resizebox{\linewidth}{!}{\includegraphics{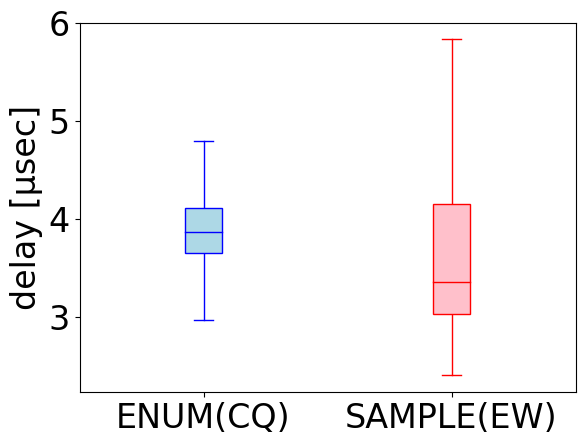}}
          \caption{Query 0}
          \label{fig:q0_half_boxplot}
     \end{subfigure}
     \begin{subfigure}[b]{0.15\textwidth}
          \centering
          \resizebox{\linewidth}{!}{\includegraphics{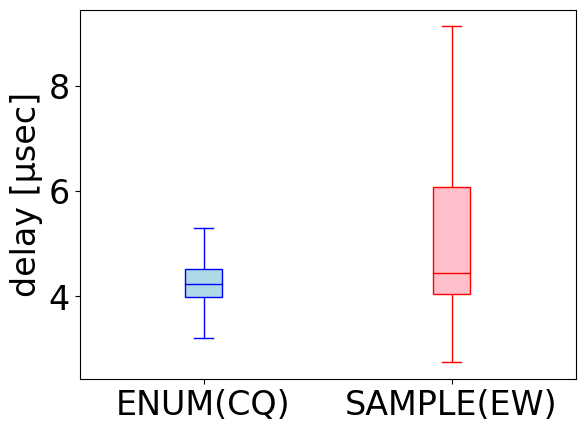}}
          \caption{Query 2}
          \label{fig:q2_half_boxplot}
     \end{subfigure}
     \begin{subfigure}[b]{0.15\textwidth}
          \centering
          \resizebox{\linewidth}{!}{\includegraphics{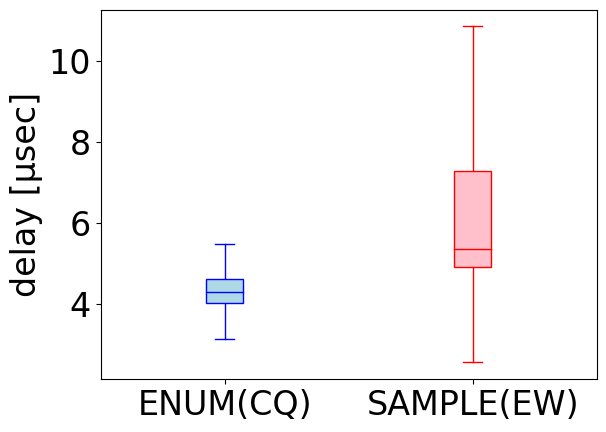}}
          \caption{Query 3}
          \label{fig:q3_half_boxplot}
     \end{subfigure}
     \begin{subfigure}[b]{0.15\textwidth}
          \centering
          \resizebox{\linewidth}{!}{\includegraphics{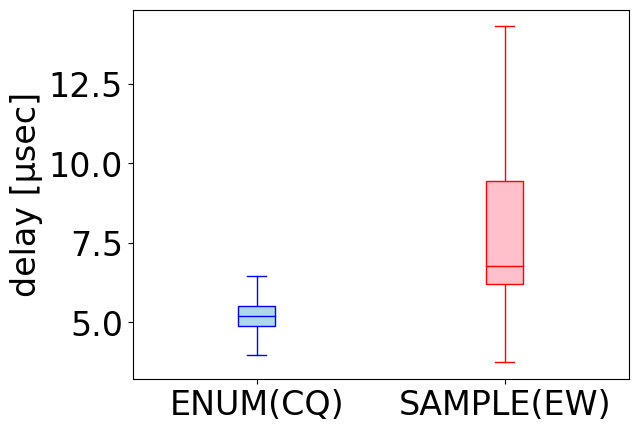}}
          \caption{Query 7}
          \label{fig:q7_half_boxplot}
     \end{subfigure}
     \begin{subfigure}[b]{0.15\textwidth}
          \centering
          \resizebox{\linewidth}{!}{\includegraphics{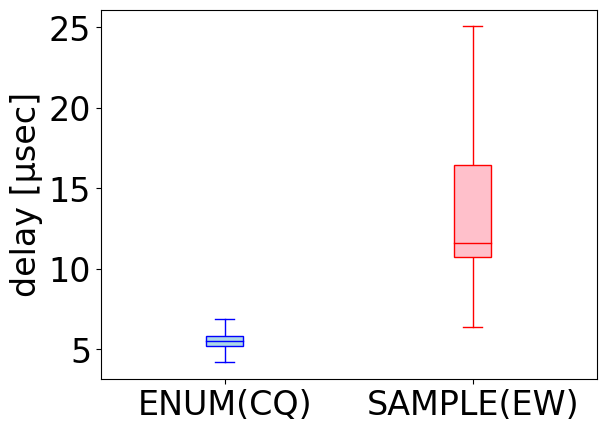}}
          \caption{Query 9}
          \label{fig:q9_half_boxplot}
     \end{subfigure}
     \begin{subfigure}[b]{0.15\textwidth}
          \centering
          \resizebox{\linewidth}{!}{\includegraphics{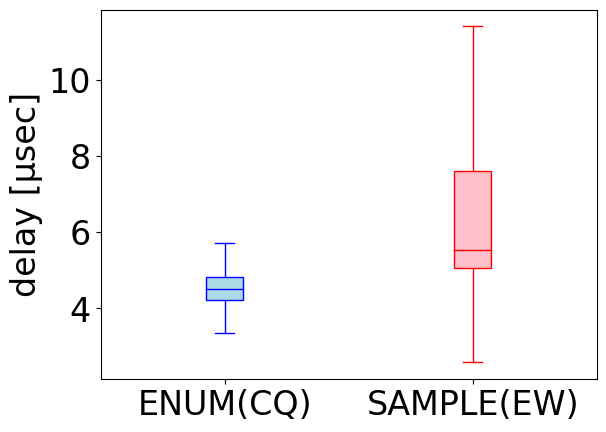}}
          \caption{Query 10}
          \label{fig:q10_half_boxplot}
     \end{subfigure}
     \vskip-1em
     \caption{The delay when enumerating 50\% of the answers.}

     \label{fig:cq_delay_half_boxplot}
 \end{figure}
}

\subsubsection{CQ delay analysis} \label{sec:cq_delay} To examine the
delay of \ouralgCQ{} and \zhaoalg{EW}, we record the delay of each
answer and depict it in box-and-whisker plots. Each query has two box
plots: enumeration of all answers (\Cref{fig:cq_delay_full_boxplot})
and enumeration of 50\% of the answers
(\Cref{fig:cq_delay_half_boxplot}). Outliers that fell outside the
whiskers are not shown, since some are several orders of magnitude
larger than the median. Further information regarding the results of
this experiment, as well as the outliers dropped, is in the Appendix
(Section~\ref{sec:delay_tables}). We can see that in a full
enumeration, \ouralgCQ always shows a lower median value, smaller
variation, and a smaller interquartile range (IQR). Smaller IQR and
whiskers show that half of the delay samples fall within a smaller
range, meaning that the delay is more stable and predictable. When
enumerating 50\% of the answers, the variation and IQR remain smaller
across all queries. However, in $Q_0$ we see that \zhaoalg{EW}
actually exhibits a smaller median. In addition, \ouralgCQ usually
shows better results on larger queries, in comparison to
\zhaoalg{EW}. For instance, \zhaoalg{EW} has a better median in $Q_0$
than in its larger counterpart $Q_2$.

{
\begin{figure}[t]
\centering 
\begin{subfigure}{0.25\textwidth}
    \centering
    \resizebox{\linewidth}{!}{
\def \UCQone{\text{$Q_7^S \cup Q_7^C$}}
\def \UCQtwo{\text{$Q_2^N \cup Q_2^P \cup Q_2^S$}}
\def \UCQthree{\text{$Q_A \cup Q_E$}}

\begin{tikzpicture}[
  every axis/.style={
    ybar stacked,
    ymin = 0,ymax = 70,
    symbolic x coords={\UCQthree,\UCQone,\UCQtwo},
    xtick={\UCQthree,\UCQone,\UCQtwo},
    bar width=8pt,
    ylabel near ticks,
    ylabel={$total\;time$ [seconds}],
    label style={font=\Large},
    tick label style={font=\Large},
  }
]

\begin{axis}[bar shift=-10pt, hide axis]
\addplot+[black, fill=dark_green, postaction={pattern=crosshatch dots}] coordinates {(\UCQone,4.841567) (\UCQtwo,0) (\UCQthree,0)};

\addplot+[black, fill=dark_green, postaction={pattern=crosshatch dots}] coordinates {(\UCQone,7.455633) (\UCQtwo,0) (\UCQthree,0)};

\addplot+[black, fill=light_green, postaction={pattern=dots}] coordinates {(\UCQone,2.49933) (\UCQtwo,0) (\UCQthree,0)};

\addplot+[black, fill=light_green, postaction={pattern=dots}] coordinates {(\UCQone,2.45767) (\UCQtwo,0) (\UCQthree,0)};

\addplot+[black, fill=dark_green, postaction={pattern=crosshatch dots}] coordinates {(\UCQone,0) (\UCQtwo,0.88117866666667) (\UCQthree,0)}; %
\addplot+[black, fill=dark_green, postaction={pattern=crosshatch dots}] coordinates {(\UCQone,0) (\UCQtwo,0.610207) (\UCQthree,0)}; %
\addplot+[black, fill=dark_green, postaction={pattern=crosshatch dots}] coordinates {(\UCQone,0) (\UCQtwo,0.67323133333333) (\UCQthree,0)}; %
\addplot+[black, fill=light_green, postaction={pattern=dots}] coordinates {(\UCQone,0) (\UCQtwo,8.5896666666667) (\UCQthree,0)}; %
\addplot+[black, fill=light_green, postaction={pattern=dots}] coordinates {(\UCQone,0) (\UCQtwo,3.8776666666667) (\UCQthree,0)}; %
\addplot+[black, fill=light_green, postaction={pattern=dots}] coordinates {(\UCQone,0) (\UCQtwo,3.8403333333333) (\UCQthree,0)}; %

\addplot+[black, fill=dark_green, postaction={pattern=crosshatch dots}] coordinates {(\UCQone,0) (\UCQtwo,0) (\UCQthree, 3.67774)}; %
\addplot+[black, fill=dark_green, postaction={pattern=crosshatch dots}] coordinates {(\UCQone,0) (\UCQtwo,0) (\UCQthree,3.7115233333333)}; %
\addplot+[black, fill=light_green, postaction={pattern=dots}] coordinates {(\UCQone,0) (\UCQtwo,0) (\UCQthree,2.2356666666667)}; %
\addplot+[black, fill=light_green, postaction={pattern=dots}] coordinates {(\UCQone,0) (\UCQtwo,0) (\UCQthree,2.2663333333333)}; %
\end{axis}

\begin{axis}[bar shift=0pt, hide axis]
\addplot+[black, fill=dark_blue, postaction={pattern=grid}] coordinates {(\UCQone,13.4404) (\UCQtwo,0) (\UCQthree,0)}; %
\addplot+[black, fill=light_blue, postaction={pattern=crosshatch}] coordinates {(\UCQone,10.083666666667) (\UCQtwo,0) (\UCQthree,0)}; %

\addplot+[black, fill=dark_blue, postaction={pattern=grid}] coordinates {(\UCQone,0) (\UCQtwo,5.95336) (\UCQthree,0)}; %
\addplot+[black, fill=light_blue, postaction={pattern=crosshatch}] coordinates {(\UCQone,0) (\UCQtwo,40.142333333333) (\UCQthree,0)}; %

\addplot+[black, fill=dark_blue, postaction={pattern=grid}] coordinates {(\UCQone,0) (\UCQtwo,0) (\UCQthree,8.8582633333333)}; %
\addplot+[black, fill=light_blue, postaction={pattern=crosshatch}] coordinates {(\UCQone,0) (\UCQtwo,0) (\UCQthree,8.112)}; %
\end{axis}

\begin{axis}[bar shift=10pt]
\addplot+[black, fill=dark_purple, postaction={pattern=grid}] coordinates {
(\UCQone,15.5203)
(\UCQtwo,0)
(\UCQthree,0)
}; %

\addplot+[black, fill=light_purple, postaction={pattern=crosshatch}] coordinates {
(\UCQone,6.76267)
(\UCQtwo,0)
(\UCQthree,0)
}; %

\addplot+[black, fill=dark_purple, postaction={pattern=grid}] coordinates {
(\UCQone,0)
(\UCQtwo,9.98706)
(\UCQthree,0)
}; %

\addplot+[black, fill=light_purple, postaction={pattern=crosshatch}] coordinates {
(\UCQone,0)
(\UCQtwo,18)
(\UCQthree,0)
};

\addplot+[draw=none,fill=white] coordinates {
(\UCQone,0)
(\UCQtwo,8)
(\UCQthree,0)
};

\addplot+[black, fill=light_purple, postaction={pattern=crosshatch}] coordinates {
(\UCQone,0)
(\UCQtwo,18)
(\UCQthree,0)
};

\addplot+[black, fill=dark_purple, postaction={pattern=grid}] coordinates {(\UCQone,0) (\UCQtwo,0) (\UCQthree,9.3179)}; %
\addplot+[black, fill=light_purple, postaction={pattern=crosshatch}] coordinates {(\UCQone,0) (\UCQtwo,0) (\UCQthree,5.497)}; %

\end{axis}  

\begin{axis}[bar shift=0pt,hide axis,legend pos=north west,legend style={font=\Large},legend cell align={left}]
\addplot+[black, fill=dark_green, postaction={pattern=crosshatch dots}] coordinates {(\UCQone,0) (\UCQtwo,0) (\UCQthree,0)};
\addlegendentry{\ouralgCQ{} preprocessing}

\addplot+[black, fill=light_green, postaction={pattern=dots}] coordinates {(\UCQone,0) (\UCQtwo,0) (\UCQthree,0)};
\addlegendentry{\ouralgCQ{} enumeration}

\addplot+[black, fill=dark_blue, postaction={pattern=grid}] coordinates {(\UCQone,0) (\UCQtwo,0) (\UCQthree,0)}; %
\addlegendentry{\ouralgUCQ{} preprocessing}

\addplot+[black, fill=light_blue, postaction={pattern=crosshatch}] coordinates {(\UCQone,0) (\UCQtwo,0) (\UCQthree,0)}; %
\addlegendentry{\ouralgUCQ{} enumeration}

\addplot+[black, fill=dark_purple, postaction={pattern=grid}] coordinates {(\UCQone,0) (\UCQtwo,0) (\UCQthree,0)}; %
\addlegendentry{\mcUCQAlg{} preprocessing}

\addplot+[black, fill=light_purple, postaction={pattern=crosshatch}] coordinates {(\UCQone,0) (\UCQtwo,0) (\UCQthree,0)}; %
\addlegendentry{\mcUCQAlg{} enumeration}
\end{axis}

\coordinate (dots) at (6.65,2.7);
\coordinate (num) at (6.3,5.3);
\node at (dots) {$\vdots$};
\node at (num) {\Large $493.63$};
\draw[->] ($(num) + (0.1,-0.2)$) -- (6.65,4.45);

\end{tikzpicture}
}
    \caption{Total time.}
    \label{fig:all_ucqs_total_time}
\end{subfigure}
\begin{subfigure}{0.22\textwidth}
    \centering
    \resizebox{\linewidth}{!}{%
\begin{tikzpicture}[
  every axis/.style={ 
    ybar stacked,
    ymin = 0,ymax = 25,
    symbolic x coords={1\%,5\%,10\%,30\%,50\%,70\%,90\%, 100\%},
    xtick={1\%,5\%,10\%,30\%,50\%,70\%,90\%,100\%},
    bar width=5pt,
    ylabel near ticks,
    ylabel={$total\;time$ [seconds}],
    label style={font=\Large},
    tick label style={font=\Large},
    x tick label style={rotate=45,anchor=east}
  }
]

\begin{axis}[bar shift=-12pt] %
\addplot+[black, fill=dark_green, postaction={pattern=crosshatch dots}] coordinates {(1\%,4.9348433333333) (5\%,4.9348433333333) (10\%,4.9348433333333) (30\%,4.9348433333333) (50\%,4.9348433333333) (70\%,4.9348433333333) (90\%,4.9348433333333) (100\%,4.9348433333333)};

\addplot+[black, fill=dark_green, postaction={pattern=crosshatch dots}] coordinates {(1\%,7.4491966666667) (5\%,7.4491966666667) (10\%,7.4491966666667) (30\%,7.4491966666667) (50\%,7.4491966666667) (70\%,7.4491966666667) (90\%,7.4491966666667) (100\%,7.4491966666667)};

\addplot+[black, fill=light_green, postaction={pattern=dots}] coordinates {(1\%,0.024333333333333) (5\%,0.122) (10\%,0.247) (30\%,0.78933333333333) (50\%,1.3383333333333) (70\%,1.8336666666667) (90\%,2.3723333333333) (100\%,2.6396666666667)};

\addplot+[black, fill=light_green, postaction={pattern=dots}] coordinates {(1\%,0.024) (5\%,0.114) (10\%,0.22966666666667) (30\%,0.737) (50\%,1.27) (70\%,1.7403333333333) (90\%,2.232) (100\%,2.465)};
\end{axis}

\begin{axis}[bar shift=-5pt, hide axis] %
\addplot+[black, fill=dark_blue, postaction={pattern=grid}] coordinates {(1\%,13.4597) (5\%,13.4597) (10\%,13.4597) (30\%,13.4597) (50\%,13.4597) (70\%,13.4597) (90\%,13.4597) (100\%,13.4597)};
\addplot+[black, fill=light_blue, postaction={pattern=crosshatch}] coordinates {(1\%,0.158) (5\%,0.523) (10\%,1.0026666666667) (30\%,3.23) (50\%,4.6766666666667) (70\%,7.222) (90\%,8.522) (100\%,9.5466666666667)};
\end{axis}

\begin{axis}[bar shift=2pt, hide axis]
\addplot+[black, fill=dark_purple, postaction={pattern=grid}] coordinates {(1\%,15.5203) (5\%,15.5203) (10\%,15.5203) (30\%,15.5203) (50\%,15.5203) (70\%,15.5203) (90\%,15.5203) (100\%,15.5203)};
\addplot+[black, fill=light_purple, postaction={pattern=crosshatch}] coordinates {(1\%,0.112667) (5\%,0.327) (10\%,0.659667) (30\%,2.1) (50\%,3.48467) (70\%,4.789) (90\%,6.11567) (100\%,6.76267)};
\end{axis}

\end{tikzpicture}
}
    \caption{Varying percentage.}
    \label{fig:ucq1_total_time}
  \end{subfigure}
  \vskip-1em
\caption{The total time of UCQs with \ouralgUCQ vs.~the total time of CQs comprising the union with \ouralgCQ.}
\end{figure}
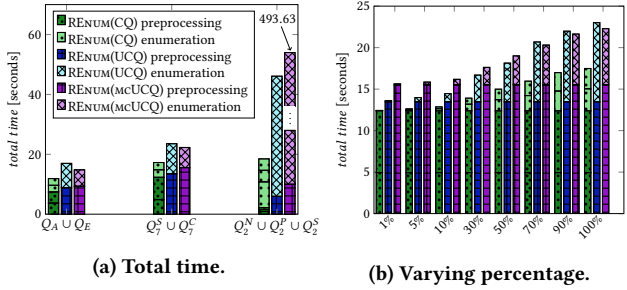
}

\subsubsection{UCQ enumeration} \label{sec:ucq_results} This section
analyzes the total enumeration time of \ouralgUCQ and \mcUCQAlg, as
well as the time spent on rejection of \ouralgUCQ in three
experiments, shown in Figures~\ref{fig:all_ucqs_total_time},
\ref{fig:ucq1_total_time}, and~\ref{fig:ucq1_rejection_vs_answers},
respectively.  The
1st experiment measures the length of a full enumeration (with
\ouralgUCQ or \mcUCQAlg) in three UCQs, while the
second focuses on one UCQ and measures the total time of both UCQ
algorithms as it varies when producing a different portion of the
answers (as in Section~\ref{sec:cq_total_time}). In both experiments,
we compare \ouralgUCQ and \mcUCQAlg to the cumulative running time of
\ouralgCQ on the CQs comprising the union. We stress that running
\ouralgCQ on the independent CQs is \e{not} an alternative to an
actual UCQ enumeration---it produces duplicates and does not provide a
uniform random order. We perform this comparison to measure the
overhead of the UCQ algorithms over \ouralgCQ. The 
3rd experiment examines the time that \ouralgUCQ spends on producing
rejected answers during a single run. It shows how this time changes
along the course of a full enumeration.

   The difference in preprocessing time between \ouralgCQ and
   \ouralgUCQ is that for the latter, we need to build an index
   that supports \Cref{line:find_t} in \Cref{alg:union}.
   \Cref{fig:all_ucqs_total_time} shows that this difference can be
   quite small, as seen in $Q_7^S \cup Q_7^C$ and $Q_A \cup Q_E$. 
   Meanwhile, the preprocessing of \mcUCQAlg adds to that of \ouralgUCQ the need to preprocess CQs defined by intersection of CQs from the union. Hence, \mcUCQAlg always has the largest preprocessing time. 
   Nevertheless, we see that the difference in the enumeration phase
   is more significant for both algorithms.
   
   The slowdown of \ouralgUCQ{} compared to \ouralgCQ{} is mostly
   attributed to the effort to avoid duplicates by multiple CQs.  A
   union of $m$ CQs calls the inverted-access $m{-}1$ times per
   answer.  Also, the enumeration phase is slowed down by the deletion
   mechanism and rejections.  \Cref{fig:all_ucqs_total_time} also
   shows that the slowdown between \ouralgCQ{} and \ouralgUCQ{}
   depends largely on the intersection size.
   $Q_2^N \cup Q_2^P \cup Q_2^S$ has a  large intersection and
   $Q_A \cup Q_E$ has no intersection at all.  In general, two
   disjoint queries will be much faster than two identical queries
   because for two identical queries the algorithm will reject half of
   the answers on average.
   
   \Cref{fig:all_ucqs_total_time} demonstrates that the difference in
   running time between \mcUCQAlg{} and \ouralgUCQ{} depends on the
   number of CQs in the union.  For two CQs,
   \mcUCQAlg{} outperforms \ouralgUCQ{}.  $Q_A \cup Q_E$ is a disjoint
   union, while $Q_7^S \cup Q_7^C$ is not.  Both algorithms benefit
   from a disjoint union, but \mcUCQAlg{} maintains its lead.  If the
   union is disjoint, \cref{line:computek} of
   \Cref{alg:AccessUnionOfTwo} will never be called, so the running
   time of the inverted-access is saved.  In \ouralgUCQ{}, a disjoint
   union causes no rejections, so the delay is also guaranteed
   log-time (not only in expectation). However, \ouralgUCQ{} still
   tests membership in the other queries (as we do not know in advance
   that the union is disjoint), so it is more costly.  \mcUCQAlg{} on $Q_2^N \cup Q_2^P \cup Q_2^S$ suffers from a
   larger number of CQs in the union. As the delay depends
   exponentially on the number of CQs in the union, this has a
   dramatic effect.
   
   \Cref{fig:ucq1_total_time} shows the middle column of \Cref{fig:all_ucqs_total_time} as it changes during the course of enumeration. It shows that the increase in total delay is rather steady in both UCQ algorithms, and that \mcUCQAlg{} starts being preferable over \ouralgUCQ{} when producing about 60\% of the answers or more.

   Finally, \Cref{fig:ucq1_rejection_vs_answers} shows that the time
   \ouralgUCQ spends on producing rejected answers decays over
   time. A possible explanation is that the number of answers that belong to
   multiple CQs (shared answers) drops faster than that of
   non-shared answers, for two reasons. First, shared answers
   have a higher
   probability of being selected. Second, when a non-shared answer is
   selected, it is deleted everywhere, while a shared answer
   may become non-shared.

  \subsubsection{Conclusions}
  Our experimental study indicates that the merits of \ouralgCQ are not only in its complexity and
  statistical guarantees---a fairly simple implementation of it features a significant improvement in practical performance compared to the state-of-the-art approach. 
  Moreover, the overhead of \ouralgUCQ is non-negligible. While this overhead is reasonable for the case of binary union, it is an important future challenge to reduce this overhead for larger unions.
  Finally, although \mcUCQAlg{} has the advantage of guaranteed delay (unlike that of \ouralgUCQ{} which is expected), our empirical evaluation shows that \ouralgUCQ{} is usually comparable to \mcUCQAlg{} or more efficient.

\section{Conclusions}\label{sec:conclusions}

We studied the problems of answering queries in a random
permutation and via a random-access.
We established that for CQs without self-joins it holds that
$\EN = \RA = \RP$. We also studied the generalization to unions of
free-connex CQs where, in contrast, we have $\EN \neq \RA$ and random-access may be intractable even if tractable for each CQ in the union.
We then studied two alternatives: (1) \mcUCQAlg uses the random-access approach for the restricted class of mc-UCQs and achieves
guaranteed $\log^2$ delay; (2) \ouralgUCQ finds a random permutation
directly for any UCQ comprising of free-connex CQs and achieves $\log$
delay in expectation. Our experimental study shows that the two
solutions are comparable on a union of two CQs, but
\ouralgUCQ{} preforms better on larger unions.
We described an
implementation of our algorithms, and presented an experimental study
showing that our algorithms outperform the sampling-with-rejection
alternatives.

It is an open problem to find which UCQs admit efficient fine-grained
enumeration, even without order guarantees~\cite{nofar:ucq}. However,
we do know that UCQs comprising of free-connex CQs admit efficient
enumeration. This work opens the question of finding an exact
characterisation for when such a UCQ admits random-access or random-permutation in polylogarithmic delay.

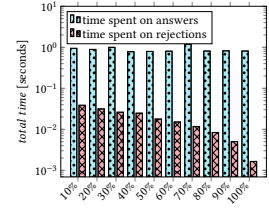
\begin{figure}[t]
    \centering
    \resizebox{0.4\linewidth}{!}{%
\begin{tikzpicture}[
  every axis/.style={
    ymin = 0,ymax = 10,
    symbolic x coords={10\%,20\%,30\%,40\%,50\%,60\%,70\%,80\%,90\%,100\%},
    xtick={10\%,20\%,30\%,40\%,50\%,60\%,70\%,80\%,90\%,100\%},
    bar width=6pt,
    ylabel near ticks,
    ylabel={$total\;time$ [seconds}],
    label style={font=\LARGE},
    tick label style={font=\LARGE},
    x tick label style={rotate=45,anchor=east}
  }
]

\begin{axis}[ybar,ymode=log,log origin=infty,legend pos=north west,legend style={font=\LARGE},legend cell align={left}]

\addplot+[black, fill=light_blue, postaction={pattern=crosshatch dots}] coordinates {(10\%,0.942738) (20\%,0.88764333333333) (30\%,0.99747833333333) (40\%,0.77979633333333) (50\%,0.79317866666667) (60\%,0.80292133333333) (70\%,1.1790766666667) (80\%,0.81006966666667) (90\%,0.820947) (100\%,0.81126766666667)};
\addlegendentry{time spent on answers}

\addplot+[black, fill=light_red, postaction={pattern=crosshatch}] coordinates {(10\%,0.038575) (20\%,0.03144) (30\%,0.026107666666667) (40\%,0.024753) (50\%,0.017811666666667) (60\%,0.015087666666667) (70\%,0.011743333333333) (80\%,0.0083836666666667) (90\%,0.0050213333333333) (100\%,0.0016493333333333)};
\addlegendentry{time spent on rejections}

\end{axis}

\end{tikzpicture}
}
    \caption{time spent on producing answers vs.~time spent on
      rejections across a full enumeration of $Q_7^S \cup Q_7^C$ (log
      scale).}
    \vskip-1em
    \label{fig:ucq1_rejection_vs_answers}
  \end{figure}

\bibliographystyle{abbrv}
\bibliography{references}

\clearpage
\appendix
\section*{APPENDIX}

\section{Proofs for Section~\ref{sec:ucqs}}

The description of the algorithm for the random access for $A\cup B$ gives the following lemma.

\begin{lemma}\label{lemma:AccessUnionOfTwo}
Let $A$ and $B$ be sets. Assume we have available enumeration algorithms for $A$, for $B$, and for 
$A\cap B$ such that 
\begin{enumerate}
 \item the enumeration order for $A\cap B$ is compatible with that for $A$
 \item after having carried out the preprocessing phase for $B$, 
  \begin{itemize}
   \item
     upon input of a number $j$ the 
     routine $B$.\Access($j$) returns, within time $t_B$, the $j$-th output element of the
     enumeration algorithm for $B$,
   \item
     upon input of an arbitrary $u$ it takes time $t_T$ to test if $u\in B$
  \end{itemize}
 \item after having carried out the preprocessing phase for $A$ we know its cardinality $|A|$,
   and upon input of a number $j$, the routine 
   $A$.\Access($j$) returns, within time $t_A$, the $j$-th output element of the
   enumeration algorithm for $A$
 \item after having carried out the preprocessing phase for $A\cap B$, 
   we know its cardinality $|A\cap B|$, and
   upon input of an arbitrary $c\in A\cap B$, its rank $(A\cap
   B).\Rank(c)$ according to the enumeration order for $A\cap B$ can be computed
   within time $t_{I}$
\end{enumerate}
Then, after having carried out the preprocessing phases for $A$, for $B$, and for $A\cap B$, Algorithm~\ref{alg:AccessUnionOfTwo} provides random-access to $A\cup B$ in such a way that 
upon input of an arbitrary number $j$ it
takes time $O(t_A+t_B+t_T+t_I)$ to output the $j$-th element enumerated by Algorithm~\ref{alg:DS-UnionTrick}.
\end{lemma}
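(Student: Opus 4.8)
The plan is to prove correctness by first giving an exact description of the element that Algorithm~\ref{alg:DS-UnionTrick} produces at output position $j$, and then checking that Algorithm~\ref{alg:AccessUnionOfTwo} recomputes precisely that element using the four available routines. Write $a_1,\ldots,a_{|A|}$ and $b_1,\ldots,b_{|B|}$ for the enumeration orders of $A$ and of $B$, and $c_1,\ldots,c_{|A\cap B|}$ for that of $A\cap B$; by assumption~(1) the sequence $(c_r)_r$ is a subsequence of $(a_i)_i$.

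First I would establish, by a straightforward induction on the number of iterations of the first \texttt{while}-loop of Algorithm~\ref{alg:DS-UnionTrick}, the following invariant: after the loop has consumed $a_1,\ldots,a_i$ it has produced exactly $i$ output elements, and the $B$-pointer (the value $b$) stands at position $|\set{a_1,\ldots,a_i}\cap B|+1$ of the $B$-enumeration. Indeed, each iteration either outputs $a_i$ and advances only the $A$-pointer (when $a_i\notin B$), or outputs the current $b$ and advances both pointers (when $a_i\in B$). Consequently, for $j\le|A|$ the $j$-th output element is $a_j$ if $a_j\notin B$, and is $b_k$ with $k=|\set{a_1,\ldots,a_j}\cap B|$ if $a_j\in B$; and once the first loop terminates, exactly $|A|$ elements have been output and $b=b_{|A\cap B|+1}$, so the second \texttt{while}-loop yields $b_{|A\cap B|+1},b_{|A\cap B|+2},\ldots,b_{|B|}$, i.e.\ for $j>|A|$ the $j$-th output element is $b_\ell$ with $\ell=j-|A|+|A\cap B|$, which is a legal index of $B$ exactly when $j\le |A|+|B|-|A\cap B|=|A\cup B|$.

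The one remaining identity is $k=|\set{a_1,\ldots,a_j}\cap B|=(A\cap B).\Rank(a_j)$ when $a_j\in B$, and this is precisely where \emph{compatibility} is used: since $(c_r)_r$ is a subsequence of $(a_i)_i$ and $a_j\in A\cap B$, the members of $A\cap B$ occurring among $a_1,\ldots,a_j$ are exactly $c_1,\ldots,c_{(A\cap B).\Rank(a_j)}$, so their number equals $(A\cap B).\Rank(a_j)$. Matching this against the pseudocode of Algorithm~\ref{alg:AccessUnionOfTwo}: the call $a=A.\Access(j)$ returns $\ErrorMessage$ iff $j>|A|$ (this is the random-access convention, and in any case $|A|$ is known); in the non-error branch, the test $a\in B$ is exactly the case split above, line~\ref{line:computek} computes $k=(A\cap B).\Rank(a)$ and $b=B.\Access(k)$ returns $b_k$ (legal since $1\le k\le|A\cap B|\le|B|$); in the error branch the algorithm forms $\ell=j-|A|+|A\cap B|$ from the precomputed cardinalities and returns $B.\Access(\ell)$, which yields $b_\ell$ if $\ell\le|B|$ and $\ErrorMessage$ otherwise. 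Hence the returned value coincides with the $j$-th element enumerated by Algorithm~\ref{alg:DS-UnionTrick}, and an out-of-range $j$ produces $\ErrorMessage$, as required.

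For the running time, one call of Algorithm~\ref{alg:AccessUnionOfTwo} performs at most one invocation of $A.\Access$ (cost $t_A$), one membership test in $B$ (cost $t_T$), one call of $(A\cap B).\Rank$ (cost $t_I$), one call of $B.\Access$ (cost $t_B$), and $O(1)$ arithmetic operations on the precomputed numbers $|A|$ and $|A\cap B|$; thus the access time is $O(t_A+t_B+t_T+t_I)$. I expect the only delicate part to be the bookkeeping in the second paragraph — pinning down the position of the $B$-pointer when the first loop ends and correctly invoking compatibility in the subcase $a_j\in B$; the rest is routine verification against the pseudocode.
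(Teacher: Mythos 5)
Your proposal is correct and follows essentially the same route as the paper: the paper also derives that the $j$-th output of Algorithm~\ref{alg:DS-UnionTrick} is $a_j$ if $a_j\notin B$, is $b_k$ with $k=|\set{a_1,\ldots,a_j}\cap B|$ if $a_j\in B$, and is $b_\ell$ with $\ell=j-|A|+|A\cap B|$ if $j>|A|$, and uses compatibility to replace $k$ by $(A\cap B).\Rank(a_j)$, with the same one-call-per-routine time accounting. The only difference is that you make the loop invariant about the $B$-pointer explicit, which the paper leaves implicit.
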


The generalization to a union of an arbitrary number of sets is formzlied by the following
lemma.
\begin{lemma}\label{lemma:AccessUnionOfm}
Let $m\geq 2$ and let $S_1,\ldots,S_m$ be sets. For each $\ell\in [1,m]$ and each $I$ with $\emptyset\neq I\subseteq [\ell{+}1,m]$ let $T_{\ell,I}\deff S_{\ell}\cap\bigcap_{i\in I}S_i$.
Assume that for each $\ell\in[1,m]$ we have available an enumeration algorithm for $S_\ell$, and for each $\emptyset\neq I\subseteq [\ell{+}1,m]$ we have available an enumeration algorithm for 
$T_{\ell,I}$ such that
\begin{enumerate}
 \item the enumeration order for $T_{\ell,I}$ is compatible with that for $S_\ell$
 \item after having carried out the preprocessing phase for $S_\ell$ we know its cardinality $|S_\ell|$, and
   \begin{itemize}
   \item upon input of a number $j$ the routine $S_\ell.\Access(j)$
     returns, within time $t_{\textit{acc}}$ the $j$-th output element
     of the enumeration algorithm for $S_\ell$, and
     \item upon input of an arbitrary $u$ it takes time $t_{\textit{test}}$ to test if $u\in S_\ell$
   \end{itemize}
 \item\label{item:lemma:AccessUnionOfm:item3} after having carried out the preprocessing phase for $T_{\ell,I}$ we know its cardinality 
   $|T_{\ell,I}|$ and 
   \begin{itemize}
     \item upon input of an arbitrary $c\in T_{\ell,I}$ its rank $T_{\ell,I}$.\Rank($c$) can be computed within time $t_{\textit{inv-acc}}$, and
     \item upon input of an arbitrary $a\in S_\ell$ it takes time $t_{\textit{lar}}$ to return the particular $c\in T_{\ell,I}$ such that $c$ is the largest element of $T_{\ell,I}$ that is less than or equal to $a$ according to the enumeration order of $S_\ell$.
   \end{itemize}
\end{enumerate}
Then, after having carried out the preprocessing phases for $S_\ell$ and $T_{\ell,I}$ for all $\ell\in[1,m]$ and all $\emptyset\neq I\subseteq[\ell{+}1,m]$, we can provide random-access to 
$S_1\cup\cdots\cup S_m$ in such a way that upon input of an arbitrary number $j$ it takes time 
\[
  O(\, m {\cdot} t_{\textit{acc}} + m^2 {\cdot} t_{\textit{test}} + 2^m {\cdot} t_{\textit{inv-acc}} + 2^m {\cdot }t_{\textit{lar}} \,)
\]
to output the $j$-th element that is returned by the enumeration algorithm for $S_1\cup\cdots\cup S_m$ obtained by an iterated application of Algorithm~\ref{alg:DS-UnionTrick} (starting with $A=S_1$ and $B=S_2\cup\cdots\cup S_m$).
\end{lemma}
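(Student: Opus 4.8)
The plan is to prove the lemma by induction on $m$, exactly along the construction already sketched in this section. The base case $m=2$ is Lemma~\ref{lemma:AccessUnionOfTwo} applied with $A=S_1$, $B=S_2$ and $A\cap B=T_{1,\{2\}}$: hypotheses (1)--(3) of the present lemma supply precisely the inputs required there, and its running time $O(t_A+t_B+t_T+t_I)$ specialises to $O(t_{\textit{acc}}+t_{\textit{test}}+t_{\textit{inv-acc}})$, which is within the claimed bound (for $m=2$ the sum over $\emptyset\neq I\subseteq[2,m]$ has exactly one term, so the $2^m$ factors collapse). For the induction step from $m-1$ to $m$, I would set $A=S_1$ and $B=S_2\cup\cdots\cup S_m$, have Algorithm~\ref{alg:DS-UnionTrick} enumerate $A\cup B=S_1\cup\cdots\cup S_m$ with $B.\algname{First}()$ and $B.\algname{Next}()$ provided by the induction hypothesis (so that the $B$-enumeration is itself the iterated union trick starting with $S_2$ against $S_3\cup\cdots\cup S_m$), and run Algorithm~\ref{alg:AccessUnionOfTwo} to access its $j$-th output.

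The work is then to supply the four ingredients Algorithm~\ref{alg:AccessUnionOfTwo} needs. The quantities $|S_1|$ and $S_1.\Access(j)$ come from hypothesis (2); $B.\Access(j)$ comes from the induction hypothesis; membership "$u\in B$" is decided by testing $u\in S_i$ for each $i\in[2,m]$, at cost $O(m\,t_{\textit{test}})$. The two remaining quantities are $|A\cap B|$ and the value $k=|\{a_1,\dots,a_j\}\cap B|$ used in line~7. For $|A\cap B|$ I would use that $A\cap B=\bigcup_{i=2}^m(S_1\cap S_i)$ and apply inclusion--exclusion, $|A\cap B|=\sum_{\emptyset\neq I\subseteq[2,m]}(-1)^{|I|+1}|T_{1,I}|$, each $|T_{1,I}|$ being known from preprocessing. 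For $k$, I would justify that Algorithm~\ref{alg:Computek} is correct: inclusion--exclusion on the sets $\{a_1,\dots,a_j\}\cap S_i$ ($i\in[2,m]$) reduces $k$ to $\sum_{\emptyset\neq I\subseteq[2,m]}(-1)^{|I|+1}n_{1,I}$ with $n_{1,I}=|\{a_1,\dots,a_j\}\cap T_{1,I}|$ (here using $a_1,\dots,a_j\in S_1$, so further intersecting with $S_1$ changes nothing); and since the enumeration order of $T_{1,I}$ is compatible with that of $S_1$, the members of $T_{1,I}$ occurring among $a_1,\dots,a_j$ are exactly those not succeeding $a_j$ in the $S_1$-order, so their number equals $T_{1,I}.\Rank\bigl(T_{1,I}.\Largest(a_j)\bigr)$, interpreted as $0$ when no element of $T_{1,I}$ precedes $a_j$ -- which is exactly what Algorithm~\ref{alg:Computek} computes. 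Correctness of the whole scheme then follows from the correctness of the iterated union trick (recalled above as folklore) together with Lemma~\ref{lemma:AccessUnionOfTwo}.

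For the time bound I would unroll the recursion into $m$ levels, level $\ell$ treating $S_\ell$ against $S_{\ell+1}\cup\cdots\cup S_m$. Each level does one call to $S_\ell.\Access$ (total $O(m\,t_{\textit{acc}})$), one membership test in a union of at most $m-\ell$ sets (total $\sum_{\ell=1}^{m-1}O((m-\ell)t_{\textit{test}})=O(m^2 t_{\textit{test}})$), and one run of Algorithm~\ref{alg:Computek} plus one inclusion--exclusion evaluation of a cardinality, each ranging over the $2^{m-\ell}-1$ nonempty subsets of $[\ell{+}1,m]$ at cost $O(t_{\textit{inv-acc}}+t_{\textit{lar}})$ resp.\ $O(1)$ per subset; since $\sum_{\ell=1}^{m-1}2^{m-\ell}=O(2^m)$, these contribute $O(2^m t_{\textit{inv-acc}}+2^m t_{\textit{lar}})$. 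Summing gives the stated bound $O(m\,t_{\textit{acc}}+m^2 t_{\textit{test}}+2^m t_{\textit{inv-acc}}+2^m t_{\textit{lar}})$.

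The step I expect to need the most care is the correctness argument for Algorithm~\ref{alg:Computek}: one must make explicit that the order in which the iterated union trick produces $B=S_2\cup\cdots\cup S_m$ plays no role in the inclusion--exclusion bookkeeping (only the $S_\ell$-orders and the compatible $T_{\ell,I}$-orders are used), and one must handle the boundary cases cleanly -- an index $j$ out of bounds for $A$ but in range for $B$, a value $a_j$ lying in none of the $T_{1,I}$, and the correct reading of $\Largest$ and of ranks in those degenerate situations. Once these points are pinned down, the remainder is the routine induction and arithmetic above.
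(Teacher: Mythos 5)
Your proposal is correct and follows essentially the same route as the paper: the paper's proof is exactly the induction/recursion over the iterated Durand--Strozecki union trick with $A=S_\ell$, $B=S_{\ell+1}\cup\cdots\cup S_m$, using Lemma~\ref{lemma:AccessUnionOfTwo} as the base case, inclusion--exclusion over the sets $T_{\ell,I}$ for both $|A\cap B|$ and the quantity $k$ via Algorithm~\ref{alg:Computek}, and the per-level cost recursion $f(m)=t_{\textit{acc}}+(m{-}1)t_{\textit{test}}+(2^{m-1}{-}1)(t_{\textit{inv-acc}}+t_{\textit{lar}})+f(m{-}1)$, which your level-by-level summation reproduces. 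Your additional care about the compatibility-based correctness of Compute-$k$ and the boundary cases simply spells out details the paper delegates to its preceding sketch.
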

\begin{proof}
The proof follows the sketch described above. By applying the time bound obtained from Lemma~\ref{lemma:AccessUnionOfTwo} we obtain the following recursion for describing the time 
$f(m)$ used for providing access to the $j$-th element of the union of $m$ sets:
\[
  f(m) \ = \ 
  t_{\textit{acc}} + (m{-}1){\cdot}t_{\textit{test}}+ (2^{m-1}{-}1){\cdot}(t_{\textit{inv-acc}}+t_{\textit{lar}}) + f(m{-}1)
\]
Solving this recursion provides the claimed time bound.
\end{proof}

\begin{reptheorem}{\ref{thm:RAforUCQs}}
  \thmRAforUCQs
  \end{reptheorem}
\begin{proof}
Let $Q=Q_1\cup\cdots\cup Q_m$ be the given mc-UCQ, and 
let $\Algo{A}_I$, for all
$\emptyset\neq I\subseteq[1,m]$,
be $\RA$-algorithms which witness
that $Q$ is an mc-UCQ.

Upon input of a database $D$ we perform the linear-time preprocessing 
of all the algorithms $\Algo{A}_I$
input $D$.
Now consider an arbitrary $\ell\in[1,m]$ and an $I\subseteq[\ell{+}1,m]$.
For the sets $S_\ell\deff Q_{\set{\ell}}(D)$ and $T_{\ell,I}\deff
Q_{\set{\ell}\cup I}(D)$, we then immediately know that all the assumptions of
Lemma~\ref{lemma:AccessUnionOfm} are satisfied, except for the last
one (i.e., the last bullet point in
item~\eqref{item:lemma:AccessUnionOfm:item3} of
Lemma~\ref{lemma:AccessUnionOfm}).
Furthermore, we know that each of the time bounds $t_{\textit{test}}$,
$t_{\textit{acc}}$, and $t_{\textit{inv-acc}}$ are at most logarithmic
in the size $|D|$ of $D$. 
To finish the proof, it therefore suffices to show the following for
each $\ell\in [1,m]$ and each $\emptyset\neq I\subseteq [\ell{+}1,m]$:
\begin{itemize}
\item[($*$)]
On input of an arbitrary $a\in S_\ell$, within time $O(\log^2 |D|)$
we can output the
particular $c\in T_{\ell,I}$ such that $c$ is the largest element of
$T_{\ell,I}$ that is less than or equal to $a$ according to the
enumeration order of $S_\ell$.
\end{itemize}
We can achieve this by doing a binary search on indexes w.r.t.~the enumeration
orders on $S_\ell$ and $T_{\ell,I}$ by using the routines
$T_{\ell,I}$.\Access\  and $S_\ell$.\Rank.
More precisely, we start by letting $j=S_\ell$.\Rank($a$),
$c=T_{\ell,I}$.\Access(1), and $j_c=S_{\ell}.$\Rank($c$). 
If $j_c=j$ we
can safely return $c$. If $j_c>j$, we return an error message
indicating that $T_{\ell,I}$ does not contain any element less than or
equal to $a$. If $j_c<j$, we let $k_c=1$, $k_d=|T_{\ell,I}|$,
$d=T_{\ell,I}.\Access(k_d)$, and $j_d=S_\ell.\Rank(d)$.
If $j_d\leq j$ we can safely return $d$. Otherwise,
we do a binary search based on the invariant that $c,d$ are
elements of $T_{\ell,I}$ with $c<a<d$ (where $<$ refers to
the enumeration order of $S_\ell$), $j_c,j_d$ are their indexes in
$S_\ell$, and $k_c,k_d$ are their indexes in $T_{\ell,I}$:
we let $k'=\lfloor (k_c+k_d)/2 \rfloor$, and in case that $k'=k$ we
can safely terminate with output $c$. Otherwise, we let
$c'=T_{\ell,I}$.\Access($k'$) and $j'=S_\ell$.\Rank($c'$).
If $j'=j$ we can safely terminate and return $c'$. 
If $j'<j$ we proceed letting
$(c,d,j_c,j_d,k_c,k_d)=(c',d,j',j_d,k',k_d)$.
If $j'>j$ we proceed letting
$(c,d,j_c,j_d,k_c,k_d)=(c,c',j_c,j',k_c,k')$. 
The number of iterations is logarithmic in $|T_{\ell,I}|$, and each
iteration invokes a constant number of calls to $T_{\ell,I}$.\Access\ and
$S_{\ell}$.\Rank.
Since each such call is answered in time $O(\log|D|)$, we have
achieved $(*)$.
Theorem~\ref{thm:RAforUCQs} now follows from Lemma~\ref{lemma:AccessUnionOfm}.
\end{proof}

\section{Additions to Section~\ref{sec:experiments}} \label{sec:appendix-experiments}

\subsection{Queries} \label{sec:appendix_queries}
The following six queries are those used to compare \ouralgCQ{} to \zhaoalg{EW}.

\newcommand{\partitle}[1]{\vskip1em\par\noindent\underline{\textbf{#1}}\,}

\partitle{Query $Q_0$:} a chain join between the tables $\mathit{PARTSUPP}$, $\mathit{SUPPLIER}$, $\mathit{NATION}$, and $\mathit{REGION}$. It returns the suppliers that sell products (parts) along with their nation and region.
\begin{Verbatim}[frame=single]
SELECT DISTINCT r_regionkey, n_nationkey, 
                s_suppkey, ps_partkey 
  FROM region, nation, supplier, partsupp 
 WHERE r_regionkey = n_regionkey AND 
       n_nationkey = s_nationkey AND 
       s_suppkey = ps_suppkey
\end{Verbatim}

\partitle{Query $Q_2$:} similar to $Q_0$, except for the addition of the $\mathit{PART}$ table with \texttt{ps\_partkey = p\_partkey}.
\begin{Verbatim}[frame=single]
SELECT DISTINCT r_regionkey, n_nationkey,
                s_suppkey, ps_partkey
  FROM region, nation, supplier, partsupp, part
 WHERE r_regionkey = n_regionkey AND
       n_nationkey = s_nationkey AND
       s_suppkey = ps_suppkey AND
       ps_partkey = p_partkey
\end{Verbatim}

\partitle{Query $Q_3$:} the join of three tables: $\mathit{CUSTOMER}$, $\mathit{LINEITEMS}$, and $\mathit{ORDERS}$. 
We added the three attributes \texttt{l\_partkey}, \texttt{l\_suppkey}, and \texttt{l\_linenumber} to the output to ensure equivalence between set semantics and bag semantics.
\begin{Verbatim}[frame=single]
SELECT DISTINCT o_orderkey, c_custkey, l_partkey,
                l_suppkey, l_linenumber 
  FROM customer, orders, lineitems 
 WHERE c_custkey = o_custkey AND
       o_orderkey = l_orderkey;
\end{Verbatim}

\partitle{Query $Q_7$:} similar to $Q_3$, except it also joins $\mathit{SUPPLIER}$ and $\mathit{NATION}$ for the customer and the supplier.
\begin{Verbatim}[frame=single]
SELECT DISTINCT o_orderkey, c_custkey,
                n1.n_nationkey, s_suppkey, 
                l_partkey, l_linenumber, 
                n2.n_nationkey 
  FROM supplier, lineitem, orders, customer,
       nation n1, nation n2 
 WHERE s_suppkey = l_suppkey AND
       o_orderkey = l_orderkey AND
       c_custkey = o_custkey AND
       s_nationkey = n1.n_nationkey AND
       c_nationkey = n2.n_nationkey;
\end{Verbatim}
  
\partitle{Query $Q_9$:} the join of the tables $\mathit{NATION}$, $\mathit{SUPPLIER}$, $\mathit{LINEITEMS}$, $\mathit{PARTSUPP}$, $\mathit{ORDERS}$, and $\mathit{PART}$.
As in $Q_3$, we added the attributes \texttt{l\_partkey}, \texttt{l\_suppkey}, and \texttt{l\_linenumber} to the output to ensure an equivalence between bag and set semantics.
\begin{Verbatim}[frame=single]
SELECT DISTINCT n_nationkey, s_suppkey,
                o_orderkey, l_linenumber, p_partkey 
  FROM nation, supplier, lineitem,
       partsupp, orders, part 
 WHERE n_nationkey = s_nationkey AND
       s_suppkey = l_suppkey AND
       s_suppkey = ps_suppkey AND
       o_orderkey = l_orderkey AND
       l_partkey = p_partkey AND
       p_partkey = ps_partkey;
\end{Verbatim}

\partitle{Query $Q_{10}$:} similar to $Q_3$, except it also joins $\mathit{NATION}$.
\begin{Verbatim}[frame=single]
SELECT DISTINCT o_orderkey, c_custkey, l_partkey,
                l_suppkey, l_linenumber, n_nationkey  
  FROM lineitem, orders, customer, nation 
 WHERE o_orderkey = l_orderkey AND
       c_custkey = o_custkey AND
       c_nationkey = n_nationkey;
\end{Verbatim}

The UCQ experiments use $Q_7^S \cup Q_7^C$, $Q_2^N \cup Q_2^P \cup Q_2^S$, and $Q_A \cup Q_E$, with the following CQs:

\partitle{Query $Q_7^S$:} similar to $Q_7$, except for the addition of the constraint \texttt{n1.n\_name = "UNITED STATES"}. Meaning, the output should only include orders where the supplier is American.

\partitle{Query $Q_7^C$:} similar to $Q_7$, except we replace \texttt{n1.n\_name = "UNITED STATES"} with \texttt{n2.n\_name = "UNITED STATES"}. Meaning, demanding the customer is American (instead of the supplier being American).

\partitle{Query $Q_2^N$:} similar to $Q_2$, except for the addition of the constraint \texttt{n\_nationkey = 0}. Meaning, the supplier must be from the first country in the database.

\partitle{Query $Q_2^P$:} similar to $Q_2$, except for the addition of the constraint \texttt{n\_partkey mod 2 = 0}. Meaning, the part identifier must be even.

\partitle{Query $Q_2^S$:} similar to $Q_2$, except for the addition of the constraint \texttt{n\_suppkey mod 2 = 0}. Meaning, the supplier identifier must be even.

\partitle{Query $Q_A$:} the query deals with orders whose suppliers are from the United States of America. That is done by applying a condition to a full chain join of the tables $\mathit{ORDER}$, $\mathit{LINEITEM}$, $\mathit{SUPPLIER}$, $\mathit{NATION}$, and $\mathit{REGION}$.
\begin{Verbatim}[frame=single]
SELECT DISTINCT o_orderkey, s_suppkey,
                n_nationkey, r_regionkey, r_name 
  FROM orders, supplier, nation, region
 WHERE o_orderkey = l_orderkey AND
       l_suppkey = s_suppkey AND
       s_nationkey = n_nationkey AND
       n_regionkey = r_regionkey AND
       n_nationkey = 24
\end{Verbatim}

\partitle{Query $Q_E$:} similar to $Q_A$, except for the demand that the supplier be from the United Kingdom. Meaning, it has the same SQL expression as $Q_A$, but the constant 24 (United States) is replaced by 23 (United Kingdom).

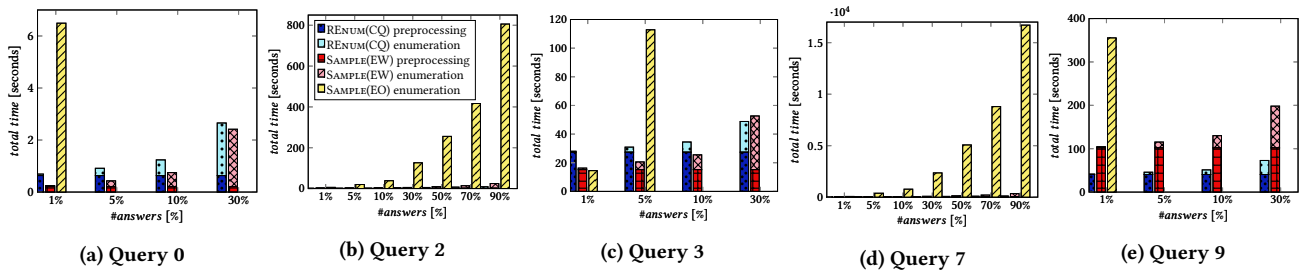
\begin{figure*}[htb]
\centering
    \begin{subfigure}{0.19\textwidth}
          \centering
          \resizebox{\linewidth}{!}{%
\begin{tikzpicture}[
  every axis/.style={ 
    ybar stacked,
    ymin = 0,ymax = 7,
    symbolic x coords={1\%, 5\%, 10\%, 30\%},
    xtick={1\%, 5\%, 10\%, 30\%},
    bar width=8pt,
    ylabel near ticks,
    xlabel near ticks,
    xlabel={$\#answers$ [\%]},
  ylabel={$total\;time$ [seconds}],
  label style={font=\LARGE},
  tick label style={font=\LARGE}
  }
]

\begin{axis}[bar shift=-15pt,hide axis]
\addplot+[black, fill=dark_blue, postaction={pattern=crosshatch dots}] coordinates
{(1\%,0.63365) (5\%,0.63365) (10\%,0.63365) (30\%,0.63365)}; %
\addplot+[black, fill=light_blue, postaction={pattern=dots}] coordinates
{(1\%,0.05733) (5\%,0.28) (10\%,0.603) (30\%,2.021667)}; %
\end{axis}

\begin{axis}[bar shift=-5pt]
\addplot+[black, fill=dark_red, postaction={pattern=grid}] coordinates
{(1\%,0.20233) (5\%,0.20233) (10\%,0.20233) (30\%,0.20233)}; %
\addplot+[black, fill=light_red, postaction={pattern=crosshatch}] coordinates
{(1\%,0.04733) (5\%,0.22833) (10\%,0.54533) (30\%, 2.21267)}; %
\end{axis}

\begin{axis}[bar shift=5pt]
\addplot+[black, fill=light_yellow, postaction={pattern=north east lines}] coordinates
{(1\%,6.49275) (5\%,0) (10\%,0) (30\%,0)}; %
\end{axis}

\end{tikzpicture}
}
          \caption{Query 0}
          \label{fig:q0_EO}
    \end{subfigure}
    \begin{subfigure}{0.19\textwidth}
          \centering
          \resizebox{\linewidth}{!}{%
\begin{tikzpicture}[
  every axis/.style={ 
    ybar stacked,
    ymin = 0,ymax = 850,
    symbolic x coords={1\%, 5\%, 10\%, 30\%, 50\%, 70\%, 90\%},
    xtick={1\%, 5\%, 10\%, 30\%, 50\%, 70\%, 90\%},
    bar width=8pt,
    ylabel near ticks,
    xlabel near ticks,
    xlabel={$\#answers$ [\%]},
  ylabel={$total\;time$ [seconds}],
  label style={font=\LARGE},
  tick label style={font=\LARGE}
  }
]

\begin{axis}[bar shift=-15pt,hide axis, legend pos=north west, legend style={font=\LARGE}, legend cell align={left}]
\addplot+[black, fill=dark_blue, postaction={pattern=crosshatch dots}] coordinates
{(1\%,0.8944565) (5\%,0.8944565) (10\%,0.8944565) (30\%,0.8944565) (50\%,0.8944565) (70\%,0.8944565) (90\%,0.8944565)};
\addlegendentry{\ouralgCQ{} preprocessing}

\addplot+[black, fill=light_blue, postaction={pattern=dots}] coordinates
{(1\%,0.076) (5\%,0.38733) (10\%,0.83167) (30\%,2.72433) (50\%,4.52567) (70\%,6.258) (90\%,8.05566)};
\addlegendentry{\ouralgCQ{} enumeration}

\addplot+[black, fill=dark_red, postaction={pattern=horizontal lines}] coordinates {(1\%,0)}; %
\addlegendentry{\zhaoalg{EW} preprocessing}

\addplot+[black, fill=light_red, postaction={pattern=north east lines}] coordinates {(1\%,0)}; %
\addlegendentry{\zhaoalg{EW} enumeration}

\addplot+[black, fill=light_yellow, postaction={pattern=grid}] coordinates {(1\%,0)}; %
\addlegendentry{\zhaoalg{EO} enumeration}
\end{axis}

\begin{axis}[bar shift=-5pt]
\addplot+[black, fill=dark_red, postaction={pattern=grid}] coordinates
{(1\%,1.75275) (5\%,1.75275) (10\%,1.75275) (30\%,1.75275) (50\%,1.75275) (70\%,1.75275) (90\%,1.75275)};
\addplot+[black, fill=light_red, postaction={pattern=crosshatch}] coordinates
{(1\%,0.092) (5\%,0.472) (10\%,1.069) (30\%,3.7493) (50\%,7.58233) (70\%,12.20233) (90\%,22.93566)};
\end{axis}

\begin{axis}[bar shift=5pt]
\addplot+[black, fill=light_yellow, postaction={pattern=north east lines}] coordinates
{(1\%,5.915) (5\%,19.949) (10\%,38.590) (30\%,126.349) (50\%,255.645) (70\%,416.728) (90\%,805.962)};
\end{axis}

\end{tikzpicture}
}
          \caption{Query 2}
          \label{fig:q2_EO}
    \end{subfigure}
    \begin{subfigure}{0.19\textwidth}
          \centering
          \resizebox{\linewidth}{!}{
\begin{tikzpicture}[
  every axis/.style={ 
    ybar stacked,
    ymin = 0,ymax = 120,
    symbolic x coords={1\%,5\%,10\%,30\%},
    xtick={1\%,5\%,10\%,30\%},
  bar width=8pt,
  ylabel near ticks,
  xlabel near ticks,
  xlabel={$\#answers$ [\%]},
  ylabel={$total\;time$ [seconds}],
  label style={font=\LARGE},
  tick label style={font=\LARGE}
  },
]

\begin{axis}[bar shift=-15pt,hide axis]
\addplot+[black, fill=dark_blue, postaction={pattern=crosshatch dots}] coordinates
{(1\%,27.35423) (5\%,27.35423) (10\%,27.35423) (30\%,27.35423)}; %
\addplot+[black, fill=light_blue, postaction={pattern=dots}] coordinates
{(1\%,0.675) (5\%,3.542) (10\%,7.134) (30\%,21.45467)}; %
\end{axis}

\begin{axis}[bar shift=-5pt]
\addplot+[black, fill=dark_red, postaction={pattern=grid}] coordinates
{(1\%,15.18867) (5\%,15.18867) (10\%,15.18867) (30\%,15.18867)}; %
\addplot+[black, fill=light_red, postaction={pattern=crosshatch}] coordinates
{(1\%,1.08033) (5\%,5.389) (10\%,10.34467) (30\%,37.47967)}; %
\end{axis}

\begin{axis}[bar shift=5pt]
\addplot+[black, fill=light_yellow, postaction={pattern=north east lines}] coordinates
{(1\%,14.493) (5\%,112.829) (10\%,0) (30\%,0)}; 
\end{axis}

\end{tikzpicture} 
}
          \caption{Query 3}
          \label{fig:q3_EO}
    \end{subfigure}
    \begin{subfigure}{0.19\textwidth}
          \centering
          \resizebox{\linewidth}{!}{%
\begin{tikzpicture}[
  every axis/.style={ 
    ybar stacked,
    ymin = 0,ymax = 17000,
    symbolic x coords={1\%,5\%,10\%,30\%,50\%,70\%,90\%}, 
    xtick={1\%,5\%,10\%,30\%,50\%,70\%,90\%},
  bar width=8pt,
  ylabel near ticks,
  xlabel near ticks,
  xlabel={$\#answers$ [\%]},
  ylabel={$total\;time$ [seconds}],
  label style={font=\LARGE},
  tick label style={font=\LARGE}
  },
]

\begin{axis}[bar shift=-15pt,hide axis]
\addplot+[black, fill=dark_blue, postaction={pattern=crosshatch dots}] coordinates
{(1\%,32.285) (5\%,32.285) (10\%,32.285) (30\%,32.285) (50\%,32.285) (70\%,32.285) (90\%,32.285)}; %
\addplot+[black, fill=light_blue, postaction={pattern=dots}] coordinates
{(1\%,0.88867) (5\%,4.805) (10\%,9.7533) (30\%,29.3493) (50\%,47.556) (70\%,67.004667) (90\%,85.63367)}; %
\end{axis}

\begin{axis}[bar shift=-5pt]
\addplot+[black, fill=dark_red, postaction={pattern=grid}] coordinates
{(1\%,38.23233) (5\%,38.23233) (10\%,38.23233) (30\%,38.23233) (50\%,38.23233) (70\%,38.23233) (90\%,38.23233)}; %
\addplot+[black, fill=light_red, postaction={pattern=crosshatch}] coordinates
{(1\%,1.48867) (5\%,7.38067) (10\%,14.332) (30\%,50.40167) (50\%,98.1633) (70\%,170.45567) (90\%,312.18067)}; %
\end{axis}

\begin{axis}[bar shift=5pt]
\addplot+[black, fill=light_yellow, postaction={pattern=north east lines}] coordinates
{(1\%,75.10833) (5\%,386.53567) (10\%,780.1755) (30\%,2360.315) (50\%,5074.519) (70\%,8795.437) (90\%,16704.456)}; %
\end{axis}

\end{tikzpicture} %
}
          \caption{Query 7}
          \label{fig:q7_EO}
    \end{subfigure}
    \begin{subfigure}{0.19\textwidth}
          \centering
          \resizebox{\linewidth}{!}{%
\begin{tikzpicture}[
  every axis/.style={ 
    ybar stacked,
    ymin = 0,ymax = 400,
    symbolic x coords={1\%,5\%,10\%,30\%}, 
    xtick={1\%,5\%,10\%,30\%},
  bar width=8pt,
  ylabel near ticks,
  xlabel near ticks,
  xlabel={$\#answers$ [\%]},
  ylabel={$total\;time$ [seconds}],
  label style={font=\LARGE},
  tick label style={font=\LARGE}
  },
]

\begin{axis}[bar shift=-15pt,hide axis]
\addplot+[black, fill=dark_blue, postaction={pattern=crosshatch dots}] coordinates
{(1\%,40.5219) (5\%,40.5219) (10\%,40.5219) (30\%,40.5219)}; %
\addplot+[black, fill=light_blue, postaction={pattern=dots}] coordinates
{(1\%,1.010667) (5\%,5.28833) (10\%,10.68567) (30\%,32.36533)}; %
\end{axis}

\begin{axis}[bar shift=-5pt]
\addplot+[black, fill=dark_red, postaction={pattern=grid}] coordinates
{(1\%,101.901) (5\%,101.901) (10\%,101.901) (30\%,101.901)}; %
\addplot+[black, fill=light_red, postaction={pattern=crosshatch}] coordinates
{(1\%,2.72066) (5\%,13.46667) (10\%,28.16267) (30\%,96.23133)}; %
\end{axis}

\begin{axis}[bar shift=5pt]
\addplot+[black, fill=light_yellow, postaction={pattern=north east lines}] coordinates
{(1\%,355.480) (5\%,0) (10\%,0) (30\%,0)}; %
\end{axis}

\end{tikzpicture} %
}
          \caption{Query 9}
          \label{fig:q9_EO}
    \end{subfigure}
     
    \caption{Total enumeration time of CQs when requesting different percentages of answers. In each bar, the bottom (darker)
part refers to the preprocessing phase and the top (lighter) part to the enumeration phase.}
    \label{fig:cq_total_time_EO}
  \end{figure*}

  {\begin{figure*}[b]
    \centering
    \begin{tabular}{|c|c|c|c|c|}\hline
     \textbf{algorithm} & \textbf{query} & \textbf{mean ($\mu$)} & \textbf{SD ($\sigma$)} & \textbf{outliers [\%]} \\
     \hline \hline
     \ouralgCQ{} & $Q_0$ & 3.964625 & 26.77761 & 2.3527 \\ \hline
     \zhaoalg{EW} & $Q_0$ & 3.965105 &  155.8571 & 6.6181 \\ \hline
     \ouralgCQ{} & $Q_2$ & 4.35985 & 29.14075 & 3.38665 \\ \hline
     \zhaoalg{EW} & $Q_2$ & 5.455966 & 136.3711 & 6.1348 \\ \hline
     \ouralgCQ{} & $Q_3$ & 4.443927 & 198.3520 & 3.07209 \\ \hline
     \zhaoalg{EW} & $Q_3$ & 6.599028 & 519.8907 & 6.17242 \\ \hline
     \ouralgCQ{} & $Q_7$ & 5.342141 & 191.1972 & 2.91181 \\ \hline
     \zhaoalg{EW} & $Q_7$ & 8.471535 & 534.6212 & 7.12741 \\ \hline
     \ouralgCQ{} & $Q_9$ & 5.664519 & 200.8911 & 2.8047 \\ \hline
     \zhaoalg{EW} & $Q_9$ & 14.90882 & 537.2356 & 8.26721 \\ \hline
     \ouralgCQ{} & $Q_{10}$ & 4.652109 & 195.7905 & 2.89414 \\ \hline
     \zhaoalg{EW} & $Q_{10}$ & 6.866843 & 519.3847 & 6.330277 \\ \hline
    \end{tabular}
    \quad\quad
    \begin{tabular}{|c|c|c|c|c|}\hline
     \textbf{algorithm} & \textbf{query} & \textbf{mean ($\mu$)} & \textbf{SD ($\sigma$)} & \textbf{outliers [\%]} \\
     \hline \hline
     \ouralgCQ{} & $Q_0$ & 3.891264 & 18.9752 & 2.685475 \\ \hline
     \zhaoalg{EW} & $Q_0$ & 20.28385 &  2848.952 & 12.0188 \\ \hline
     \ouralgCQ{} & $Q_2$ & 4.319361 & 20.74831 & 3.662525 \\ \hline
     \zhaoalg{EW} & $Q_2$ & 38.02335 & 4815.667 & 12.412425 \\ \hline
     \ouralgCQ{} & $Q_3$ & 4.347431 & 140.2731 & 3.65655 \\ \hline
     \zhaoalg{EW} & $Q_3$ & 49.84528 & 20863.52 & 12.3539\\ \hline
     \ouralgCQ{} & $Q_7$ & 5.254392 & 135.2184 & 3.47616\\ \hline
     \zhaoalg{EW} & $Q_7$ &  72.04367 & 30804.44 & 12.50615 \\ \hline
     \ouralgCQ{} & $Q_9$ & 5.57028 & 142.0814 & 3.2938 \\ \hline
     \zhaoalg{EW} & $Q_9$ & 141.239102 & 56781.80 & 12.75 \\ \hline
     \ouralgCQ{} & $Q_{10}$ & 4.564015 & 138.4678 & 3.43488 \\ \hline
     \zhaoalg{EW} & $Q_{10}$ & 49.30842 & 11648.17 & 12.4268 \\ \hline
    \end{tabular}
    \caption{The mean and standard deviation of the delay during
      \e{(a)} an
      enumeration of  50\% of answers using each algorithm (left), \e{and (b)}
                a full enumeration using each algorithm (right).
               }
    \label{fig:mean_and_dev_tables}
\end{figure*}}

\subsection{Additional methods by Zhao et al.}\label{sec:other_methods}

As mentioned in \Cref{sec:algs}, Zhao et al.~\cite{zhao2018random} discuss 4 different ways of initializing their sampling algorithm, denoted as \e{RS}, \e{EO}, \e{OE}, and \e{EW}. In the implementation of Zhao et al.~, \e{EW} and \e{EO} were implemented for every query. In addition, there is an implementation of RS and OE for $Q_3$. Here we review \e{EO}, \e{OE}, and \e{RS} (in sections \ref{sec:EO}, \ref{sec:OE}, and \ref{sec:RS} respectively) in order to explain our comparison to \zhaoalg{EW} alone.

\subsubsection{\e{EO}} \label{sec:EO}
As \zhaoalg{EO} may reject, it possesses a much longer sampling time (as evident by our experiments). 
\Cref{fig:cq_total_time_EO} repeats the experiment made in section~\ref{sec:cq_total_time} (depicted in \Cref{fig:cq_total_time}) with the addition of \zhaoalg{EO}.
We omit the \zhaoalg{EO} preprocessing, as Zhao et al.~\cite{zhao2018random} did in their work, and as it underperforms compared to \zhaoalg{EW} regardless.
When running \zhaoalg{EO}, we used a timeout and halted when it took longer than 100 times the sampling time of its \e{EW} counterpart.
When \zhaoalg{EO} timed-out, the corresponding bar in \Cref{fig:cq_total_time_EO} is omitted. In addition, we omit $Q_10$, as \zhaoalg{EO} did not produce 1\% of the answers within the time limit.
\Cref{fig:cq_total_time_EO} shows that with the exception of $Q_3$ at 1\%, \e{EO} is significantly slower than both \ouralgCQ{} and \zhaoalg{EW}.

\subsubsection{\e{OE}} \label{sec:OE}
 Out of our six queries, \zhaoalg{OE} was implemented on $Q_3$ alone. \Cref{fig:cq_total_time_OE} shows the results of section \ref{sec:cq_total_time} with \zhaoalg{OE} added. In our experiments with $Q_3$, \zhaoalg{EW} has always out-preformed \zhaoalg{OE}.

\begin{figure}[t]
\centering
\resizebox{0.6\linewidth}{!}{%
\begin{tikzpicture}[
  every axis/.style={ 
    ybar stacked,
    ymin = 0,ymax = 350,
    symbolic x coords={1\%,5\%,10\%,30\%,50\%,70\%,90\%},
    xtick={1\%,5\%,10\%,30\%,50\%,70\%,90\%},
  bar width=5pt,
  ylabel near ticks,
  xlabel near ticks,
  xlabel={$\#answers$ [\%]},
  ylabel={$total\;time$ [seconds}],
  label style={font=\LARGE},
  tick label style={font=\LARGE}
  },
]

\begin{axis}[bar shift=-10pt,hide axis,legend pos=north west,legend style={font=\Large},legend cell align={left}]
\addplot+[black, fill=dark_blue, postaction={pattern=crosshatch dots}] coordinates
{(1\%,27.35423) (5\%,27.35423) (10\%,27.35423) (30\%,27.35423) (50\%,27.35423) (70\%,27.35423) (90\%,27.35423)}; %
\addlegendentry{\ouralgCQ{} preprocessing}

\addplot+[black, fill=light_blue, postaction={pattern=dots}] coordinates
{(1\%,0.675) (5\%,3.542) (10\%,7.134) (30\%,21.45467) (50\%,34.089) (70\%,48.14733) (90\%,61.551)}; %
\addlegendentry{\ouralgCQ{} enumeration}

\addplot+[black, fill=dark_red, postaction={pattern=horizontal lines}] coordinates
{(1\%,0)}; %
\addlegendentry{\zhaoalg{EW} preprocessing}

\addplot+[black, fill=light_red, postaction={pattern=north east lines}] coordinates
{(1\%,0)}; %
\addlegendentry{\zhaoalg{EW} enumeration}

\addplot+[black, fill=dark_green, postaction={pattern=vertical lines}] coordinates
{(1\%,0)}; %
\addlegendentry{\zhaoalg{OE} preprocessing}

\addplot+[black, fill=light_green, postaction={pattern=north west lines}] coordinates
{(1\%,0)}; %
\addlegendentry{\zhaoalg{OE} enumeration}
\end{axis}

\begin{axis}[bar shift=-5pt]
\addplot+[black, fill=dark_red, postaction={pattern=horizontal lines}] coordinates
{(1\%,15.18867) (5\%,15.18867) (10\%,15.18867) (30\%,15.18867) (50\%,15.18867) (70\%,15.18867) (90\%,15.18867)}; %
\addplot+[black, fill=light_red, postaction={pattern=north east lines}] coordinates
{(1\%,1.08033) (5\%,5.389) (10\%,10.34467) (30\%,37.47967) (50\%,70.85967) (70\%,122.64067) (90\%,221.76033)}; %
\end{axis}

\begin{axis}[bar shift=0pt]
\addplot+[black, fill=dark_green, postaction={pattern=vertical lines}] coordinates
{(1\%,8.3000766666667) (5\%,8.1846266666667) (10\%,8.1564066666667) (30\%,12.923033333333) (50\%,13.096233333333) (70\%,13.1976) (90\%,13.249233333333)}; %

\addplot+[black, fill=light_green, postaction={pattern=north west lines}] coordinates
{(1\%,7.9473766666667) (5\%,19.453) (10\%,27.409733333333) (30\%,63.0288) (50\%,107.13733333333) (70\%,179.19466666667) (90\%,327.37933333333)}; %
\end{axis}

\end{tikzpicture} %
}
\caption{Total enumeration time of $Q_3$ when requesting different percentages of answers. In each bar, the bottom (darker)
part refers to the preprocessing phase and the top (lighter) part to the enumeration phase.}
\label{fig:cq_total_time_OE}
\end{figure}
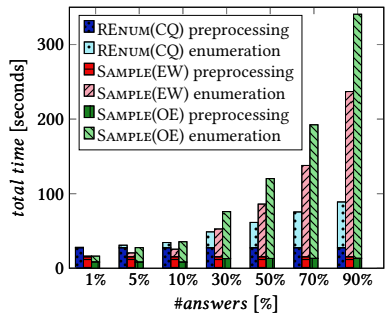
 
\subsubsection{\e{RS}} \label{sec:RS}
\zhaoalg{RS} was also implemented only on $Q_3$.
\zhaoalg{RS} was unable to produce a sample of 1\% of the answers to $Q_3$ in less than an hour. It took \zhaoalg{RS} about 6.8 seconds to gather a sample of 100000 distinct answers, which is roughly 0.33\% of all answers. Therefore, \zhaoalg{RS} would be slower than \zhaoalg{EW} even if it were to proceed and sample 1\% with no deterioration due to repeating samples.

\subsection{More on the CQ delay experiment} \label{sec:delay_tables}

This section describes further information that was left out of
section~\ref{sec:cq_delay} to save space. The following tables show
the mean, standard deviation (SD) and number of delay samples that
counted as outliers during the enumeration of 50\% of the answers (on
the left of \Cref{fig:mean_and_dev_tables}) and a full enumeration (on
the right of \Cref{fig:mean_and_dev_tables}).  We see that \ouralgCQ{}
always possesses a smaller mean than \zhaoalg{EW}, sometimes by an
order of magnitude. We also see that \ouralgCQ{} always has
considerably lower standard deviation than that of \zhaoalg{EW}. That
holds even when the two are close in median as is the case with
$Q_0$. Finally, the number of outliers in \ouralgCQ{} boxplots is also
consistently smaller. The smaller number of outliers and lower
standard deviation indicates the predictability of the delay, as it
does not grow rapidly.

\end{document}